\newenvironment{round}
  {\begin{tikzpicture}[baseline={([yshift={-\ht\strutbox}]current bounding box.west)},
                    outer sep=0pt,inner sep=0pt]}
  {\draw [rounded corners=.5em] (table.north west) rectangle (table.south east);
   \end{tikzpicture}}
\newtheorem{theorem}{Theorem}
\newtheorem{proposition}[theorem]{Proposition}
\newtheorem{corollary}[theorem]{Corollary}
\newtheorem{lemma}[theorem]{Lemma}
\newcommand{\proofheader}{}
\newcommand{\sepskip}{\smallskip}
\newcommand{\james}[1]{}
\newcommand{\sam}[1]{}
\newcommand{\xs}{\mathit{xs}}
\newcommand{\ys}{\mathit{ys}}
\newcommand{\delim}{\text{\textvisiblespace}}
\newcommand{\consl}[1]{{#1}\_}
\newcommand{\isPoor}{\mathit{isPoor}}
\newcommand{\isRich}{\mathit{isRich}}
\newcommand{\outliers}{\mathit{outliers}}
\newcommand{\clients}{\mathit{clients}}
\newcommand{\getTasks}{\mathit{getTasks}}
\newcommand{\tasksOfEmp}{\mathit{tasksOfEmp}}
\newcommand{\employeesOfDept}{\mathit{employeesOfDept}}
\newcommand{\contactsOfDept}{\mathit{contactsOfDept}}
\newcommand{\employeesByTask}{\mathit{employeesByTask}}
\newcommand{\set}[1]{\{{#1}\}}
\newcommand{\xtoy}[2]{_{{#1}=1}^{#2}}
\newcommand{\iton}{\xtoy{i}{n}}
\newcommand{\jtom}{\xtoy{j}{m}}
\newcommand{\sem}[1]{\llbracket{#1}\rrbracket}
\newcommand{\tsem}[1]{\sem{#1}}
\newcommand{\nsem}[1]{\mathcal{N}\sem{#1}}
\newcommand{\ssemname}{\mathcal{S}}
\newcommand{\ssem}[1]{\ssemname\sem{#1}}
\newcommand{\ssemcomp}[1]{\ssemname\sem{#1}}
\newcommand{\sflatsem}[1]{\ssemname^\flat\sem{#1}}
\newcommand{\isem}[1]{\mathcal{I}\sem{#1}}
\newcommand{\iflatsem}[1]{\mathcal{I}^\flat\sem{#1}}
\newcommand{\inatsem}[1]{\mathcal{I}^\natural\sem{#1}}
\newcommand{\asem}[1]{\mathcal{A}\sem{#1}}
\newcommand{\asemcomp}[1]{\mathcal{A}\sem{#1}}
\newcommand{\bsem}[1]{\ssemname\sem{#1}}
\newcommand{\psem}[1]{\mathcal{H}\sem{#1}}
\newcommand{\pasem}[1]{\psem{#1}}
\newcommand{\lsem}[1]{\mathcal{L}\sem{#1}}
\newcommand{\lsemcomp}[1]{\mathcal{L}\sem{#1}}
\newcommand{\shouterl}{\mathopen{\llceil}}
\newcommand{\shouterr}{\mathclose{\rrceil}}
\newcommand{\shouter}[1]{\shouterl{#1}\shouterr}
\newcommand{\shinner}[1]{\mathopen{\llfloor}{#1}\mathclose{\rrfloor}}
\newcommand{\shlistl}{\shouterl}
\newcommand{\shlistr}{\shouterr^\star}
\newcommand{\shlist}[1]{\shlistl{#1}\shlistr}
\newcommand{\desc}[1]{\mathopen{\sslash}{#1}\mathclose{\bbslash}}
\newcommand{\desclist}[1]{\mathopen{\sslash}{#1}\mathclose{\bbslash}}
\newcommand{\indexes}[2]{\mathit{indexes}_{#1}(#2)}
\newcommand{\ora}[1]{\overrightarrow{#1}}
\newcommand{\recordl}{\langle}
\newcommand{\recordr}{\rangle}
\newcommand{\record}[1]{\recordl{#1}\recordr}
\newcommand{\tuplel}{\recordl}
\newcommand{\tupler}{\recordr}
\newcommand{\tuple}[1]{\record{#1}}
\newcommand{\unit}{\tuple{}}
\newcommand{\resvl}{[}
\newcommand{\resvr}{]}
\newcommand{\resv}[1]{\resvl{#1}\resvr}
\newcommand{\bagvl}{}
\newcommand{\bagvr}{}
\newcommand{\bagv}[1]{\bagvl{#1}\bagvr}
\newcommand{\bagtl}{\mathrm{Bag}\,}
\newcommand{\bagtr}{}
\newcommand{\bagt}[1]{\bagtl{#1}\bagtr}
\newcommand{\singleton}[1]{\ret{#1}}
\newcommand{\emptybag}{\emptyset}
\newcommand{\isEmpty}[1]{\Empty\,{#1}}
\newcommand{\Int}{\mathit{Int}}
\newcommand{\Bool}{\mathit{Bool}}
\newcommand{\String}{\mathit{String}}
\newcommand{\Index}{\mathit{Index}}
\newcommand{\lam}[1]{\lambda{#1}.\,}
\newcommand{\True}{\mathsf{true}}
\newcommand{\False}{\mathsf{false}}
\newcommand{\union}{\mathbin{\uplus}}
\newcommand{\bigunion}{\biguplus}
\newcommand{\For}{\mathsf{for}}
\newcommand{\Ret}{\mathsf{return}}
\newcommand{\Empty}{\mathsf{empty}}
\newcommand{\retl}{\Ret\,(}
\newcommand{\retr}{)}
\newcommand{\ret}[1]{\Ret\,{#1}}
\newcommand{\retlann}[1]{\Ret^{\cann{#1}}\,(}
\newcommand{\retrann}[1]{\retr}
\newcommand{\retrannnp}[1]{}
\newcommand{\retann}[2]{\Ret^{\cann{#2}}\,{#1}}
\newcommand{\If}{\mathsf{if}}
\newcommand{\Then}{\mathsf{then}}
\newcommand{\Else}{\mathsf{else}}
\newcommand{\Table}{\mathsf{table}}
\newcommand{\Where}{\mathsf{where}}
\newcommand{\Unionall}{\mathsf{union~all}}
\newcommand{\Select}{\mathsf{select}}
\newcommand{\From}{\mathsf{from}}
\newcommand{\As}{\mathsf{as}}
\newcommand{\Let}{\mathsf{let}}
\newcommand{\In}{\mathsf{in}}
\newcommand{\With}{\mathsf{with}}
\newcommand{\Over}{\mathsf{over}}
\newcommand{\Ind}[2]{{#1}\,\mathord{\diamond}\,{#2}}
\newcommand{\DUnit}{1}
\newcommand{\Unit}{\Ind{\top}{\DUnit}}
\newcommand{\iup}{\mathsf{out}}  
\newcommand{\idown}{\mathsf{in}} 
\newcommand{\cindex}{\mathsf{index}}         
\newcommand{\ann}[1]{{\color{red}@{#1}}}
\newcommand{\cann}[1]{#1} 
\newcommand{\cce}{\mathord{::=}}
\newcommand{\rewriteto}{\leadsto}
\newcommand{\nrewriteto}{\not\leadsto}
\newcommand{\nil}{\mathord{[\,]}}
\newcommand{\cons}{\mathbin{::}}
\newcommand{\append}{\mathbin{+\!\!+}}
\newcommand{\ba}{\begin{array}}
\newcommand{\ea}{\end{array}}
\newcommand{\bl}{\ba[t]{@{}l@{}}}
\newcommand{\el}{\ea}
\newenvironment{syntax}{\[\ba{l@{\quad}r@{~}c@{~}l}}{\ea\]}
\newenvironment{equations}{\[\small\ba{r@{~}c@{~}l}}{\ea\]}
\newenvironment{iequations}{\small\ba{r@{~}c@{~}l}}{\ea\normalsize}
\newenvironment{derivation}{~\\ \begin{minipage}{\fill}\[\ba{cl}}{\ea\]\end{minipage}\\~\\}
\newcommand{\byy}[1]{\quad({#1})}
\newcommand{\pmap}{\mathit{pmap}}
\newcommand{\filter}{\mathit{filter}}
\newcommand{\zip}{\mathit{zip}}
\newcommand{\unzip}{\mathit{unzip}}
\newcommand{\Concat}{\mathit{concat}}
\newcommand{\enum}{\mathit{enum}}
\newcommand{\key}{\mathit{index}}
\newcommand{\cl}{\ell}
\newcommand{\iv}{w}
\newcommand{\lins}{\mathbf{L}}
\newcommand{\rownum}{\mathsf{row\_number}}
\newcommand{\orderby}{\mathsf{order}\,\mathsf{by}}
\newcommand{\rownumindex}[1]{\rownum()\,\Over\,(\orderby\,{#1})}
\newcommand{\maxr}{\mathit{max}}   
\newcommand{\size}{\mathit{size}}  
\newcommand{\langname}{\ensuremath{\lambda_{\mathit{NRC}}}\xspace}
\newenvironment{Cases}{
\begin{description}[leftmargin=0cm]
}{\end{description}}
\newcommand{\dom}{\mathit{dom}}
\newcommand{\paths}{\mathit{paths}}
\newcommand{\pbag}{\mathord{\downarrow}}
\newcommand{\buildname}{\mathit{stitch}}
\newcommand{\buildtop}[1]{\buildname({#1})}
\newcommand{\build}[2]{\buildname_{#1}({#2})}
\newcommand{\norm}{\mathit{norm}}
\newcommand{\lab}[1]{\textrm{\color{blue} #1}}
\newcommand{\Task}{\mathit{Task}}
\newcommand{\Contact}{\mathit{Contact}}
\newcommand{\Employee}{\mathit{Employee}}
\newcommand{\Department}{\mathit{Department}}
\newcommand{\Organisation}{\mathit{Organisation}}
\newcommand{\Result}{\mathit{Result}}
\newcommand{\lId}{\lab{id}}
\newcommand{\lName}{\lab{name}}
\newcommand{\lSalary}{\lab{salary}}
\newcommand{\lTasks}{\lab{tasks}}
\newcommand{\lEmployees}{\lab{employees}}
\newcommand{\lContacts}{\lab{contacts}}
\newcommand{\lClient}{\lab{client}}
\newcommand{\lDepartment}{\lab{department}}
\newcommand{\lPeople}{\lab{people}}
\newcommand{\tTasks}{\mathit{tasks}}
\newcommand{\tEmployees}{\mathit{employees}}
\newcommand{\tDepartments}{\mathit{departments}}
\newcommand{\tOrganisation}{\mathit{organisation}}
\newcommand{\tContacts}{\mathit{contacts}}
\newcommand{\emp}{\lab{employee}}
\newcommand{\tsk}{\lab{task}}
\newcommand{\dpt}{\lab{dept}}
\newcommand{\cStr}[1]{``\mathrm{#1}"}
\newcommand{\sterm}{\Lambda_S}
\newcommand{\stype}{\mathcal{T}_S}
\newcommand{\Qcomp}{Q_{\mathit{comp}}}
\newcommand{\Qstitch}{Q_{\mathit{stitch}}}
\newcommand{\Qorg}{Q_{\mathit{org}}}
\newcommand{\Qoutliers}{Q}
\newcommand{\nf}{\mathit{nf}}
\newcommand{\figorganisation}{
\begin{figure}[tb]

\[\small
\begin{array}{l}
\sem{departments}=\smallskip\\
\begin{round}
\node (table) [inner sep=0pt] {
\begin{tabular}{r|c}
(\lId) & \lName \\
\hline
1 & Product \\
2 & Quality \\
3 & Research \\
4 & Sales \\
\end{tabular}
};
\end{round}
\end{array}
\quad
\begin{array}{l}
\sem{employees}=\smallskip\\
\begin{round}
\node (table) [inner sep=0pt] {
\begin{tabular}{r|c|c|c}
(\lId) & \dpt & \lName & \lSalary \\ 
\hline
 1 & Product  & Alex & 20000 \\   
 2 & Product  & Bert & 900 \\     
 3 & Research & Cora & 50000 \\   
 4 & Research & Drew & 60000 \\   
 5 & Sales    & Erik & 2000000 \\ 
 6 & Sales    & Fred & 700 \\     
 7 & Sales    & Gina & 100000 \\  
\end{tabular}
};
\end{round}
\end{array}
\]

\vspace{-.5cm}

\[\small
\begin{array}{l}
\sem{\tTasks} =\smallskip\\
\begin{round}
\node (table) [inner sep=0pt] {
\begin{tabular}{r|c|c}
(\lId) & \emp & \tsk \\
\hline
 1 & Alex & build \\
 2 & Bert & build \\
 3 & Cora & abstract \\
 4 & Cora & build \\
 5 & Cora & call \\
 6 & Cora & dissemble \\
 7 & Cora & enthuse \\
 8 & Drew & abstract \\
 9 & Drew & enthuse \\
10 & Erik & call \\
11 & Erik & enthuse \\
12 & Fred & call \\
13 & Gina & call \\
14 & Gina & dissemble \\
\end{tabular}
};
\end{round} 
\end{array}
\!\!\!\!
\ba{l}
\sem{\tContacts} =\smallskip\\
\begin{round}
\node (table) [inner sep=0pt] {
\begin{tabular}{r|c|c|c}
(\lId) & \dpt & \lName & \lClient \\
\hline
 1 & Product  & Pam & $\False$ \\
 2 & Product  & Pat & $\True$ \\
 3 & Research & Rob & $\False$ \\
 4 & Research & Roy & $\False$ \\
 5 & Sales    & Sam & $\False$ \\
 6 & Sales    & Sid & $\False$ \\
 7 & Sales    & Sue & $\True$ \\ 
\end{tabular}
};
\end{round}
\ea
\]



\caption{Sample data}\label{fig:sampledata}
\end{figure}
}
\title{Query shredding: Efficient relational evaluation of queries over nested
  multisets (extended version)}
\author{
\alignauthor
James Cheney \\
 \affaddr{University of Edinburgh} \\
 \email{jcheney@inf.ed.ac.uk}
\and
\alignauthor
Sam Lindley \\
 \affaddr{University of Edinburgh} \\
 \email{Sam.Lindley@ed.ac.uk}
\and
\alignauthor
Philip Wadler \\
 \affaddr{University of Edinburgh}
 \email{wadler@inf.ed.ac.uk}
}
\begin{document}

\maketitle

\begin{abstract}
  Nested relational query languages have been explored extensively,
  and underlie industrial language-integrated query systems such as
  Microsoft's LINQ.  However, relational databases do not natively
  support nested collections in query results.  This can lead to major
  performance problems: if programmers write queries that yield nested
  results, then such systems typically either fail or generate a large
  number of queries.  We present a new approach to query shredding,
  which converts a query returning nested data to a fixed number of
  SQL queries.  Our approach, in
  contrast to prior work, handles multiset semantics, and generates an
  idiomatic SQL:1999 query directly from a normal form for nested
  queries.  We provide a detailed description of our translation and
  present experiments showing that it offers comparable or better
  performance than a recent alternative approach on a range of
  examples.
\end{abstract}

\section{Introduction}

Databases are one of the most important applications of declarative
programming techniques.  However, relational databases only support
queries against flat tables, while programming languages typically
provide complex data structures that allow arbitrary combinations of
types including nesting of collections (e.g. sets of sets).
Motivated by this so-called \emph{impedance mismatch}, and inspired by
insights into language design based on monadic
comprehensions~\cite{Wadler92}, database researchers introduced nested
relational query
languages~\cite{schek86is,buneman94sigmod,buneman95tcs} as a
generalisation of flat relational queries to allow nesting collection
types inside records or other types.  Several recent language designs,
such as XQuery~\cite{xquery3} and PigLatin~\cite{piglatin}, have
further extended these ideas, and they have been particularly
influential on language-integrated querying systems such as
Kleisli~\cite{wong00jfp}, Microsoft's LINQ in C\# and
F\#~\cite{meijer06sigmod,fsharp3,cheney13icfp},
Links~\cite{cooper06fmco,lindley12tldi}, and Ferry~\cite{grust09sigmod}.

This paper considers the problem of translating nested queries over
nested data to flat queries over a flat representation of nested data,
or \emph{query shredding} for short. Our motivation is to support a
free combination of the features of nested relational query languages
with those of high-level programming languages, particularly systems
such as Links, Ferry, and LINQ. All three of these systems support
queries over nested data structures (e.g.  records containing nested
sets, multisets/bags, or lists) in principle; however, only Ferry supports them
in practice. Links and LINQ currently either reject such queries at
run-time or execute them inefficiently in-memory by loading
unnecessarily large amounts of data or issuing large numbers of
queries (sometimes called \emph{query storms} or
\emph{avalanches}~\cite{grust10pvldb} or the \emph{$N+1$ query
  problem}). To construct nested data structures while avoiding this
performance penalty, programmers must currently write flat queries
(e.g. loading in a superset of the needed source data) and convert the
results to nested data structures.  Manually reconstructing nested query
results is tricky and hard to maintain; it may also mask optimisation opportunities.

In the Ferry system, Grust et al.~\cite{grust09sigmod,grust10pvldb}
have implemented shredding for nested list queries by adapting an
XQuery-to-SQL translation called
\emph{loop-lifting}~\cite{grust04vldb}.  Loop-lifting produces queries
that make heavy use of advanced On-Line Analytic Processing (OLAP)
features of SQL:1999, such as \verb|ROW_NUMBER| and \verb|DENSE_RANK|,
and to optimise these queries Ferry relies on a SQL:1999 query
optimiser called Pathfinder~\cite{pathfinder}.

\begin{sloppypar}
Van den Bussche~\cite{vandenbussche01tcs} proved expressiveness
results showing that it is possible in principle to evaluate nested
queries over \emph{sets} via multiple flat queries. 
To strengthen the result, Van den Bussche's simulation eschews value invention
mechanisms such as SQL:1999's \verb|ROW_NUMBER|.  The downside,
however, is that the flat queries can produce results that are
quadratically larger than needed to represent sets and may not preserve bag
semantics.
\begin{tronly}
  In particular, for bag semantics the size of a union of two nested
  sets can be quadratic in the size of the inputs --- we give a
  concrete example in Appendix~\ref{app:vdb-incorrect}.
\end{tronly}
\end{sloppypar}

Query shredding is related to the well-studied \emph{query unnesting}
problem~\cite{kim82tods,fegaras00tods}.  However, most prior work on
unnesting only considers SQL queries that contain subqueries in WHERE
clauses, not queries returning nested results; the main exception is
Fegaras and Maier's work on query unnesting in a complex object
calculus~\cite{fegaras00tods}.

\begin{figure*}[tb!]
\centering
\begin{tabular}{ccc}
(a) & (b) & (c)\\
  \includegraphics[scale=0.17]{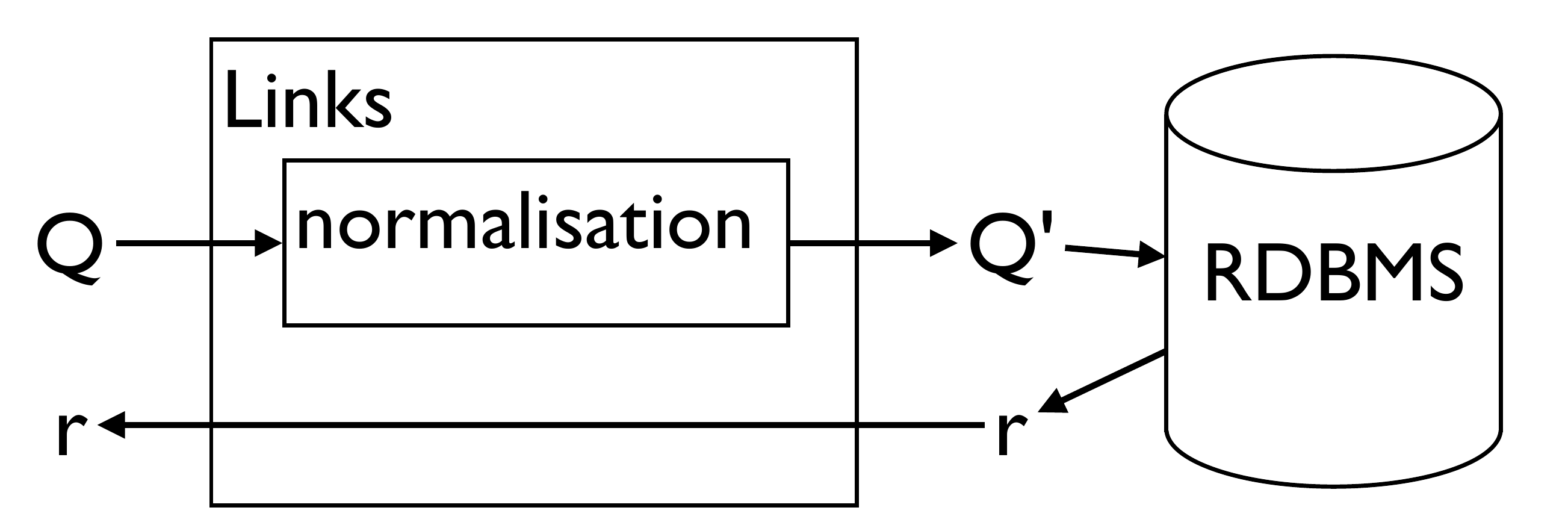}
&~~
  \includegraphics[scale=0.17]{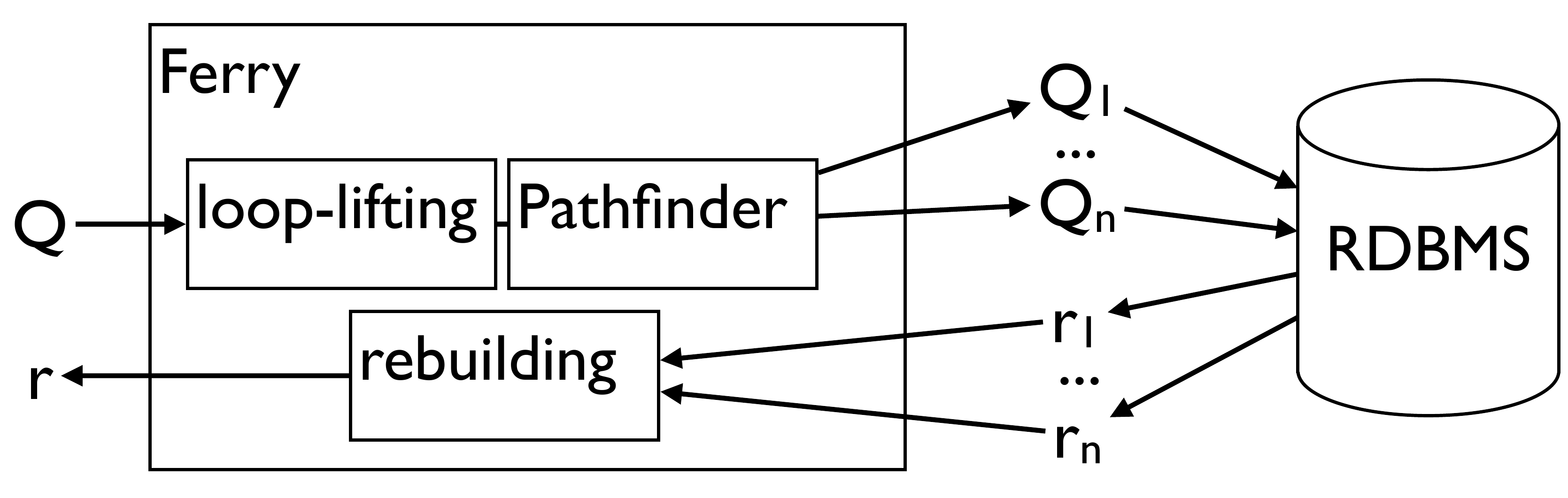}
&~~
  \includegraphics[scale=0.17]{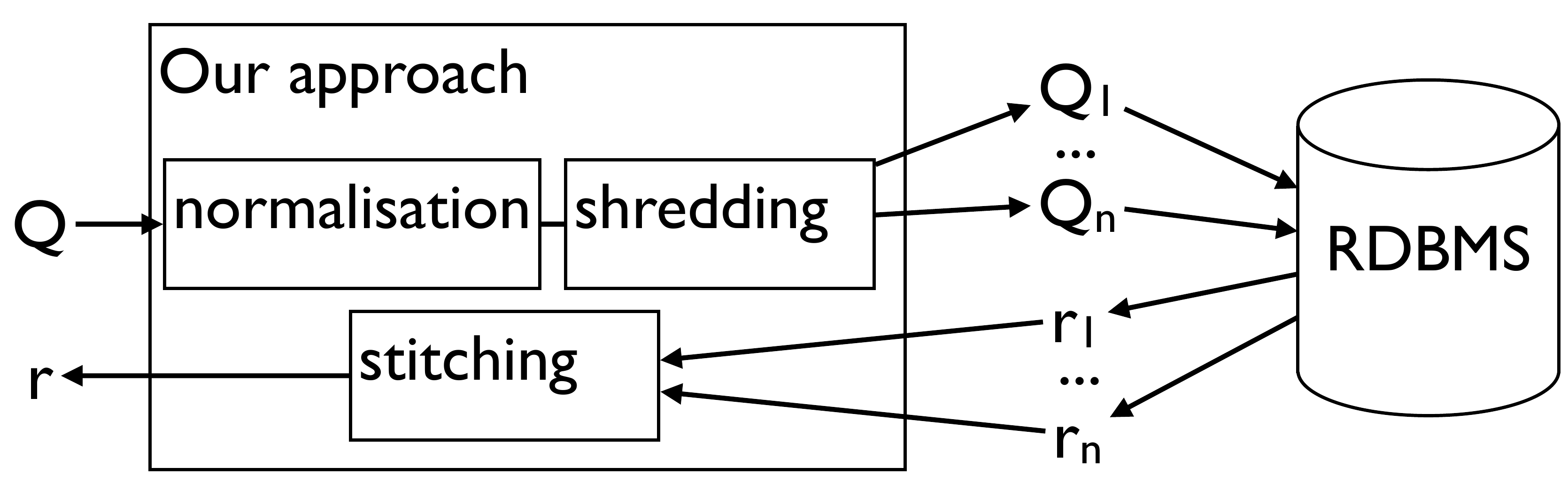}
\end{tabular}
 \caption{(a) Default Links behaviour (flat queries) ~~ (b) Ferry
   (nested list queries) ~~ (c) Our approach (nested bag queries)}
  \label{fig:diagrams}
\end{figure*}

In this paper, we introduce a new approach to query shredding 
for nested multiset queries (a case not handled by prior work).
Our work is
formulated in terms of the Links language, but should be applicable to
other language-integrated query systems, such as Ferry and LINQ, or to
other complex-object query languages~\cite{fegaras00tods}.  Figure~\ref{fig:diagrams}
illustrates the behaviour of Links, Ferry and our approach.

We decompose the translation from nested to flat queries into a series
of simpler translations.
We leverage prior work on \emph{normalisation} that translates a
higher-order query to a \emph{normal form} in which higher-order
features have been
eliminated~\cite{wong96jcss,cooper09dbpl,fegaras00tods}.
Our algorithm operates on normal forms. We review normalisation in Section~\ref{sec:background}, and give a running
example of our approach in Section~\ref{sec:example}.

Sections~\ref{sec:shredding}--\ref{sec:to-sql} present the remaining
phases of our approach, which are new.  The \emph{shredding} phase
translates a single, nested query to a number of flat queries. These
queries are organised in a \emph{shredded package}, which is
essentially a type expression whose collection type constructors are
annotated with queries. The different queries are linked by
\emph{indexes}, that is, keys and foreign keys.  The shredding translation is
presented in Section~\ref{sec:shredding}.  Shredding leverages the
normalisation phase in that we can define translations on types and
terms independently.  Section~\ref{sec:run-shredded} shows how to run
shredded queries and stitch the results back together to form nested
values.

The \emph{let-insertion} phase implements a flat indexing
scheme 
using a let-binding construct 
and a row-numbering operation.
Let-insertion (described in Section~\ref{sec:indexing-schemes}) is
conceptually straightforward, but provides a vital link to proper SQL
by providing an implementation of abstract indexes. The final stage is
to translate to SQL (Section~\ref{sec:to-sql}) by flattening records,
translating let-binding to SQL's \verb|WITH|, and translating
row-numbering to SQL's \verb|ROW_NUMBER|.
\begin{tronly}
  This phase also flattens nested records; the (standard) details are
  included in Appendix~\ref{app:record-flattening} for completeness.
\end{tronly}

We have implemented and experimentally evaluated our approach
(Section~\ref{sec:implementation}) in comparison with Ulrich's
implementation of loop-lifting in Links~\cite{ulrich11ferry-links}.
Our approach typically performs as well or better than loop-lifting,
and can be significantly faster.  

Our contribution over prior work can be summarised as follows.
Fegaras and Maier~\cite{fegaras00tods} show how to unnest complex
object queries including lists, sets, and bags, but target a
nonstandard nested relational algebra,
whereas we target standard SQL.  Van den Bussche's
simulation~\cite{vandenbussche01tcs} translates nested set queries to
several relational queries but was used to prove a theoretical result
and was not intended as a practical implementation technique, nor has
it been implemented and evaluated.
Ferry~\cite{grust09sigmod,grust10pvldb} translates nested list queries
to several SQL:1999 queries and then tries to simplify the resulting
queries using Pathfinder.  Sometimes, however, this produces queries
with cross-products inside OLAP operations such as \verb|ROW_NUMBER|,
which Pathfinder cannot simplify.  In contrast, we delay introducing
OLAP operations until the last stage, and our experiments show how
this leads to much better performance on some queries.  
Finally, we handle nested multisets, not sets or lists.

\begin{subonly}
  Additional details, proofs of correctness, and comparison with
  related work are available in a companion technical
  report~\cite{shredding-tr}.
\end{subonly}
\begin{tronly}
  This report is the extended version of a SIGMOD 2014 paper, and
  contains additional details, proofs of correctness, and comparison
  with related work in appendices.
\end{tronly}

\section{Background}
\label{sec:background}

We use metavariables $x,y,\ldots,f,g$ for \emph{variables}, and
$c,d,\ldots$ for \emph{constants} and primitive operations.  We also
use letters $t,t',\ldots$ for \emph{table names}, $\cl,\cl', \cl_i,
\ldots$ for \emph{record labels} and $a,b,\ldots$ for \emph{tags}.

We write $M[x:=N]$ for capture-avoiding substitution of $N$ for $x$ in
$M$. We write $\vec{x}$ for a vector $x_1,\ldots,x_n$. Moreover, we
extend vector notation pointwise to other constructs, writing, for
example, $\record{\ora{\cl = M}}$ for $\recordl{\cl_1=M_1}, \ldots,$
${\cl_n=M_n}\recordr$.

\newcommand{\init}{\mathit{init}}
\newcommand{\last}{\mathit{last}}

We write: square brackets $[-]$ for the meta level list constructor;
$w \cons \vec{v}$ for adding the element $w$ onto the front of the
list $\vec{v}$; $\vec{v} \append \vec{w}$ for the result of appending
the list $\vec{w}$ onto the end of the list $\vec{v}$; and $\Concat$
for the function that concatenates a list of lists.  We also make use of the following functions:
\[\bl
\init~[x_i]\iton = [x_i]_{i=1}^{n-1} \smallskip \qquad \last~[x_i]\iton = x_n \\
\enum([v_1, \dots, v_m]) = [\tuple{1, v_1}, \dots \tuple{m, v_m}]
\el
\]%

In the meta language we make extensive use of comprehensions,
primarily list comprehensions. For
instance, $[v \mid x \gets \xs, y \gets \ys, p]$, returns a copy of
$v$ for each pair $\tuple{x,y}$ of elements of $\xs$ and $\ys$ such
that the predicate $p$ holds.
We write $[v_i]\iton$ as shorthand for $[v_i \mid 1 \leq i \leq n]$ and
similarly, e.g., $\record{\cl_i=M_i}\iton$ for $\record{\cl_1=M_1,
  \dots, \cl_n=M_n}$.

\begin{figure}[tb!]
  \centering\small
  \begin{equations}
    \nsem{x}_\rho &=& \rho(x)\\
    \nsem{c(X_1, \dots, X_n)}_\rho &=& \sem{c}(\nsem{X_1}_\rho, \dots, \nsem{X_n}_\rho) \\
    \nsem{\lambda x. M}_\rho &=& \lambda v. \nsem{M}_{\rho[x\mapsto v]}\\
    \nsem{M~N}_\rho &=& \nsem{M}_\rho(\nsem{N}_\rho)\\
    \nsem{\record{\cl_i=M_i}\iton}_\rho &=& \record{\cl_i=\nsem{M_i}_\rho}\iton \\
    \nsem{M.\cl}_\rho &=& \nsem{M}_\rho.\cl \\
    \nsem{\If\,L\,\Then\,M\,\Else\,N}_\rho &=& \left\{
      \begin{array}{ll}
        \nsem{M}_\rho, & \text{if }\nsem{L}_\rho = \True\\
        \nsem{N}_\rho, & \text{if }\nsem{L}_\rho = \False
      \end{array}
    \right.\\
    \nsem{\singleton{M}}_\rho &=& [\nsem{M}_\rho] \\
    \nsem{\emptybag}_\rho &=& \nil \\
    \nsem{M \union N}_\rho &=& \nsem{M}_\rho \append \nsem{N}_\rho \\
    \nsem{\For\,(x \gets M)\, N}_\rho &=&
      \Concat  [\nsem{N}_{\rho[x\mapsto v]} \mid v \gets \nsem{M}_\rho] \\
    \nsem{\isEmpty{M}}_\rho &=&  \left\{
      \begin{array}{ll}
        \True,  & \text{if }\nsem{M}_\rho = \nil \\
        \False, & \text{if }\nsem{M}_\rho \neq \nil \\
      \end{array}
    \right.\\
    \nsem{\Table~t}_\rho &=& \sem{t} \\[1ex]
  \end{equations}
  
  \caption{Semantics of \langname}
\label{fig:bag-semantics}
\end{figure}

\subsection{Nested relational calculus}

We take the higher-order, nested relational calculus (evaluated over
bags) as our starting point.  We call this \langname; this is
essentially a core language for the query components of Links, Ferry,
and LINQ.
The types of $\langname$ include base types (integers, strings,
booleans), record types $\record{\ora{\cl:A}}$, bag types
$\bagt{A}$, and function types $A \to B$.
\begin{syntax}
\textrm{Types}      & A, B & \cce& O \mid \record{\ora{\cl:A}} \mid
\bagt{A} \mid  A \to B \\ 
\textrm{Base types} & O    & \cce& \Int \mid \Bool \mid \String 
\end{syntax}%
We say that a type is \emph{nested} if it contains no function types
and \emph{flat} if it contains only base and record types.

The terms of \langname include $\lambda$-abstractions, applications,
and the standard terms of nested relational calculus.
\begin{syntax}
\textrm{Terms}
& M,N    &  \cce & x \mid c(\vec{M}) \mid \Table~t \mid \If\,M\, \Then\,N\, \Else\, N' \\
         && \mid & \lam{x}M \mid M ~ N \mid \record{\ora{\cl=M}} \mid M.\cl   \mid  \isEmpty{M} \\
        && \mid & \singleton{M} \mid \emptybag \mid M \union N  \mid \For\,(x \gets M)\,N 
\end{syntax}
\begin{sloppypar}
We assume that the constants and primitive functions include boolean
values with negation and conjunction, and integer values with standard
arithmetic operations and equality tests.
We assume special labels $\#_1,\#_2,\ldots$ and encode tuple types $\tuple{A_1,\dots,A_n}$ as record types
$\record{\#_1:A_1, \dots, \#_n:A_n}$, and similarly tuple terms
$\tuple{M_1,\dots,M_n}$ as record terms $\record{\#_1=M_1, \dots,
  \#_n=M_n}$.
%
We assume fixed signatures $\Sigma(t)$ and $\Sigma(c)$ for tables and
constants. The tables are constrained to have flat relation type
($\bagt{\record{\cl_1:O_1,\dots,\cl_n:O_n}}$), and the constants
must be of base type or first order $n$-ary functions ($\tuple{O_1, \dots, O_n} \to O$).
\end{sloppypar}

Most language constructs are standard. The $\emptyset$ expression
builds an empty bag, $\singleton{M}$ constructs a singleton, and $M
\union N$ builds the bag union of two collections.  The $\For\,(x
\gets M)\;N$ comprehension construct iterates over a bag obtained by
evaluating $M$, binds $x$ to each element, evaluates $N$ to another
bag for each such binding, and takes the union of the results.  The
expression $\isEmpty{M}$ evaluates to $\True$ if $M$ evaluates to an
empty bag, $\False$ otherwise.

\langname employs a standard type system similar to that presented in
other work~\cite{wong00jfp,lindley12tldi,cheney13icfp}.  We will also
employ several typed intermediate languages and translations
mapping $\langname$ to SQL.  All of these (straightforward) type
systems are omitted due to space limits; they will be available in the
full version of this paper.

\paragraph*{Semantics}  
We give a denotational semantics in terms of lists. Though we
wish to preserve bag semantics, we interpret object-level bags as
meta-level lists. For meta-level values $v$ and $v'$,
we consider $v$ and $v'$ equivalent as multisets if they are equal up to
permutation of list elements.  We use lists mainly so that we can talk about
order-sensitive operations such as $\rownum$.

We interpret base types as integers, booleans and strings, function
types as functions, record types
as records, and bag types as lists.
For each table $t \in \dom(\Sigma)$, we assume a fixed interpretation
$\tsem{t}$ of $t$ as a list of records of type $\Sigma(t)$.
In SQL, tables do not have a list semantics by default, but we can
impose one by choosing a canonical ordering for the rows of the
table. We order by all
of the columns arranged in lexicographic order (assuming linear orderings
on field names and base types).

We assume fixed interpretations $\sem{c}$ for the constants and
primitive operations. The semantics of nested relational calculus are
shown in Figure~\ref{fig:bag-semantics}.  
\begin{tronly}
The (standard) typing rules  are given in Appendix~\ref{app:typing-rules}.
\end{tronly}
We let $\rho$ range over environments mapping variables to values,
writing $\varepsilon$ for the empty environment and $\rho[x \mapsto
v]$ for the extension of $\rho$ with $x$ bound to $v$.

\subsection{Query normalisation}
\label{subsec:normalisation}

In Links, query normalisation is an important part of the execution
model~\cite{cooper09dbpl,lindley12tldi}.  Links currently supports
only queries mapping flat tables to flat results, or \emph{flat--flat}
queries.  When a subexpression
denoting a query is evaluated, the subexpression is first
normalised and then converted to SQL, which is sent to the database
for evaluation; the tuples received in response are then converted
into a Links value and normal execution resumes (see Figure\ref{fig:diagrams}(a)).

For \emph{flat--nested} queries that read from flat tables and produce
a nested result value, our normalisation procedure is similar to the
one currently used in Links~\cite{lindley12tldi}, but we hoist all conditionals into the
nearest enclosing comprehension as $\Where$ clauses. This is a minor
change; the modified algorithm is given in the full version of this paper.
The resulting normal
forms are:
\begin{syntax}
  \text{Query terms} & L &\cce& \bigunion\, \vec{C}
  \sepskip\\
  \text{Comprehensions}   & C    &\cce& \For\,(\vec{G}\,\Where\,X)\,\ret{M} \\
  \text{Generators}       & G    &\cce& x \gets t \\
  \text{Normalised terms} & M, N &\cce& X \mid R \mid L \\
  \text{Record terms}     & R    &\cce& \record{\ora{\cl=M}} \\
  \text{Base terms}       & X    &\cce& x.\cl \mid c(\vec{X}) \mid \isEmpty{L} \\
\end{syntax}%
Any closed
flat--nested query can be converted to an equivalent term in the
above normal form.
\begin{theorem}
Given a closed flat--nested query $\vdash M :\bagt{ A}$, there exists a
normalisation function $\norm_{\bagt{A}}$,  mapping each $M$ to an equivalent
normal form $\norm_{\bagt{A}}(M)$.
\end{theorem}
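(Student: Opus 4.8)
The plan is to realise $\norm_{\bagt{A}}$ as the function that returns the unique $\rewriteto$-normal form of $M$ under a strongly normalising, confluent rewrite relation $\rewriteto$ that is sound for $\nsem{-}$, and then to verify that every $\rewriteto$-normal closed term of bag type already matches the stated grammar. Concretely, I would let $\rewriteto$ be the contextual closure of: the $\beta$ and projection rules $(\lam{x}M)\,N \rewriteto M[x:=N]$ and $\record{\ora{\cl=M}}.\cl_i \rewriteto M_i$; the monad/commuting-conversion rules $\For\,(x \gets \emptybag)\,N \rewriteto \emptybag$, $\For\,(x \gets \singleton{M})\,N \rewriteto N[x:=M]$, $\For\,(x \gets M \union N)\,P \rewriteto \For\,(x \gets M)\,P \union \For\,(x \gets N)\,P$, and $\For\,(x \gets \For\,(y \gets M)\,N)\,P \rewriteto \For\,(y \gets M)\,\For\,(x \gets N)\,P$ with $y$ fresh for $P$; the dual rules that push a $\For$ through an $\emptybag$, a $\union$, or an $\If$ appearing in its body; the rules that push a bag-typed $\If$ down to the comprehension leaves and merge its guard into the enclosing $\Where$ clause; and a few $\isEmpty{}$-simplifications on $\emptybag$, $\singleton{}$ and $\union$. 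All of these are standard from prior NRC normalisation work~\cite{wong96jcss,cooper09dbpl,lindley12tldi}; the only novelty is the $\Where$-merging, which is the minor change noted above.

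\textbf{Soundness.} I would first check that each rule preserves $\nsem{-}$ in every environment $\rho$, by a routine case analysis against the equations of Figure~\ref{fig:bag-semantics}: $\beta$ and projection from the substitution lemma for $\nsem{-}$; the comprehension rules from associativity of $\append$, the behaviour of $\Concat$ on empty and concatenated lists, and the list-monad laws; conditional hoisting by splitting on $\nsem{L}_\rho \in \{\True,\False\}$ and observing that a $\For$ over a conditional body with an $\emptybag$ branch simply drops the iterations where the guard is false. Hence if $M \rewriteto^\ast N$ then $\nsem{M} = \nsem{N}$.

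\textbf{Termination and confluence.} Strong normalisation is the crux. Since $\langname$ has no recursive types and no iteration construct beyond $\For$, and no rule creates a cycle through a generator, I would prove SN by a Tait-style reducibility argument indexed by types: interpret $A \to B$ as the reducible functions, and record and bag types by the obvious structural clauses, handling the comprehension and conditional-hoisting rules in the $\bagt{A}$ clause exactly as elimination forms are treated in System~T; the adequacy lemma ``every well-typed term is reducible'' then yields SN. (Alternatively one may cite the SN results of~\cite{wong96jcss,cooper09dbpl} directly, since our rule set is a mild extension.) Local confluence follows by inspecting the finitely many critical pairs---chiefly overlaps between the $\For$-generator rules and between those and body-level rules---all of which join; Newman's Lemma gives confluence, so normal forms are unique and $\norm_{\bagt{A}}(M)$ is well defined, and by the soundness step $\nsem{\norm_{\bagt{A}}(M)} = \nsem{M}$.

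\textbf{Shape of normal forms, and the main obstacle.} Finally I would show, by induction on a $\rewriteto$-normal term $M$ with $\Gamma \vdash M : A$ where $\Gamma$ contains only generator variables of flat record type, that $M$ falls into the grammar for $L, C, G, R, X$ according to its type: a closed bag-typed normal term is not a variable, an application (no $\lambda$ survives in a closed normal term of first-order type), an $\isEmpty{}$ (wrong type), or an $\If$ (hoisted away), so it is built from $\emptybag$, $\singleton{}$, $\union$ and $\For$ over table generators, which---$\union$ and $\For$-distribution being exhausted---is exactly $\bigunion \vec{C}$; inside a comprehension a base-typed normal term can only be $x.\cl$, $c(\vec{X})$ or $\isEmpty{L}$, record-typed normal terms are explicit records, and the $\Where$-merging ensures each comprehension carries a single guard $X$. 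The remaining work---and the genuinely delicate part---is Step~3: establishing strong normalisation in the presence of the rules that duplicate the comprehension body (distribution over $\union$) and move conditionals across binders, together with the bookkeeping in the last step needed to see that the residue is precisely the published grammar rather than merely ``redex-free''. Everything else is mechanical.
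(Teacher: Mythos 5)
Your overall strategy (a sound, terminating rewrite system plus a characterisation of its normal forms) is close in spirit to the paper's, but two of your claims are false for the system as you define it, and both are points where the paper's proof is structured differently precisely to avoid them. First, the normal-form characterisation: it is not true that every redex-free closed term of bag type lies in the published grammar. The term $\Table\,t$ is closed, bag-typed and irreducible under all of your rules, yet it is not of the form $\bigunion\,\vec{C}$ with $C = \For\,(\vec{G}\,\Where\,X)\,\ret{M}$; likewise, inside a comprehension body a generator variable $x$ of record type is an irreducible record-typed term that is not an explicit record, so $\For\,(x \gets t)\,\ret{x}$ is stuck but not in the grammar. Reaching the grammar requires $\eta$-expanding tables and variables, and no redex drives that expansion --- it needs type information. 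This is exactly why the paper does not use a single rewrite relation: it composes two rewrite phases, $\nf_c$ (symbolic evaluation) and $\nf_h$ (if-hoisting), with a third, \emph{type-directed structurally recursive function} $\Split{-}{A}$ that hoists unions to the top level, converts conditionals into where-clauses, and $\eta$-expands $\Table\,t$ to $\For\,(x\gets t)\,\ret{x}$ and record variables to explicit records. Your induction step ``record-typed normal terms are explicit records'' is where the argument breaks.

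Second, confluence does not hold on the nose for your system: hoisting two conditionals out of the same constructor (e.g.\ out of $c(\If\,L_1\,\Then\,\cdots,\;\If\,L_2\,\Then\,\cdots)$ or a record with two conditional fields) in different orders produces normal forms whose where-clauses differ in the order of conjuncts, and this critical pair does not join syntactically. The paper explicitly concedes that $\rewriteto_h$ is \emph{not} confluent and recovers well-definedness of $\norm$ only by identifying terms modulo commutativity of conjunction; your appeal to Newman's Lemma therefore proves less than you claim. Relatedly, your single Tait-style SN argument must also cover the rules that hoist $\If$ out of \emph{constructor} positions ($\ret{-}$, records, $c(-)$, $\union$), which are not elimination-frame commuting conversions and are not handled by the frame-stack/$\top\top$-lifting method as you describe it; the paper sidesteps this by proving SN separately for $\rewriteto_c$ (reducibility with frame stacks) and for $\rewriteto_h$ (a simple lexicographic measure on $\maxr_h$ and term size) and composing the two normalisers. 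None of this is unfixable, but as written both your termination/confluence step and your final grammar step contain claims that fail for the system you defined.
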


The normalisation algorithm and correctness proof are similar to those
in previous papers~\cite{cooper09dbpl,lindley12tldi,cheney13icfp}. 
\begin{tronly}
  The details are given in Appendix~\ref{app:sn}.
\end{tronly}
The normal forms above can also be viewed as an SQL-like language
allowing relation-valued attributes (similar to the complex-object
calculus of Fegaras and Maier~\cite{fegaras00tods}).  Thus, our
results can also be used to support nested query results in an
SQL-like query language.  In this paper, however, we focus on the
functional core language based on comprehensions, as illustrated in the next section.


\section{Running example}
\label{sec:example}

To motivate and illustrate our work, we present an extended example
showing how our shredding translation could be used to provide useful
functionality to programmers working in LINQ using F\#, Ferry or Links. We first
describe the code the programmer would actually write and the results
the system produces. Throughout the rest of the paper, we return to
this example to illustrate how the shredding translation works.

\figorganisation

Consider a flat database schema $\Sigma$ for an organisation:
\[\small
\bl
\tTasks(\emp:\String, \tsk:\String)\\
\tEmployees(\dpt:\String,\lName:\String, 
\lSalary:\Int)\\
\tContacts( \dpt:\String,  \lName:\String, \lClient:\Bool)\\
\tDepartments(\lName:\String) 
\ea
\]%
Each department has a name, a collection of employees,
and a collection of external contacts. Each employee has a name,
salary and a collection of tasks. Some contacts are clients.  
Figure~\ref{fig:sampledata} shows a small instance of this schema.
For convenience, we also assume every table has an integer-valued key $id$.

In \langname, queries of the form $\For \ldots \Where \ldots
\ret{\ldots}$ are natively supported.  These are \emph{comprehensions}
as found in XQuery or functional programming languages, and they
generalise idiomatic SQL \verb|SELECT| \verb|FROM| \verb|WHERE|
queries~\cite{buneman94sigmod}.  Unlike SQL, we can use (nonrecursive)
functions to define queries with parameters, or parts of queries, and
freely combine them.  For example, the following functions define
useful query patterns over the above schema:
\[\small\bl
\tasksOfEmp~e = \bl 
\For\, (t \gets \tTasks)\\ 
\Where\, (t.\emp = e.\lName)\\
\ret{t.tsk} 
\el
\smallskip\\
\contactsOfDept~d = \bl
\For\, (c \gets \tContacts)\\
\Where\, (d.\dpt = c.\dpt)\\
\ret{\record{\lName=c.\lName,\lClient=c.\lClient}}
\el
\smallskip\\
\employeesByTask~t = \\
\qquad
\bl
\For\,(e\gets \tEmployees,d \gets \tDepartments)\\
\Where\, (e.\lName = t.\emp \wedge e.\dpt = d.\dpt)\\
  \ret{\record{b=e.\emp, c=d.\dpt}}
\el
\el
\]

Nested queries allow free mixing of collection (bag) types with record
or base types. For example, the following query
\[\small\bl
 \employeesOfDept~d = \bl
\For\, (e \gets \tEmployees)\\
\Where\, (d.\dpt = e.\dpt)\\
\ret{\recordl\bl 
\lName=e.\lName,\lSalary=e.\lSalary,\\
\lTasks=\tasksOfEmp~e\recordr
\el
}
\el
\el
\]
returns a nested result: a collection of employees in a department,
each with an associated collection of tasks.  That is, its return type
is $\bagtl \recordl\lName{:}\String, \lSalary{:} \Int,\lTasks{:}\bagt{ \String}\recordr\bagtr$

Consider the following nested schema for organisations:
\begin{equations}
\Task         &=& \String \\
\Employee     &=& \record{\lName:\String, \lSalary:\Int ,\lTasks:\bagt{\Task} } \\
\Contact      &=& \record{\lName:\String, \lClient:\Bool} \\
\Department   &=& \recordl\bl \lName:\String,\lEmployees:\bagt{\Employee},\\
\lContacts:\bagt{\Contact}\recordr
\el \\
\Organisation &=& \bagt{\Department} \\
\end{equations}%
Using some of the above functions we can write a query
$\Qorg$ that maps data in the flat schema $\Sigma$ to the
nested type $\Organisation$, as follows:
\[\small
\Qorg = \bl
  \For\,(d \gets \tDepartments) \\
  \quad
  \retl\recordl
    \bl
      \lName      = d.\lName, \\
      \lEmployees = \employeesOfDept~d,\\
      \lContacts = \contactsOfDept~d\recordr
      \el
\el
\]

We can also define and use higher-order functions to build queries,
such as the following:
\[\small
\bl
\filter~p~\xs = \For\,(x \gets \xs)\,\Where\,(p(x))\,\ret{x}\\
any~\xs~p = \neg(\Empty(\For\, (x \gets xs)\, \Where\, (p(x))\, \ret{\record{}}))\\
all~\xs~p = \neg(any~\xs~(\lambda x. \neg(p(x))))\\
contains~\xs~u = any~\xs~(\lambda x. x = u)
\el
\]

To illustrate the main technical challenges of shredding,
we consider a query with two levels of nesting and a union
operation. 

Suppose we wish to find for each department a collection of
people of interest, both employees and contacts, along with a list of
the tasks they perform. Specifically, we are interested in those
employees that earn less than a thousand euros and those who earn more
than a million euros, call them \emph{outliers}, along with those
contacts who are clients.  The following code defines poor, rich,
outliers, and clients:
\[\small
\bl
\isPoor~x = x.\lSalary < 1000 \\
\isRich~x = x.\lSalary > 1000000 \\
\outliers~\xs = \filter~(\lambda x.\isRich~x \vee \isPoor~x)~\xs \\
\clients~\xs = \filter~(\lambda x.\,x.\lClient)~xs \\
\el
\]
We also introduce a convenient higher-order function that uses its $f$
parameter to initialise the tasks of its elements:
\[\small\bl
\getTasks~\xs~f = \For\,(x \gets
\xs)\,\ret{\record{\lName=x.\lName,\lTasks=f~x}}
\el
\]

Using the above operations, the query $\Qoutliers$ returns each department, the
outliers and clients associated with that department, and their
tasks. We assign the special task $\cStr{buy}$ to clients.
\[
\small\ba{l}
\Qoutliers(organisation) = \\
\qquad\bl
\For\, (x \gets organisation) \\
\retl \recordl
     \bl
       \lDepartment = x.\lName, \\
       \lPeople = \\
       \quad\getTasks(\outliers(x.\lEmployees))~(\lambda y.\,y.\lTasks)   \\
       \union~
           \getTasks(\clients(x.\lContacts))~(\lambda y.\,\ret{\cStr{buy}})\recordr \retr\\
     \el
\el 
\ea
\]%
The result type of $\Qoutliers$ is:
\[\small
\bl
\Result = \bagtl\recordl
   \bl
     \lDepartment:\String, \\
     \lPeople:\bagtl\recordl
       \bl
         \lName:\String, \lTasks:\bagt{\String}\recordr\bagtr\recordr\bagtr \\
       \el
   \el
\el
\]
We can compose $\Qoutliers$ with $\Qorg$ to form a
query $\Qoutliers( \Qorg)$ from the flat data stored in $\Sigma$ to
the nested $\Result$.  The normal
form of this composed query, which we call $\Qcomp$, is as follows:
\begin{equations}
\Qcomp &=& \bl
  \For\,(x \gets \tDepartments) \\
  \retl\\
  ~~\recordl
     \bl
       \lDepartment = x.\lName, \\
       \lPeople = \\
        \phantom{\union}~(\bl
            \For\,(y \gets \tEmployees)\,
            \Where\,(x.\lName = y.\dpt ~\wedge \\
            \hfill(y.\lSalary < 1000 \vee y.\lSalary > 1000000)) \\
            \retl\recordl
              \bl
                \lName  = y.\lName, \\
                \lTasks =
                   \bl
                     \For\,(z \gets \tTasks) \\
                     \Where\,(z.\emp = y.\lName) \\
                     \ret{z.\tsk}
                       \recordr\retr) \\
                   \el \\
              \el \\
           \el \\
        \union
        ~(\bl
            \For\,(y \gets \tContacts)\\
            \Where\,(x.\lName = y.\dpt \wedge y.\lClient) \\
            \retl\recordl
              \bl
                \lName  = y.\lName, \\
                \lTasks = \ret{\cStr{buy}}
                   \recordr\retr)\recordr\retr \\
              \el \\
           \el \\
     \el \\
\el
\end{equations}

The result of running $\Qcomp$
on the data in Figure~\ref{fig:sampledata} is:
\[\small
\bl
  \resvl
    \recordl\bl
      \lDepartment = \cStr{Product}, \\
      \lPeople  = 
        \resvl
        \bl
          \record{\lName=\cStr{Bert},
                  \lTasks=\resv{\cStr{build}}}, \\
          \record{\lName=\cStr{Pat}, 
                  \lTasks=\resv{\cStr{buy}}}\resvr\recordr\resvr \\
        \el \\
    \el \\
    \recordl\bl
      \lDepartment   = \cStr{Research}, 
      \lPeople = \emptybag\recordr, \\
    \el \\
    \recordl\bl
      \lDepartment   = \cStr{Quality}, 
      \lPeople = \emptybag\recordr, \\
    \el \\
    \recordl\bl
      \lDepartment   = \cStr{Sales}, \\
      \lPeople = 
        \resvl\bl
          \record{\lName=\cStr{Erik}, \lTasks=\resv{\cStr{call}, \cStr{enthuse}}}, \\
          \record{\lName=\cStr{Fred}, \lTasks=\resv{\cStr{call}}}, \\
          \record{\lName=\cStr{Sue}, 
                  \lTasks=\resv{\cStr{buy}}}\resvr\recordr\resvr 
        \el 
    \el 
\el
\]

Now, however, we are faced with a problem: SQL databases do not
directly support nested multisets (or sets).  Our shredding translation,
like Van den Bussche's simulation for sets~\cite{vandenbussche01tcs}
and Grust et al.'s for lists~\cite{grust10pvldb},
can translate a normalised
query such as $\Qcomp : \Result$ that maps flat input
$\Sigma$ to nested output $\Result$ to a fixed number of flat
queries
$q_1 : \Result_1, \ldots,q_n:
\Result_n$ %
whose results can be combined via a \emph{stitching} operation
$\Qstitch : \Result_1 \times
\cdots \times \Result_n \to \Result$.
Thus, we can simulate the query $\Qcomp$ by running $q_1,\ldots,q_n$
remotely on the database and stitching the results together using  $\Qstitch$.
The number of intermediate queries $n$ is the \emph{nesting degree} of
$\Result$, that is, the number of collection type constructors in the
result type.  For
example, the nesting degree of $\bagtl\record{A:\bagtl \Int, B:\bagtl \String}$
is 3.  The nesting degree of the type $\Result$ is also 
3, which means $Q$ can be shredded into three flat queries.  

The basic idea is straightforward. Whenever a nested bag appears in
the output of a query, we generate an index that uniquely identifies
the current context. Then a separate query produces the contents of
the nested bag, where each element is paired up with its parent
index. Each inner level of nesting requires a further query.

We will illustrate by showing the results of the three queries and how
they can be combined to reconstitute the desired nested result.
The outer query $q_1$ contains one entry for each department, with an
index $\record{a,id}$ in place of each nested collection:
\[\small\bl
r_1 = \resvl\ba[t]{@{}l@{\,}l@{}}
         \recordl\lDepartment = \cStr{Product},  &\lPeople = \record{a,1}\recordr, \\ 
         \recordl\lDepartment = \cStr{Quality},  &\lPeople = \record{a,2}\recordr ,\\ 
         \recordl\lDepartment = \cStr{Research}, &\lPeople = \record{a,3}\recordr ,\\ 
         \recordl\lDepartment = \cStr{Sales},    &\lPeople = \record{a,4}\recordr\resvr \\ 
\ea
\el
\]

The second query $q_2$ generates the data needed for the $\lPeople$
collections:
\[\small\bl
r_2 = 
\resvl
\ba[t]{@{}l@{\,}l@{}}
  \tuplel\tuple{a,1},
          \recordl\lName=\cStr{Bert}, &\lTasks=\tuple{b,1,2}\recordr\tupler, \\
  \tuplel\tuple{a,4},
          \recordl\lName=\cStr{Erik}, &\lTasks=\tuple{b,4,5}\recordr\tupler, \\
  \tuplel\tuple{a,4},
          \recordl\lName=\cStr{Fred}, &\lTasks=\tuple{b,4,6}\recordr\tupler, \\
  \tuplel\tuple{a,1},
          \recordl\lName=\cStr{Pat},  &\lTasks=\tuple{d,1,2}\recordr\tupler, \\
  \tuplel\tuple{a,4},
          \recordl\lName=\cStr{Sue},  &\lTasks=\tuple{d,4,7}\recordr\tupler\resvr \\
\ea\el
\]
The idea is to ensure that we can stitch the results of $q_1$ together
with the results of $q_2$ by joining the \emph{inner indexes} of $q_1$
(bound to the $\lPeople$ field of each result) with the \emph{outer
  indexes} of $q_2$ (bound to the first component of each result). In
both cases the static components of these indexes are the same tag
$a$.  Joining the $\lPeople$ field of $q_1$ to the outer index of
$q_2$ correctly associates each person with the appropriate
department.

Finally, let us consider the results of the innermost query $q_3$ for
generating the bag bound to the $\lTasks$ field:
\[\small\bl
r_3 = \resvl\bl
  \tuple{\tuple{b,1,2}, \cStr{build}}, 
  \tuple{\tuple{b,4,5}, \cStr{call}}, 
  \tuple{\tuple{b,4,5}, \cStr{enthuse}}, \\
  \tuple{\tuple{b,4,6}, \cStr{call}}, 
  \tuple{\tuple{d,1,2}, \cStr{buy}}, 
  \tuple{\tuple{d,4,7}, \cStr{buy}}\resvr \\
\el\el
\]
Recall that $q_2$ returns further inner indexes
for the tasks associated with each person. The two halves of the union
have different static indexes for the tasks $b$ and $d$, because they
arise from different comprehensions in the source term. Furthermore,
the dynamic index now consists of two $\lId$ fields ($x.\lId$ and
$y.\lId$) in each half of the union.   Thus,
joining the $\lTasks$ field of $q_2$ to the outer index of $q_3$
correctly associates each task with the appropriate outlier.

Note that each of the queries $q_1,q_2,q_3$ produces records that
contain other records as fields.  This is not strictly allowed by SQL,
but it is straightforward to simulate such nested records by rewriting
to a query with no nested collections in its result type; this is
similar to Van den Bussche's simulation~\cite{vandenbussche01tcs}.
However, this approach incurs extra storage and query-processing cost.
Later in the paper, we explore an alternative approach which collapses
the indexes at each level to a pair $\record{a,i}$ of static index and
a single ``surrogate'' integer, similarly to Ferry's
approach~\cite{grust10pvldb}.  For example, using this approach we
could represent the results of $q_2$ and $q_3$ as follows:
\[\small
\bl
r_2' = \resvl
\ba[t]{@{}l@{\,}l@{}}
  \tuplel\tuple{a,1},
          \recordl\lName=\cStr{Bert}, &\lTasks=\tuple{b,1}\recordr\tupler, \\
  \tuplel\tuple{a,4},
          \recordl\lName=\cStr{Erik}, &\lTasks=\tuple{b,2}\recordr\tupler, \\
  \tuplel\tuple{a,4},
          \recordl\lName=\cStr{Fred}, &\lTasks=\tuple{b,3}\recordr\tupler, \\
  \tuplel\tuple{a,1},
          \recordl\lName=\cStr{Pat},  &\lTasks=\tuple{d,1}\recordr\tupler, \\
  \tuplel\tuple{a,4},
          \recordl\lName=\cStr{Sue},
          &\lTasks=\tuple{d,2}\recordr\tupler\resvr \\
\ea
\smallskip\\
r_3' = \resvl\bl
  \tuple{\tuple{b,1}, \cStr{build}}, 
  \tuple{\tuple{b,2}, \cStr{call}}, 
  \tuple{\tuple{b,2}, \cStr{enthuse}}, \\
  \tuple{\tuple{b,3}, \cStr{call}}, 
  \tuple{\tuple{d,1}, \cStr{buy}}, 
  \tuple{\tuple{d,2}, \cStr{buy}}\resvr \\
\el
\el
\]

The rest of this paper gives the details of the shredding translation,
explains how to stitch the results of shredded queries back together, and shows
how to use $\rownum$ to avoid the space overhead of indexes.  We
will return to the above example throughout the paper, and we will use $\Qorg$,
$\Qoutliers$ and other queries based on this example in the
experimental evaluation.


\section{Shredding translation}
\label{sec:shredding}

As a pre-processing step, we annotate each comprehension body in a
normalised term with a unique name $a$ --- the static component of an
index.  We write the annotations as superscripts, for example:
\[
\For\,(\vec{G}\,\Where\,X)\,\retann{M}{a}
\]
In order to shred nested queries, we introduce an abstract type
$\Index$ of \emph{indexes} for maintaining the correspondence between
outer and inner queries. An index $\Ind{a}{d}$ has a static component
$a$ and a dynamic component $d$. The static component $a$ links the
index to the corresponding $\retann{}{a}$ in the query. The dynamic
component identifies the current bindings of the variables in the
comprehension.

Next, we modify types so that bag types have an explicit index
component and we use indexes to replace nested occurrences of bags
within other bags:
\begin{syntax}
\textrm{Shredded types} & A, B &\cce& \bagt{\tuple{\Index, F}} \\
\textrm{Flat types}     & F    &\cce& O \mid \record{\ora{\cl:F}} \mid \Index \\
\end{syntax}%

We also adapt the syntax of terms to incorporate indexes.  After
shredding, terms will have the following forms:
 \begin{syntax}
\text{Query terms}      & L, M &\cce& \bigunion\, \vec{C} 
\sepskip\\
\text{Comprehensions}   & C    &\cce& \retann{\tuple{I, N}}{a} \\
                        &      &\mid& \For\,(\vec{G}\,\Where\,X)\,C \\
\text{Generators}       & G    &\cce& x \gets t \\
\text{Inner terms}      & N    &\cce& X \mid R \mid I \\
\text{Record terms}     & R    &\cce& \record{\ora{\cl=N}} \\
\text{Base terms}       & X    &\cce& x.\cl \mid c(\vec{X}) \mid \isEmpty{L} 
\sepskip\\
\text{Indexes}          & I, J &\cce& \Ind{a}{d} \\
\text{Dynamic indexes}  & d    &\cce& \iup \mid \idown \\
\end{syntax}%
A comprehension is now constructed from a sequence of generator
clauses of the form $\For\,(\vec{G}\,\Where\,X)$ followed by a body of
the form $\retann{\tuple{I,N}}{a}$. Each level of nesting gives rise
to such a generator clause. The body always returns a pair
$\tuple{I,N}$ of an outer index $I$, denoting where the result values
from the shredded query should be spliced into the final nested
result, and a (flat) inner term $N$.
Records are restricted to contain inner terms, which are either base
types, records, or indexes, which replace nested multisets.
We assume a distinguished top level static index $\top$, which allows
us to treat all levels uniformly. Each shredded term is associated
with an outer index $\iup$ and an inner index $\idown$. (In fact
$\iup$ only appears in the left component of a comprehension body, and
$\idown$ only appears in the right component of a comprehension
body. These properties will become apparent when we specify the
shredding transformation on terms.)

\begin{figure*}[tb]

\[\small
\ba{@{}c@{\qquad}c@{}}
\begin{iequations}
  \shouter{L}_p &=& \bigunion\,(\shlist{L}_{\top,p})
  \sepskip\\
  \shlist{\bigunion\iton\,C_i}_{a,p} &=&
     \Concat([\shlist{C_i}_{a,p}]\iton) \\
  \shlist{\record{\cl_i=M_i}\iton}_{a,{\cl_j}.p} &=& \shlist{M_j}_{a,p} \\
  \shlist{\For\,(\vec{G}\,\Where\,X)\,\retann{M}{b}}_{a,\epsilon} &=& 
     [\For\,(\vec{G}\,\Where\,X)\,\retann{\tuple{\Ind{a}{\iup},\shinner{M}_b}}{b}] \\
  \shlist{\For\,(\vec{G}\,\Where\,X)\,\retann{M}{b}}_{a,\pbag.p} &=& 
    [\For\,(\vec{G}\,\Where\,X)\,C \mid C \gets
      \shlist{M}_{b,p}] \\
\end{iequations}
&
\begin{iequations}
\shinner{x.\cl}_a &=& x.\cl \\
\shinner{c([X_i]\iton)}_a &=& c([\shinner{X_i}_a]\iton) \\
\shinner{\isEmpty{L}}_a &=& \isEmpty{\shouter{L}_\epsilon}
\sepskip\\
\shinner{\record{\cl_i=M_i}\iton}_a &=& \record{\cl_i=\shinner{M_i}_a}\iton \\
\shinner{L}_a &=& \Ind{a}{\idown} \\
\end{iequations} \\
\ea
\]

\caption{Shredding translation on terms\label{fig:shredding-translation}}

\end{figure*}

\subsection{Shredding types and terms}\label{subsec:shredding-types-terms}

We use \emph{paths} to point to parts of types.
\begin{syntax}
\text{Paths}    &p &\cce& \epsilon \mid \pbag.p \mid \cl.p \\
\end{syntax}
The empty path is written $\epsilon$. A path $p$ can be extended by
traversing a bag constructor ($\pbag.p$) or selecting a label
($\cl.p$).
We will sometimes
write $p.\pbag$ for the path $p$ with $\pbag$ appended at the end and
similarly for $p.\cl$; likewise, we will write $p.\vec{\cl}$ for
the path $p$ with all the labels of $\vec{\cl}$ appended. The function
$\paths(A)$ defines the set of paths to bag types in a type $A$:
\begin{equations}
\paths(O)                     &=& \set{} \\
\paths(\record{\cl_i:A_i}\iton) &=&
 \bigcup\iton \set{\cl_i . p \mid p \gets \paths(A_i)} \\
\paths(\bagt{A})               &=&
  \set{\epsilon} \cup \set{\pbag . p \mid p \gets \paths(A)} 
\end{equations}

We now define a shredding translation on types.  This is defined in terms of the \emph{inner shredding}
$\shinner{A}$, a flat type that represents the contents of a bag. 
\begin{equations}
  \shinner{O}                     &=& O \\
  \shinner{\record{\cl_i:A_i}\iton}
  &=& \record{\cl_i:\shinner{A_i}}\iton \\
  \shinner{\bagt{A}} &=& \Index
\end{equations}
Given a path $p \in
\paths(A)$, the type $\shouter{A}_p$ is the \emph{outer shredding} of
$A$ at $p$.  It corresponds to the bag at path $p$ in
$A$.  
\begin{equations}
\shouter{\bagt{A}}_\epsilon                  &=& \bagt{\tuple{\Index, \shinner {A}}} \\
\shouter{\bagt{A}}_{\pbag.p}                &=& \shouter{A}_p \\
\shouter{\record{\ora{\cl:A}}}_{{\cl_i}.p}  &=& \shouter{A_i}_p 
\end{equations}%

For example, consider the result type $\Result$ from
Section~\ref{sec:example}.  Its nesting degree is 3, and its paths are:
\[
\paths(\Result) =
  \set{\epsilon, 
       \pbag.\lPeople.\epsilon,
       \pbag.\lPeople.\pbag.\lTasks.\epsilon}
\]%
We can shred $\Result$ in three ways using these three paths, yielding
three shredded types:
\begin{equations}
A_1 &=& \shouter{\Result}_{\epsilon} \\
A_2 &=& \shouter{\Result}_{\pbag.\lPeople.\epsilon} \\
A_3 &=& \shouter{\Result}_{\pbag.\lPeople.\epsilon.\pbag.\lTasks.\epsilon} \\
\end{equations}%
or equivalently:
\begin{equations}
A_1&=&\bagtl
  \tuple{\Index, \record{\lDepartment:\String, \lPeople:\Index}}
\\
A_2 &=& 
\bagt{\tuple{\Index, \record{\lName:\String, \lTasks:\Index}}}
\\
A_3 &=&
\bagt{\tuple{\Index, \String}}
\end{equations}%


The shredding translation on terms $\shouter{L}_p$ is given in
Figure~\ref{fig:shredding-translation}. This takes a term $L$ and a
path $p$ and gives a query $\shouter{L}_p$ that computes a result of
type $\shouter{A}_p$, where $A$ is the type of $L$. The auxiliary
translation $\shlist{M}_{a,p}$ returns the shredded comprehensions of
$M$ along path $p$ with outer static index $a$. The auxiliary
translation $\shinner{M}_a$ produces a flat representation of $M$ with
inner static index $a$.
Note that the shredding translation is linear in time and space.
Observe that for emptiness tests we need only the top-level query.

\begin{sloppypar}
Continuing the example, we can shred $\Qcomp$ in three ways,
yielding shredded queries:
\[
\bl
q_1 = \shouter{\Qcomp}_{\epsilon} \\
q_2 = \shouter{\Qcomp}_{\pbag.\lPeople.\epsilon} \\
q_3 = \shouter{\Qcomp}_{\pbag.\lPeople.\epsilon.\pbag.\lTasks.\epsilon} \\
\el
\]
or equivalently:
\end{sloppypar}
\[\small
\ba{@{}r@{~}c@{~}l@{}}
  q_1 &=& \bl \For\,(x \gets \tDepartments) \\
  \retann{\tuple{\Unit,
                   \record{\lDepartment = x.\lName,
                           \lPeople     = \Ind{a}{\idown}}}}{a}
   \el
\smallskip\\
q_2 &=&
  (\bl
   \For\,(x \gets \tDepartments) \\
   \For\,(y \gets \tEmployees)\,\Where\,(x.\lName = y.\dpt ~\wedge\quad\\
   \hfill                                (y.\lSalary < 1000 \vee y.\lSalary > 1000000)) \\
   \retlann{b}\recordl
   \Ind{a}{\iup}, \record{\lName    = y.\lName,
                             \lTasks   = \Ind{b}{\idown}}
                               \recordr\retrann{b}) \\
  \el \\
  &\union&
  (\bl
     \For\,(x \gets \tDepartments) \\
     \For\,(y \gets \tContacts)\,
     \Where\,(x.\lName = y.\dpt \wedge y.\lClient) \\
     \retlann{d}\recordl
        \Ind{a}{\iup}, \record{\lName  = y.\lName,
                       \lTasks = \Ind{d}{\idown}}
                         \recordr\retrann{d}) \\
  \el
\smallskip \\
q_3 &=&
  (\bl
     \For\,(x \gets \tDepartments) \\
     \For\,(y \gets \tEmployees)\,\Where\,
                         (x.\lName = y.\dpt ~\wedge\quad \\
     \hfill               (y.\lSalary < 1000 \vee y.\lSalary > 1000000)) \\
     \For\,(z \gets \tTasks)\,\Where\,(z.\emp = y.\emp) \\
     \retann{\tuple{\Ind{b}{\iup}, z.\tsk}}{c}) \\
   \el \\ 
  &\union&
  (\bl
     \For\,(x \gets \tDepartments) \\
     \For\,(y \gets \tContacts)\,
     \Where\,(x.\lName = y.\dpt \wedge y.\lClient) \\
     \retann{\tuple{\Ind{d}{\iup}, \cStr{buy}}}{e}) \\
   \el
\ea
\]

As a sanity check, we show that well-formed normalised terms
shred to well-formed shredded terms of the appropriate shredded types.
We discuss other correctness properties of the shredding translation in
Section~\ref{sec:run-shredded}.
\begin{tronly}
  Typing rules for shredded terms are shown in
  Appendix~\ref{app:typing-rules}.
\end{tronly}
\begin{theorem}
  Suppose $L$ is a normalised flat-nested query with $\vdash L : A$ and
  $p \in \paths(A)$, then $\vdash \shouter{L}_p : \shouter{A}_p$.
\end{theorem}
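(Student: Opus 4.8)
The statement is a type-preservation result for the shredding translation, so the plan is a routine-looking but care-demanding structural induction. The subtlety is that $\shouter{L}_p$ is defined via two mutually recursive auxiliaries, $\shlist{\cdot}_{a,p}$ and $\shinner{\cdot}_a$, so a bare induction on $L$ is not enough: I would instead prove, simultaneously, three strengthened claims matching the three translations. Concretely: (i) if $\vdash L : A$ and $p \in \paths(A)$ then $\vdash \shouter{L}_p : \shouter{A}_p$; (ii) if $\vdash M : A$, $p \in \paths(A)$, and $a$ is a static index, then $\shlist{M}_{a,p}$ is a list of comprehensions each of type $\tuple{\Index, \shinner{A'}}$ where $A' $ is the element type of the bag at path $p$ in $A$ — more precisely each entry has the outer-shredding type $\shouter{A}_p$; (iii) if $\vdash X : O$ (a base term) then $\shinner{X}_a : O$, if $\vdash M : A$ with $A$ flat then $\shinner{M}_a : \shinner{A}$, and if $\vdash L : \bagt{B}$ then $\shinner{L}_a = \Ind{a}{\idown} : \Index$. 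The top-level claim (i) is then the composition of (ii) with the outermost $\bigunion$ and the $\top$ static index, using $\shouter{\bagt{A'}}_\epsilon = \bagt{\tuple{\Index,\shinner{A'}}}$.

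The proof proceeds by induction on the structure of the normalised term (equivalently, on the typing derivation), following the grammar of query terms $L$, comprehensions $C$, inner terms $N$, record terms $R$, and base terms $X$. For claim (iii): the cases $x.\cl$, $c(\vec X)$, and records are immediate from the inductive hypothesis and the definition of $\shinner{\cdot}$ on flat types; the case $\isEmpty{L}$ uses claim (i) at the empty path to get $\shouter{L}_\epsilon$ well-typed, hence $\isEmpty{\shouter{L}_\epsilon} : \Bool$; the case $\shinner{L}_a = \Ind{a}{\idown}$ is well-typed at $\Index$ by the typing rule for indexes (the dynamic component $\idown$ is always well-formed). For claim (ii): the union case $\shlist{\bigunion_i C_i}_{a,p}$ follows by concatenating the lists from the IH on each $C_i$ (all of the same type, since they all have type $\bagt{B}$ with the same $B$ in a well-typed union); the record case $\shlist{\record{\cl_i = M_i}}_{a,\cl_j.p}$ selects the $j$-th component and appeals to the IH at $M_j$ with the tail path $p$, noting that $\shouter{\record{\ora{\cl{:}A}}}_{\cl_j.p} = \shouter{A_j}_p$ and $\cl_j.p \in \paths(\record{\ora{\cl{:}A}})$ iff $p \in \paths(A_j)$; the two comprehension cases split on whether the remaining path is $\epsilon$ or $\pbag.p'$. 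In the $\epsilon$ case we must check that $\retann{\tuple{\Ind{a}{\iup}, \shinner{M}_b}}{b}$ has type $\tuple{\Index, \shinner{B}}$ where $\bagt{B}$ is the (flat, since path ends here) type of the body's collection: the outer index is well-typed, and $\shinner{M}_b : \shinner{B}$ by claim (iii). In the $\pbag.p'$ case, the body $M$ has a bag type whose element type has $p'$ in its paths; the IH (ii) on $M$ at $(b,p')$ gives a list of well-typed comprehensions $C$, and wrapping each in $\For(\vec G\,\Where\,X)$ preserves typing because the generators $\vec G$ bind the free variables of $C$ (the body was typed in the extended context, and $C$'s types do not mention those variables since shredded comprehension types are closed). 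The matching of $\shinner{\bagt{A}} = \Index$ with the replacement of nested bags by $\Ind{b}{\idown}$ is exactly what makes the $\epsilon$-case flat type check.

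The main obstacle is bookkeeping the interaction between paths and types: one must verify at every recursive call that the path passed down still lies in $\paths(\text{subtype})$ and that the claimed output type equals $\shouter{\cdot}_{\text{path}}$ applied to the corresponding subtype. This requires small lemmas relating $\paths$, $\shinner{\cdot}$, and $\shouter{\cdot}_{(-)}$ — in particular that $\shinner{A}$ is always a flat type $F$, that $\shouter{A}_p$ is always a bag-of-pair type, and that decomposing $p = \cl.p'$ or $p = \pbag.p'$ commutes with the obvious type decompositions. A secondary subtlety is the variable-scoping argument in the nested comprehension case: one must be precise that the free variables introduced by $\vec G$ may occur in $X$ and in the inner term $N$ but that the shredded comprehension's \emph{type} is independent of them, so re-wrapping under $\For(\vec G\,\Where\,X)$ is type-sound. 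None of this is deep, but it is where a sloppy proof would go wrong, so I would state the auxiliary path/type lemmas explicitly before running the induction.
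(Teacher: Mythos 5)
Your proof is correct. The paper states this theorem only as a sanity check and gives no proof of it (neither in the body nor in the appendices), so there is nothing to compare against; your mutual structural induction over the three translations $\shouter{-}_p$, $\shlist{-}_{a,p}$ and $\shinner{-}_a$, together with the path/type decomposition lemmas and the context-extension point for re-wrapping under $\For(\vec{G}\,\Where\,X)$, is exactly the argument the statement requires. The only blemish is the parenthetical in the $\epsilon$-comprehension case claiming the body type $B$ is flat --- it need not be, only $\shinner{B}$ is --- but you correct this yourself at the end of that paragraph, so nothing is broken.
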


\subsection{Shredded packages}
\label{subsec:shredded-packages}

\newcommand{\package}{\mathit{package}}
\newcommand{\shred}{\mathit{shred}}
\newcommand{\shredT}{\mathit{Shred}}
\newcommand{\erase}{\mathit{erase}}

\newcommand{\shann}[2]{({#1})^{#2}}

\newcommand{\sht}[1]{\hat{#1}} 

To maintain the relationship between shredded terms and the structure
of the nested result they are meant to construct, we use
\emph{shredded packages}.  A shredded package $\sht{A}$ is a nested
type with annotations, denoted $\shann{-}{\alpha}$, attached to each
bag constructor.
\begin{syntax}
& \sht{A} &\cce& O \mid \record{\ora{\cl:\sht{A}}} \mid \shann{\bagt{\sht{A}}}{\alpha} \\
\end{syntax}%
For a given package, the annotations are drawn from the same set. We
write $\sht{A}(S)$ to denote a shredded package with annotations drawn
from the set $S$. 
We sometimes omit the type
parameter when it is clear from context. \begin{tronly} Typing rules
  for shredded packages are shown in Appendix~\ref{app:typing-rules}.
\end{tronly}

Given a shredded package $\sht{A}$, we can erase its annotations to
obtain its underlying type.
\begin{equations}
\erase(O)                             &=& O \\
\erase(\record{\cl_i:\sht{A_i}}\iton)    &=& \record{\cl_i:\erase(\sht{A_i})}\iton \\
\erase(\shann{\bagt{\sht{A}}}{\alpha}) &=& \bagt{(\erase(\sht{A}))} 
\end{equations}%
%

\begin{tronly}
Given a type $A$ and a shredding function $f : \paths(A) \to S$, we
can construct a shredded package $\sht{A}(S)$.
\begin{equations}
\package_{f}(A)                       &=& \package_{f,\epsilon}(A) 
\sepskip\\
\package_{f,p}(O)                     &=& O \\
\package_{f,p}(\record{\cl_i:A_i}\iton)
  &=& \record{\cl_i:\package_{f,p.\cl_i}(A_i)}\iton \\
\package_{f,p}(\bagt{A})               &=& \shann{\bagt{(\package_{f,p.\pbag}(A))}}{f(p)} 
\end{equations}%


Using $\package$, we lift the type- and term-level shredding
functions $\shouter{-}_{-}$ 
to produce shredded packages, where each annotation contains the
shredded version of the input type or query along the path to the
associated bag constructor.
\begin{equations}
\shred_{B}(A) &=& \package_{(\shouter{B}_{-})}(A) 
\\
\shred_L(A) &=& \package_{(\shouter{L}_{-})}(A) 
\end{equations}%
\end{tronly}
\begin{subonly}
  We lift the type shredding function $\shouter{A}$ to produce
  a shredded package $\shredT(A)$, where each annotation contains the shredded
  version of the input type or query along the path to the associated
  bag constructor.  We define $\shredT(A) = \shredT_\epsilon(A)$ as follows:
  \begin{equations}
\shredT_{p}(O)                     &=& O \\
\shredT_{p}(\record{\cl_i:A_i}\iton)
  &=& \record{\cl_i:\shredT_{p.\cl_i}(A_i)}\iton \\
\shredT_{p}(\bagt{A})               &=& \shann{\bagt{(\shredT_{p.\pbag}(A))}}{\shouter{B}_p} 
\end{equations}%
Similarly, we lift the term shredding function$\shouter{L}$ to produce
a shredded query package $\shred(L) = \shred_\epsilon(L)$, where:
 \begin{equations}
\shred_{p}(O)                     &=& O \\
\shred_{p}(\record{\cl_i:A_i}\iton)
  &=& \record{\cl_i:\shred_{p.\cl_i}(A_i)}\iton \\
\shred_{p}(\bagt{A})               &=& \shann{\bagt{(\shred_{p.\pbag}(A))}}{\shouter{L}_p} 
\end{equations}%
\end{subonly}

  For example, the shredded package for the $\Result$ type from
  Section~\ref{sec:example} is:
  \[\small
  \bl \shredT_\Result(\Result) = \\
  \quad \bagtl\recordl \bl
        \lDepartment:\String, \\
        \lPeople:\bagtl\recordl \bl
        \lName:\String, \\
        \lTasks:(\bagt{\String})^{A_3}\recordr\bagtr^{A_2}\recordr\bagtr^{A_1} \\
        \el \\
  \el
  \el
  \]
  where $A_1,A_2$, and $A_3$ are as shown in
  Section~\ref{subsec:shredding-types-terms}.  Shredding the
  normalised query $Q'$ gives the same package, except the type
  annotations $A_1,A_2,A_3$ become queries $q_1,q_2,q_3$.

Again, as a sanity check we show that erasure is the left inverse of
type shredding and that term-level shredding operations preserve
types. 
\begin{theorem}
  For any type $A$, we have $ \erase(\shred_{A}(A)) = A$.
  Furthermore, if $L$ is a closed, normalised, flat--nested query such
  that $\vdash L : A$ then $\vdash \shred_L(A) : \shred_{A}(A)$.
\end{theorem}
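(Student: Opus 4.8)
The plan is to prove both halves of the statement by structural induction on the type $A$ (for the first half) and then by structural induction on the normalised query $L$ (for the second half), exploiting the fact that $\shred_A$ and $\shred_L$ are both defined in terms of the single lifting operator $\package_{f}$, which itself recurses on the structure of $A$. The key observation that makes everything go through is that $\package_f$, $\erase$, and $\shredT$ (or rather the indexed family $\package_{f,p}$) all recurse on $A$ in lockstep, so the induction hypothesis applies to exactly the subterms that appear on the right-hand sides of the defining equations.

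For the first half, $\erase(\shred_A(A)) = A$, I would actually strengthen the statement to the form needed for the induction: for every type $A$ and every path prefix $p$, $\erase(\package_{(\shouter{A}_{-}),p}(A)) = A$. The base case $A = O$ is immediate since $\package_{f,p}(O) = O = \erase(O)$. For records, $\erase(\package_{f,p}(\record{\ora{\cl:A}})) = \erase(\record{\cl_i : \package_{f,p.\cl_i}(A_i)}\iton) = \record{\cl_i : \erase(\package_{f,p.\cl_i}(A_i))}\iton$, which equals $\record{\cl_i : A_i}\iton$ by the induction hypothesis applied at the extended paths $p.\cl_i$. For bags, $\erase(\package_{f,p}(\bagt{A})) = \erase(\shann{\bagt{(\package_{f,p.\pbag}(A))}}{f(p)}) = \bagt{(\erase(\package_{f,p.\pbag}(A)))} = \bagt{A}$ by the induction hypothesis; crucially, $\erase$ discards the annotation $f(p)$ entirely, so the particular choice $f = \shouter{A}_{-}$ is irrelevant to this half — any $f$ works, which is why the statement holds.

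For the second half, the goal is $\vdash \shred_L(A) : \shred_A(A)$, where $A$ is the type of $L$ and $\shred_A(A) = \package_{(\shouter{A}_{-})}(A)$ is the package of shredded types. Here I would unfold the typing judgement for shredded packages (referring to the typing rules for shredded packages in the appendix): a shredded query package $\shred_L(A)$ is well-typed against $\shred_A(A)$ exactly when, at each path $p \in \paths(A)$, the annotating query $\shouter{L}_p$ has the annotating type $\shouter{A}_p$ — that is, $\vdash \shouter{L}_p : \shouter{A}_p$. But this is precisely the content of the preceding theorem (well-formed normalised terms shred to well-formed shredded terms of the appropriate shredded types): since $L$ is normalised, flat--nested, and closed with $\vdash L : A$, for every $p \in \paths(A)$ we have $\vdash \shouter{L}_p : \shouter{A}_p$. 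Combined with the structural recursion of $\package$, which guarantees that $\shred_L(A)$ and $\shred_A(A)$ have the same underlying record/bag skeleton (this is again the first half, with $f$ instantiated at $\shouter{L}_{-}$ and $\shouter{A}_{-}$ respectively, both of which have domain $\paths(A)$), the typing of the whole package follows by a routine induction on $A$ that threads the path argument.

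The main obstacle I anticipate is bookkeeping rather than mathematical depth: one has to be careful that the paths enumerated by $\paths(A)$ are exactly the positions at which $\package_{f,\epsilon}$ evaluates $f$, so that $f = \shouter{L}_{-}$ is only ever applied to legitimate paths in $\paths(A)$ and the prior theorem is applicable at each such point. This requires a small auxiliary lemma: for all $A$ and all $p$, the set of annotation positions reached by $\package_{f,p}(A)$ is $\{p.q \mid q \in \paths(A)\}$, proved by the same structural induction. Once that correspondence is pinned down, both parts reduce to mechanical application of the induction hypothesis and the already-established shredding-preserves-types theorem, with $\erase$'s indifference to annotations doing the work in the first part and the annotation-wise typing condition doing the work in the second.
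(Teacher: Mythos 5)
Your proposal is correct and matches the intended argument: the paper states this theorem only as a sanity check and gives no explicit proof, and your structural induction on $A$ (with the observation that $\erase$ discards annotations, so the choice of annotation function $f$ is irrelevant to the first half) together with the appeal to the preceding theorem giving $\vdash \shouter{L}_p : \shouter{A}_p$ at each $p \in \paths(A)$ is exactly the routine argument the authors leave implicit. The auxiliary bookkeeping lemma you identify --- that the annotation positions reached by the $\package$ recursion are precisely $\paths(A)$ --- is the right thing to pin down and is all that is needed.
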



\section{Query evaluation and stitching}
\label{sec:run-shredded}
\renewcommand{\ann}[1]{}

Having shredded a normalised nested query, we can then run all of the
resulting shredded queries separately. If we stitch the shredded
results together to form a nested result, then we obtain the same
nested result as we would obtain by running the nested query
directly. In this section we describe how to run shredded queries and
stitch their results back together to form a nested result.

\begin{figure*}[tb]
\[\small
\ba{@{}c@{\quad}c@{\quad}c@{}} \ba{r@{~}c@{~}l}
  \ssem{L} &=& \ssem{L}_{\varepsilon,1} \\
  \ea
&
  \ba{r@{~}c@{~}l}
  \ssem{\record{\cl=N}\iton}_{\rho,\iota} &=& \record{\cl_i=\ssem{N_i}_{\rho,\iota}}\iton \\
  \ssem{X}_{\rho,\iota} &=& \nsem{X}_\rho \\
  \ea
&
  \ba{r@{~}c@{~}l}
  \ssem{\Ind{a}{\iup}}_{\rho,\iota.i}        &=& \key(\Ind{a}{\iota}) \\
  \ssem{\Ind{a}{\idown}}_{\rho,\iota.i}      &=& \key(\Ind{a}{\iota.i}) \\
  \ea\\
\ea
\]
\[\small
\bl
\ssemcomp{\bigunion\iton C_i}_{\rho,\iota} = \Concat([\ssemcomp{C_i}_{\rho,\iota}]\iton) \hfill
\ssemcomp{\retann{N}{a}}_{\rho,\iota} = [\bsem{N}_{\rho,\iota} \ann{\key(\Ind{a}{\iota})}]
\smallskip\\
\ssemcomp{\For\,([x_i \gets t_i]\iton~\Where\,X)\,C}_{\rho,\iota} = 
\Concat([\ssemcomp{C}_{\rho[x_i \mapsto r_i]\iton,\iota.j} 
   \mid \tuple{j, \vec{r}} \gets 
      \enum([\vec{r} \mid [r_i \gets \tsem{t_i}]\iton, \nsem{X}_{\rho[x_i \mapsto r_i]\iton}])]) \\
\el
\]
\caption{Semantics of shredded queries}
\label{fig:shredded-semantics}
\end{figure*}

\subsection{Evaluating shredded queries}
\label{subsec:shredded-semantics}

The semantics of shredded queries $\ssem{-}$ is given in
Figure~\ref{fig:shredded-semantics}.
Apart from the handling of indexes, it is much like the semantics for
nested queries given in Figure~\ref{fig:bag-semantics}.
To allow different implementations of indexes, we first define a
canonical notion of index, and then parameterise the semantics by the
concrete type of indexes $X$ and a function $\key : \Index \to X$
mapping each canonical index to a distinct concrete index.
A canonical index $\Ind{a}{\iota}$ comprises static index $a$ and
dynamic index $\iota$, where the latter is a list of positive natural
numbers. For now we take concrete indexes simply to be canonical
indexes, and $\key$ to be the identity function. We consider other
definitions of $\key$ in Section~\ref{sec:indexing-schemes}.

The current dynamic index $\iota$ is threaded through the semantics in
tandem with the environment $\rho$. The former encodes the position of
each of the generators in the current comprehension and allows us to
invoke $\key$ to construct a concrete index. The outer index at
dynamic index $\iota.i$ is $\iota$; the inner index is $\iota.i$.
In order for a comprehension to generate dynamic indexes we use the
function $\enum$ (introduced in Section~\ref{sec:background}) that
takes a list of elements and returns the same list with the element
number paired up with each source element.

Running a shredded query yields a list of pairs of indexes and
shredded values.
\begin{syntax}
\text{Results} & s     &\cce& [\tuple{I_1,\iv_1}\ann{J_1}, \dots,
\tuple{I_m,\iv_m}\ann{J_m}] \\
\text{Flat values}      & \iv   &\cce& c \mid \record{\cl_1=\iv_1, \dots,
  \cl_n=\iv_n} \mid I
\end{syntax}%



Given a shredded package $\sht{A}(S)$ and a function $f : S \to T$, we
can map $f$ over the annotations to obtain a new shredded package
$\sht{A'}(T)$ such that $\erase(\sht{A}) = \erase(\sht{A'})$.
\begin{equations}
\pmap_{f}(O)                              &=& O \\
\pmap_{f}(\record{\cl_i:\sht{A_i}}\iton)
  &=& \record{\cl_i:\pmap_{f}(\sht{A_i})}\iton \\
\pmap_{f}(\shann{\bagt{\sht{A}}}{\alpha}) &=& \shann{(\bagt{\pmap_{f}(\sht{A})})}{f(\alpha)} 
\end{equations}%

  \begin{sloppypar}
    The semantics of a shredded query package is a \emph{shredded
      value package} containing indexed results for each shredded
    query.  For each type $A$ we define $\psem{A} = \shred_A(A)$ and
    for each flat--nested, closed $\vdash L : A$ we define $\psem{L}_A
    : \psem{A}$ as $\pmap_{\ssem{-}}~(\shred_L(A))$.  In other words,
    we first construct the shredded query package $\shred_L(A)$, then
    apply the shredded semantics $\ssem{q}$ to each query $q$ in the package.
  \end{sloppypar}

For example, here is the shredded package that we obtain after running
the normalised query $\Qcomp$ from Section~\ref{subsec:normalisation}:
\begin{equations}
\psem{\Qcomp}_A &=& \bagtl\tuplel
   \bl
     \lDepartment:\String, \\
     \lPeople:\bagtl\recordl
       \bl
         \lName:\String, \\
         \lTasks:(\bagt{\String})^{r_3}\recordr\bagtr^{r_2}\tupler\bagtr^{r_1} \\
       \el \\
   \el \\
\end{equations}%
where $r_1$, $r_2$, and $r_3$ are as in Section~\ref{sec:example}
except that indexes are of the form $\Ind{a}{1.2.3}$ instead of
$\tuple{a,1,2,3}$.

\subsection{Stitching shredded query results together}

A shredded value package can be \emph{stitched} back together into a
nested value, preserving annotations, as follows:
\begin{equations}
\buildtop{\sht{A}} &=&
  \build{\Ind{\top}{1}}{\sht{A}} 
\smallskip\\
\build{c}{O}                                              &=& c \\
\build{r}{\record{\cl_i:\sht{A_i}}\iton}  &=&
  \record{\cl_i=\build{r.\cl_i}{\sht{A_i}}}\iton \\
\build{I}{\shann{\bagt{\sht{A}}}{s}} &=& 
    [(\build{\iv}{\sht{A}}) \mid \tuple{I, \iv} \gets s] 
\end{equations}%
The  flat value parameter $\iv$ to the auxiliary function
$\build{\iv}{-}$ specifies which values to stitch along the current
path.

Resuming our running example, once the results $r_1:A_1,r_2:
A_2,r_3:A_3$ have been evaluated on the database, they are
shipped back to the host system where we can run the following code
in-memory to stitch the three tables together into a single
value: the result of the original nested query. The code for this
query $\Qstitch$
follows the same idea as the query $\Qorg$ that constructs the
nested $\tOrganisation$ from $\Sigma$.
\[\small
\bl
  \For\,(x \gets r_1) \\
  \retl\recordl
    \bl
      \lDepartment   = x.\lName, \\
      \lPeople = \bl
           \For\,(\tuple{i, y} \gets r_2)\\
           \Where\,(x.\lPeople = i)) \\
           \retl\recordl
             \bl
               \lName  = y.\lName, \\
               \lTasks = \bl
                   \For\,(\tuple{j, z} \gets r_3)\\
                   \Where\,(y.\lTasks = j) \\
                   \ret{z}\recordr\retr\recordr\retr \\
                 \el \\
             \el \\
        \el \\
    \el \\
\el
\] 


We can now state our key correctness property: evaluating shredded
queries and stitching the results back together yields the same
results as evaluating the original nested query directly.
\begin{theorem}
\label{th:shred-stitch}
If $\vdash L : \bagt{A}$ then: \[\buildtop{\psem{L}_{\bagt{A}}} = \nsem{L}\]
\end{theorem}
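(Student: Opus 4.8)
The plan is to relate, for every path $p$ into the result type, the flat list computed by the shredded query $\ssem{\shouter{L}_p}$ to the nested value $\nsem{L}$, and then to show that $\buildname$ reverses this relation exactly. The core is a characterisation proved by structural induction on the normalised term $L$, with the path $p$, the static index $a$, the environment $\rho$ and the dynamic index $\iota$ all generalised: $\ssemcomp{\bigunion(\shlist{L}_{a,p})}_{\rho,\iota}$ consists precisely of the pairs $\tuple{\Ind{a}{\iota'},\iv}$, one per element $v'$ of the bag occurrence reached from $\nsem{L}_\rho$ by following the record-label part of $p$, where $\iota'$ is obtained from $\iota$ together with the generator positions exactly as the $\enum$/$\key$ discipline of Figure~\ref{fig:shredded-semantics} prescribes and $\iv$ is $v'$ with each immediately-nested bag replaced by its own index. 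The cases mirror Figure~\ref{fig:shredding-translation}: $\bigunion$ matches $\Concat$; the record case $p = \cl_j.p'$ is a projection on both sides; for a comprehension at $p = \epsilon$ the body $\retann{\tuple{\Ind{a}{\iup},\shinner{M}_b}}{b}$ evaluated at dynamic index $\iota.j$ gives outer index $\key(\Ind{a}{\iota})$ while $\shinner{M}_b$ turns each nested bag $L'$ into $\Ind{b}{\idown}$, i.e.\ $\key(\Ind{b}{\iota.j})$; and for a comprehension at $p = \pbag.p'$ the translation re-attaches the generators and recurses into the body with the new static index $b$, matching the recursive call of the characterisation. A subsidiary structural induction on base and record terms shows $\shinner{X}_a$ agrees with $\nsem{X}_\rho$, in particular $\shinner{\isEmpty{L'}}_a = \isEmpty{\shouter{L'}_\epsilon}$ computes the same boolean because $\ssem{\shouter{L'}_\epsilon}$ is empty iff $\nsem{L'}_\rho$ is --- itself an instance of the characterisation.

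What makes this induction close is that the static components of indices are globally unique (the pre-processing annotates each comprehension body with a distinct name): the indices produced by different branches of a $\bigunion$, and from different return-sites generally, are disjoint, so grouping the shredded output by parent index never mixes rows from distinct sources and each group reproduces exactly one nested sub-bag. Because multiset union is commutative, the reorderings that flattening introduces --- concatenating over branches outermost, and $\Concat_C\Concat_j$ where the nested semantics would do $\Concat_j\Concat_C$ --- do not disturb the characterisation, which I would therefore state up to multiset equivalence (the paper's notion of bag equality).

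The second stage shows $\buildname$ reverses the characterisation: if a shredded value package has, at each path $p$, the list just described for a nested value $v : \bagt{A}$, then $\build{\Ind{\top}{1}}{-}$ applied to it returns $v$. This is an induction on $\bagt{A}$: at an annotated bag $\shann{\bagt{\sht{A}}}{s}$ with incoming index $I$, $\build{I}{-}$ keeps exactly the rows of $s$ whose first component is $I$ --- by the characterisation and injectivity of $\key$ these are precisely the elements of the corresponding sub-bag of $v$ --- and recurses on each; the record and base cases are immediate. Combining the two stages: the annotation of $\psem{L}_{\bagt{A}} = \pmap_{\ssem{-}}(\shred_L(\bagt{A}))$ at path $p$ is $\ssem{\shouter{L}_p} = \ssemcomp{\bigunion(\shlist{L}_{\top,p})}_{\varepsilon,1}$, which by the first stage (instantiated at $a = \top$, $\rho = \varepsilon$, $\iota = 1$) is the characterising list for $v = \nsem{L}$; hence by the second stage $\buildtop{\psem{L}_{\bagt{A}}} = \nsem{L}$.

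The main obstacle is the index bookkeeping: checking that the dynamic indices manufactured by the shredded semantics, the static annotations threaded through the shredding translation, and the indices consulted by $\buildname$ all name the same positions, and that grouping by index in $\buildname$ recovers each nested bag correctly even though flattening has reordered contributions belonging to different parents. Stating this invariant precisely enough to carry the structural induction --- especially across $\bigunion$ and nested comprehensions --- is where the real effort lies; as a bonus one can check that element order within each bag is preserved and not merely the multiset, since $\enum$ assigns positions monotonically.
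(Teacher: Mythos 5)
Your proposal is correct and follows essentially the same route as the paper: your ``characterisation'' of $\ssemcomp{\bigunion(\shlist{L}_{a,p})}_{\rho,\iota}$ is precisely the paper's commutation theorem between shredding and an \emph{annotated semantics} (nested values with index-tagged bag elements), proved by the same structural induction generalised over path, static index, environment and dynamic index, with the same subsidiary induction for inner terms and $\isEmpty$; and your second stage is the paper's lemma that stitching is a left inverse of value-level shredding on well-indexed values. The only difference is presentational --- the paper reifies the indexed nested value as a separate semantics $\asem{-}$ and a shredding function on values rather than stating the correspondence inline --- and your observation that the commuted concatenations force the statement to be read up to multiset equivalence is a fair (and correct) refinement of a step the paper glosses over.
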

\begin{proof}[Sketch]
  We omit the full proof due to space limits; it is available in the
  full version of this paper.  The proof introduces several
  intermediate definitions.  Specifically, we consider an
  \emph{annotated semantics} for nested queries in which each
  collection element is tagged with an index, and we show that this
  semantics is consistent with the ordinary semantics if the
  annotations are erased.  We then prove the correctness of shredding
  and stitching with respect to the annotated semantics, and the
  desired result follows.
\end{proof}

\renewcommand{\ann}[1]{{\color{red}@{#1}}}


\section{Indexing schemes}
\label{sec:indexing-schemes}

\begin{sloppypar}
So far, we have worked with canonical indexes of the form
$\Ind{a}{1.2.3}$.  These could be represented in SQL by using multiple
columns (padding with NULLs if necessary) since for a given query the
length of the dynamic part is bounded by the number of
$\For$-com\-pre\-hensions in the query.  This imposes space and running
time overhead due to constructing and maintaining the indexes.
Instead, in this section we consider alternative, more compact
indexing schemes.
\end{sloppypar}

We can define alternative indexing schemes by varying the $\key$ parameter
of the shredded semantics (see
Section~\ref{subsec:shredded-semantics}). Not all possible
instantiations are valid. To identify those that are, we first define
a function for computing the canonical indexes of a nested query result.
\begin{equations}
\isem{L} &=& \isem{L}_{\varepsilon,\DUnit}
\smallskip\\
\isem{\bigunion\iton\,C_i}_{\rho,\iota}
  &=& \Concat([\isem{C_i}_{\rho,\iota}]\iton) \\
\isem{\record{\cl_i=M_i}\iton}_{\rho,\iota}
  &=& \Concat([\isem{M_i}_{\rho,\iota}]\iton) \\
\isem{X}_{\rho,\iota} &=& \nil
\sepskip\\
\multicolumn{3}{l}
{\bl
 \isem{\For\,([x_i \gets t_i]\iton~\Where\,X)\,\retann{M}{a}}_{\rho,\iota} = \\
 \quad\Concat([\Ind{a}{\iota.j} \cons \isem{M}_{\rho[x_i \mapsto r_i]\iton,\iota.j} \\
 \quad\quad\mid \tuple{j, \vec{r}} \gets 
      \enum([\vec{r} \mid [r_i \gets \tsem{t_i}]\iton, \nsem{X}_{\rho[x_i \mapsto r_i]\iton}])]) \\
 \el}
\end{equations}%
An indexing function $\key : \Index \to X$ is
\emph{valid} with respect to the closed nested query $L$ if it is
injective and defined on every canonical index in $\isem{L}$.
The only requirement on indexes in the proof of
Theorem~\ref{th:shred-stitch} is that $\key$ is valid.
We consider two alternative valid indexing schemes: 
\emph{natural} and \emph{flat
  indexes}.

\newcommand{\flatkey}{\key^\flat}

\subsection{Natural indexes}  
Natural indexes are synthesised from row
data. In order to generate a natural index for a query every table
must have a key, that is, a collection of fields guaranteed to be
unique for every row in the table. For sets, this is always possible
by using all of the field values as a key; this idea is used in Van
den Bussche's simulation for sets~\cite{vandenbussche01tcs}.  However,
for bags this is not always possible, so using natural indexes may
require adding extra key fields.  

\newcommand{\naturalkey}{\key^\natural}
\newcommand{\pkey}{\mathit{key}}
Given a table $t$, let $\pkey_{t}$ be the function that given a row
$r$ of $t$ returns the key fields of $r$.
We now define a function to compute the list of natural indexes for a
query $L$.
\begin{equations}
\inatsem{L} &=& \inatsem{L}_{\varepsilon}
\smallskip\\
\inatsem{\bigunion\iton\,C_i}_{\rho}
  &=& \Concat([\inatsem{C_i}_{\rho}]\iton) \\
\inatsem{\record{\cl_i=M_i}\iton}_{\rho}
  &=& \Concat([\inatsem{M_i}_{\rho}]\iton) \\
\inatsem{X}_{\rho} &=& \nil
\smallskip\\
\multicolumn{3}{l}
{\bl
 \inatsem{\For\,([x_i \gets t_i]\iton~\Where\,X)\,\retann{M}{a}}_{\rho} = \\
 \quad\Concat([\Ind{a}{\tuple{\pkey_{t_i}(r_i)}}\iton \cons \inatsem{M}_{\rho[x_i \mapsto r_i]\iton} \\
 \qquad\qquad\qquad\mid  
     [r_i \gets \tsem{t_i}]\iton, \nsem{X}_{\rho[x_i \mapsto r_i]\iton}]) \\
 \el}
\end{equations}%
If $\Ind{a}{\iota}$ is the $i$-th element of $\isem{L}$, then
$\naturalkey_L(\Ind{a}{\iota})$ is defined as the $i$-th element of
$\inatsem{L}$.
The natural indexing scheme is defined by setting $\key =
\naturalkey_L$.

An advantage of natural indexes is that they can be implemented in
plain SQL, so for a given comprehension all where clauses can be
amalgamated (using the $\wedge$ operator) and no auxiliary subqueries
are needed. The downside is that the type of a dynamic index may still
vary across the component comprehensions of a shredded query,
complicating implementation of the query (due to the need to pad some
subqueries with null columns) and potentially decreasing performance
due to increased data movement.  We now consider an alternative, in
which $\rownum$ is used to generate dynamic indexes.

\subsection{Flat indexes and let-insertion}
\label{subsec:flat-indexes}

The idea of flat indexes is to enumerate all of the canonical dynamic
indexes associated with each static index and use the enumeration as
the dynamic index.
%
%

Let $\iota$ be the $i$-th element of the list $[\iota' \mid
  \Ind{a}{\iota'} \gets \isem{L}]$, then $\flatkey_L(\Ind{a}{\iota}) =
\tuple{a, i}$.
The flat indexing scheme is defined by setting $\key = \flatkey_L$.
Let $\iflatsem{L} = [\flatkey_L(I) \mid I \gets \isem{L}]$ and let
$\sflatsem{L}$ be $\ssem{L}$ where $\key = \flatkey_L$.

In this section, we give a translation called let-insertion that uses
let-binding and an $\cindex$ primitive to manage flat indexes.  In the
next section, we take the final step from this language to SQL.

Our semantics for shredded queries uses canonical indexes. We now
specify a target language providing flat indexes. In order to do so, we introduce
$\Let$-bound sub-queries, and translate each comprehension into the
following form:
\[
  \bl
    \Let~q = \For\,(\ora{G_\iup}\,\Where\,X_\iup)\,\ret{N_\iup}\,\In \\
    \quad \For\,(\ora{G_\idown}\,\Where\,X_\idown)\,\ret{N_\idown}
  \el
\]
The special $\cindex$ expression is available in each loop body, and
is bound to the current index value.

Following let-insertion, the types are as before, except indexes are
represented as pairs of integers $\tuple{a,d}$ where $a$ is the static
component and $d$ is the dynamic component.
\begin{syntax}
\textrm{Types}       & A, B    &\cce& \bagt{\tuple{\tuple{\Int, \Int}, F}} \\
\textrm{Flat types}  & F       &\cce& O \mid \record{\ora{\cl:F}}
                                \mid \tuple{\Int, \Int} \\
\end{syntax}%

The syntax of terms is adapted as follows:
 \begin{syntax}
\text{Query terms}         & L, M &\cce& \bigunion\,\vec{C} 
\sepskip\\
\text{Comprehensions}      & C    &\cce& \Let~q = S\,\In\,S' \\
\text{Subqueries}          & S    &\cce& \For\,(\vec{G}\,\Where\,X)\,\ret{N} \\
\text{Data sources}        & u    &\cce& t \mid q \\
\text{Generators}          & G    &\cce& x \gets u \\
\text{Inner terms}         & N    &\cce& X \mid R \mid \cindex \\
\text{Record terms}        & R    &\cce& \record{\ora{\cl=N}} \\
\text{Base terms}          & X    &\cce& x.\vec{\cl} \mid c(\vec{X}) \mid \isEmpty{L} \\
\end{syntax}

\begin{figure*}[tb!]
\[\small
\ba{@{}cccc@{}}
\begin{iequations}
\lsem{L}                       &=& \lsem{L}_\varepsilon \\
\lsem{\bigunion\jtom\,C_j}_\rho &=& \bagv{\Concat([\lsemcomp{C_j}_\rho]\jtom)} \\
\end{iequations}
&
\begin{iequations}
\lsem{t}_\rho &=& \tsem{t} \\
\lsem{q}_\rho &=& \rho(q) \\
\end{iequations}
&
\begin{iequations}
\lsem{\record{\cl_j=N_j}\jtom}_{\rho,i} &=& \record{\cl_j=\lsem{N_j}_{\rho,i}}\jtom  \\
\lsem{X}_{\rho,i}                       &=& \nsem{X}_\rho\\
\end{iequations}
&
\begin{iequations}
\lsem{\cindex}_{\rho,i} &=& i \\
\end{iequations}
\ea
\]
\small
\begin{equations}
  \lsemcomp{\Let\,q = S_\iup\,\In\,S_\idown}_\rho &=& \lsemcomp{S_\idown}_{\rho[q \mapsto \lsemcomp{S_\iup}_\rho]} \\
  \lsemcomp{\For\,([x_j \gets u_j]\jtom~\Where\,X)\,\ret{N}}_\rho &=& 
  [\lsem{N}_{\rho[x_j \mapsto r_j]\jtom,i} 
   \mid \tuple{i, \vec{r}} \gets 
        \enum([\vec{r} \mid [r_j \gets \lsem{u_j}_\rho]\jtom, \nsem{X}_{\rho[x_j \mapsto r_j]\jtom}])] \\
\end{equations}%
\caption{Semantics of let-inserted shredded queries
         \label{fig:let-inserted-semantics-flat}}
\end{figure*}%
The semantics of let-inserted queries is given in
Figure~\ref{fig:let-inserted-semantics-flat}. Rather than maintaining
a canonical index, it generates a flat index for each subquery.

We first give the  translation on shredded types as follows:
\begin{equations}
\lins(O) &=& O\\
\lins(\record{\ora{\cl:F}}) &=& \record{\ora{\cl:\lins(F)}}\\
\lins(\Index) &=&  \tuple{\Int, \Int}\\
\lins(\bagt{\tuple{\Index, F}}) &=& \bagt{\tuple{\tuple{\Int, \Int}, \lins( F)}} 
\end{equations}
For example:
\[\lins(A_2) =
\bagt{\tuple{\tuple{\Int,\Int},\record{\lName:\String,\lTasks:\tuple{\Int,\Int}}}}\]

\newcommand{\collapse}[1]{\mathopen{\parallel}{#1}\mathclose{\parallel}}
\newcommand{\mkfor}{\mathit{for}}

\newcommand{\len}{\mathit{length}}
\newcommand{\expand}{\mathit{expand}}
\newcommand{\gens}{\mathit{gens}}
\newcommand{\conds}{\mathit{conds}}
\newcommand{\body}{\mathit{body}}
\newcommand{\gs}{\mathit{gs}}

\begin{figure*}[tb]
\[\small\bl
\begin{iequations}
\lins(\bigunion\iton\,C_i) &=& \bigunion\iton\,\lins(C_i) \\
   \lins(C) &=&
  \ba[t]{@{}l@{}}
  \Let\,q = (\For\,(\ora{G_\iup}\,\Where\,X_\iup)\,
             \ret{\tuple{R_\iup, \cindex}})\,
  \In\, 
    \For\,(z \gets q, \ora{G_\idown}\,\Where\,\lins_{\vec{y}}(X_\idown))\,
           \ret{\lins_{\vec{y}}(N)} \\
  \ea \\
  \multicolumn{3}{c}
  {\ba{cccc}
     \begin{iequations}
     \text{where}\quad\ora{G_\iup}         &=& \Concat~(\init~(\gens~C)) \\
     X_\iup               &=& \bigwedge\init~(\conds~C) \\
     \end{iequations}
     &
     \begin{iequations}
     \ora{y = t} &=& \ora{G_\iup} \\
     R_\iup               &=& \tuple{\expand(y_i,t_i)}\iton \\
     \end{iequations}
     &
     \begin{iequations}
     \ora{G_\idown}       &=& \last~(\gens~C) \\
     X_\idown             &=& \last~(\conds~C) \\
     \end{iequations}
     &
     \begin{iequations}
     N                   &=& \body~C \\
     n                   &=& \len~\ora{G_\iup} \\
     \end{iequations} \\
   \ea} \\
\end{iequations} \sepskip\\
\ba{@{}cc@{}}
\begin{iequations}
\lins_{\vec{y}}(x.\cl)       &=& 
  \left\{
    \ba{ll}
      x.\cl,      &\text{if }x \notin \set{y_1, \dots, y_n} \\
      z.1.i.\cl,  &\text{if }x = y_i \\
    \ea
  \right. \\
\lins_{\vec{y}}(c(X_1,\dots,X_m))  &=& c(\lins_{\vec{y}}(X_1),\dots,\lins_{\vec{y}}(X_m)) \\
\end{iequations}
&
\begin{iequations}
\lins_{\vec{y}}(\isEmpty{L}) &=& \isEmpty(\lins_{\vec{y}}(L)) \\
\lins_{\vec{y}}(\bigunion\iton\,C_i) &=& \bigunion\iton\,\lins_{\vec{y}}(C_i) \\
\lins_{\vec{y}}(\For\,(\vec{G}\,\Where\,X)\,
               \retann{\tuple{a,N}}{a})
               &=&
  \bl
    \For\,(\vec{G}\,\Where\,\lins_{\vec{y}}(X)) \\
    \ret{\tuple{a,\lins_{\vec{y}}(N)}}
  \el \\
\end{iequations}
\sepskip\\
\begin{iequations}
\lins_{\vec{y}}(\record{\cl_j=X_j}\jtom) &=& \record{\cl_j=\lins_{\vec{y}}(X_j)}\jtom  \\
\lins_{\vec{y}}(\Ind{a}{d} ) &=& \tuple{a,\lins(d)} \\
\end{iequations}
&
\begin{iequations}
\lins   (\iup)          &=& z.2 \\
\lins (\idown)          &=& \cindex \\
\end{iequations} \\
\ea \el
\]

\caption{The let-insertion translation\label{fig:let-insertion-flat}}
\end{figure*}

Without loss of generality we rename all the bound variables in our
source query to ensure that all bound variables have distinct names,
and that none coincides with the distinguished name $z$ used for
let-bindings. The let-insertion translation $\lins$ is defined in
Figure~\ref{fig:let-insertion-flat}, where we use the following auxiliary functions:
\begin{equations}
\expand(x, t) &=& \record{\cl_i=x.\cl_i}\iton\\
&& \text{ where }
\Sigma(t) = \bagt{\record{\ora{\cl:A}}}
\smallskip\\
\gens~(\For\,(\vec{G}\,\Where\,X)\,C) &=& \vec{G} \cons \gens~C \\
\gens~(\retann{N}{a})                 &=& \nil \smallskip\\
\conds~(\For\,(\vec{G}\,\Where\,X)\,C) &=& X \cons \conds~C \\
\conds~(\retann{N}{a})                 &=& \nil \smallskip\\
\body~(\For\,(\vec{G}\,\Where\,X)\,C) &=& \body~C \\
\body~(\retann{N}{a})                &=& N 
\end{equations}%
Each comprehension is rearranged
into two sub-queries. The first generates the outer indexes. The second
computes the results.  The translation sometimes produces $n$-ary projections in order to
refer to values bound by the first subquery inside the second.

For example, applying $\lins$ to $q_1$ from
Section~\ref{subsec:shredded-packages} yields:
\[\small
\bl
    \For\,(x \gets \tDepartments) \\
    \ret{\tuple{\tuple{\top,1},\record{\dpt=x.\lName, \lPeople=\cindex}}}
\el\]
and $q_2$ becomes:
\[\small
\bl
(\bl
     \Let\,q =
       \For\,(x \gets \tDepartments)\,\ret{\tuple{\record{\dpt=x.\lName}, \cindex}}\,
     \In \\
       ~~\For\,(z \gets q, y \gets \tEmployees)\,\Where\,(z.\lab{1}.\lab{1}.\lName = y.\dpt ~\wedge\quad\\
       \hfill                                (y.\lSalary < 1000 \vee y.\lSalary > 1000000)) \\
       ~~ \retl\recordl
          \tuple{a, z.\lab{2}}, \record{\lName  = y.\lName,
                                  \lTasks = \tuple{b, \cindex}}
                                    \recordr\retr) \\
  \el \\
  \union \\
  (\bl
     \Let\,q =
       \For\,(x \gets \tDepartments)\,\ret{\tuple{\record{\dpt=x.\lName}, \cindex}}\,
     \In \\
       ~~\For\,(z \gets q, y \gets \tContacts)\,
         \Where\,(z.\lab{1}.\lab{1}.\lName = y.\dpt \wedge y.\lClient) \\
       ~~ \retl\recordl
            \tuple{a, z.\lab{2}}, 
       \record{\lName  = y.\lName,
               \lTasks = \tuple{d, \cindex}}
                 \recordr\retr) \\
   \el \\
\el
\]
%
As a sanity check, we show that the translation is type-preserving:
\begin{theorem}
  Given shredded query $\vdash M :
  \bagt{\tuple{\Index, F}}$, then $ \vdash \lins(M) :
  \lins(\bagt{\tuple{\Index, F}}) $.
\end{theorem}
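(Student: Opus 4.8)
The plan is a structural induction on the shredded term; the stated theorem is the case of query terms $L = \bigunion\,\vec C$. Since $\lins$ is defined by a small mutual recursion over the shredded syntactic categories and also invokes the auxiliary translation $\lins_{\vec y}(\cdot)$, I would first prove a lemma about $\lins_{\vec y}$ and then derive the theorem from it. As a preliminary I would spell out the (routine, currently omitted) typing rules for the shredded and let-inserted intermediate languages, the obvious generation lemma for shredded comprehensions, and the fact that $\cindex$ has type $\Int$ in every subquery body.

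\emph{The key lemma.} Fix a shredded comprehension $C$ whose non-final generators are $\ora{G_\iup} = \ora{y = t}$, write $\Sigma(t_i) = \bagt{P_i}$ for the (flat record) row type of $t_i$, and put $Z = \tuple{\tuple{P_1,\dots,P_n},\Int}$. I claim: whenever a shredded subterm $M$ occurring in a condition or in the body of $C$ is well-typed at type $B$ in a context that binds $\vec y$ at $\vec P$ together with the final generator's variables (and treats $\iup,\idown$ as markers of index type), then $\lins_{\vec y}(M)$ is well-typed at $\lins(B)$ in that context with $z : Z$ adjoined. The proof is by induction on $M$: a projection $x.\cl$ with $x \notin \vec y$ is unchanged; if $x = y_i$ it becomes $z.1.i.\cl$, which is well-typed precisely because $z.1.i$ has row type $P_i$. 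Constants go by the IH using $\lins(O) = O$; $\isEmpty{L}$ becomes $\isEmpty{\lins_{\vec y}(L)}$, which has type $\Bool$ regardless of the element type of $\lins_{\vec y}(L)$ (inside an emptiness test the comprehensions are handled by the non-let-inserting clause of $\lins_{\vec y}$, which leaves table generators alone). Records and tuples are componentwise. For indexes, $\lins_{\vec y}(\Ind{a}{d}) = \tuple{a,\lins(d)}$ with $\lins(\iup) = z.2$ of type $\Int$ and $\lins(\idown) = \cindex$ of type $\Int$, so the result has type $\tuple{\Int,\Int} = \lins(\Index)$; here we use the observation from Section~\ref{sec:shredding} that $\iup$ occurs only in the left component and $\idown$ only in the right component of a comprehension body, so that $z$ and $\cindex$ are in scope exactly where $\lins_{\vec y}$ needs them.

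\emph{The theorem.} For $L = \bigunion\iton C_i : \bagt{\tuple{\Index,F}}$ each $C_i$ has the same type and $\lins(L) = \bigunion\iton \lins(C_i)$, so it suffices to treat one comprehension $C$. By the generation lemma the generators $\gens~C$ draw from well-typed tables, the conditions $\conds~C$ are Boolean in the progressively extended context, and $\body~C = \tuple{I,N}$ has type $\tuple{\Index,F}$ there. Now $\lins(C) = \Let\,q = (\For(\ora{G_\iup}\,\Where\,X_\iup)\,\ret{\tuple{R_\iup,\cindex}})\,\In\,\For(z\gets q,\ora{G_\idown}\,\Where\,\lins_{\vec y}(X_\idown))\,\ret{\lins_{\vec y}(\tuple{I,N})}$. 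The outer subquery is well-typed: $X_\iup = \bigwedge\init(\conds~C)$ is Boolean, and $R_\iup = \tuple{\expand(y_i,t_i)}\iton$, where $\expand(y_i,t_i)$ copies every column of $t_i$ and hence has type $P_i$; so $\tuple{R_\iup,\cindex} : \tuple{\tuple{P_1,\dots,P_n},\Int}$ and $q : \bagt{Z}$, giving $z$ exactly the type the lemma demands. Then $z$ and the final generator are in scope in the inner subquery, so by the lemma $\lins_{\vec y}(X_\idown)$ is Boolean and $\lins_{\vec y}(\tuple{I,N}) = \tuple{\lins_{\vec y}(I),\lins_{\vec y}(N)} : \tuple{\tuple{\Int,\Int},\lins(F)}$ (the left component, containing $\iup$, becomes $\tuple{a,z.2}$). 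Hence the inner subquery, and therefore $\lins(C)$, has type $\bagt{\tuple{\tuple{\Int,\Int},\lins(F)}} = \lins(\bagt{\tuple{\Index,F}})$, as required.

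\emph{Main obstacle.} The delicate part is stating the $\lins_{\vec y}$-lemma so that its invariant survives the mutual recursion: one must track that $z$ receives exactly the nested tuple of the non-final generators' row types \emph{in the order} consumed by the $n$-ary projections $z.1.i.\cl$, and that $\iup$, $\idown$, and $\cindex$ occur only where the translation has, respectively, $z$ in scope and a surrounding loop body. Once the invariant is fixed, every case is a direct calculation using $\lins(O) = O$, $\lins(\Index) = \tuple{\Int,\Int}$, and the definition of $\expand$; I do not expect a genuinely hard residual step.
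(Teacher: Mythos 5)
The paper states this theorem only as a sanity check and omits its proof entirely (the typing rules for the let-inserted language are among the ``straightforward'' systems the authors leave out), so there is no official argument to compare against; your proof supplies exactly the routine induction the authors evidently had in mind, and it is correct --- the key points (that $z$ receives the nested tuple of the non-final generators' row types so that $z.1.i.\cl$ is well-typed, that $\iup$ and $\idown$ translate to $z.2$ and $\cindex$ of type $\Int$, and that emptiness tests go through the non-let-inserting clause) are all handled properly. The only nitpicks are cosmetic: the context for the inner subquery should drop $\vec y$ rather than merely adjoin $z$ (harmless, by weakening), and the degenerate case $n=0$ (no non-final generators, as in the top-level query $q_1$) deserves a remark, but the paper's own definition glosses over that case too.
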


To prove the correctness of let-insertion, we need to show that the
shredded semantics and let-inserted semantics agree.  In the statement
of the correctness theorem, recall that $\sflatsem{L}$ refers to the
version of $\ssem{L}$ using $\key =\flatkey_L$.
\begin{theorem}
\begin{sloppypar}
Suppose $\vdash L : A$ and $\shouter{L}_p = M$. Then $\sflatsem{M} = \lsem{\lins(M)}$.
\end{sloppypar}
\end{theorem}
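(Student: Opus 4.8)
The plan is to prove the equation $\sflatsem{M} = \lsem{\lins(M)}$ by induction on the structure of the shredded query $M$, following the recursive structure of the two semantic functions (Figure~\ref{fig:shredded-semantics} and Figure~\ref{fig:let-inserted-semantics-flat}) and the let-insertion translation (Figure~\ref{fig:let-insertion-flat}). Since $M$ has the form $\bigunion\iton C_i$, and both $\ssem{-}$ and $\lsem{-}$ distribute over $\bigunion$ via $\Concat$, while $\lins$ also distributes over $\bigunion$, the top-level case reduces immediately to showing $\sflatsem{C_i} = \lsemcomp{\lins(C_i)}$ for each comprehension $C_i$. So the real work is a single comprehension.

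First I would set up the right generalised induction hypothesis: rather than proving only the closed-term statement, I would prove a relational statement comparing $\ssem{C}_{\rho,\iota}$ with $\lsemcomp{\lins(C)}_{\rho'}$ under a suitable correspondence between the canonical-index state $\iota$ and the flat-index state $\rho'$ maintained by the let-inserted semantics. Concretely: when the shredded semantics is at dynamic index $\iota$, the let-inserted version has threaded $q \mapsto$ (the enumerated list of outer tuples) and the $\cindex$ value equals $i$ where $\flatkey_L(\Ind{a}{\iota.i}) = \tuple{a,i}$. I would make this precise as a lemma: for the outer subquery $S_\iup = \For\,(\ora{G_\iup}\,\Where\,X_\iup)\,\ret{\tuple{R_\iup,\cindex}}$, the list $\lsemcomp{S_\iup}_\rho$ enumerates exactly the surviving tuples of the $\init$-generators of $C$ in the same order used by $\iflatsem{L}$ to assign flat indexes, so that the $i$-th element carries $\cindex = i$ and $\flatkey_L$ maps the corresponding canonical index to $\tuple{a,i}$. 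This is where the definitions of $\iflatsem{L}$ and $\flatkey_L$ from Section~\ref{subsec:flat-indexes} get used: the flat index is defined precisely by enumerating canonical dynamic indexes per static index, which is what the $\Let$-bound subquery recomputes locally.

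With that lemma in hand, the comprehension case splits along the shape of the body $N$ of $C$. For the base/record parts of $N$, $\lins_{\vec y}$ only rewrites variable projections $x.\cl$ into $z.1.i.\cl$ and leaves everything else structurally intact, so I would check that the rewritten projection reads back exactly the value that the shredded semantics binds to $x$ via $\rho$ — this uses the $\expand$/$R_\iup$ bookkeeping and is a routine unfolding. For the index subterms: $\idown$ becomes $\cindex$, which by the lemma evaluates to $i$, matching $\ssem{\Ind{a}{\idown}}_{\rho,\iota.i} = \key(\Ind{a}{\iota.i}) = \flatkey_L(\Ind{a}{\iota.i}) = \tuple{a,i}$; and $\iup$ becomes $z.2$, which the outer subquery set to the $\cindex$ value of the enclosing level, matching $\ssem{\Ind{a}{\iup}}_{\rho,\iota.i} = \flatkey_L(\Ind{a}{\iota})$. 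The nested-emptiness case $\isEmpty{L}$ follows from the inductive hypothesis applied to the sub-package (note that $\ssem{-}$ for $\isEmpty{}$ only needs the top-level query, which is preserved). Finally, I would check that the orderings agree so that the lists, not just the multisets, coincide — both semantics enumerate generators left-to-right with the same $\Where$ filter and the same $\enum$.

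The main obstacle I anticipate is the index-correspondence lemma — specifically, verifying that the locally recomputed outer subquery in $\lins(C)$ produces flat indexes that agree globally with $\flatkey_L$ as defined from $\isem{L}$. The subtlety is that $\flatkey_L$ is defined by a global enumeration over the whole query $L$ (all canonical indexes with a given static tag, in the order they appear in $\isem{L}$), whereas let-insertion reconstructs this index locally from just the $\init$-generators of the one comprehension. These coincide only because each static tag $a$ arises from exactly one $\retann{}{a}$ occurrence, so the "global" list of canonical indexes for tag $a$ is exactly the "local" enumeration of that comprehension's outer generators; pinning down that the unique-tag invariant (established by the pre-processing annotation step in Section~\ref{sec:shredding}) gives this one-to-one correspondence, and that the enumeration orders match, is the delicate part. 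Everything else is a structural induction with straightforward unfolding.
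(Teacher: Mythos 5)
Your proposal matches the paper's own (sketched) argument: both reduce to the single-comprehension case, compare the data parts and the index parts separately, and rest on the key fact that the indexes enumerated by the $\Let$-bound outer subquery $S_\iup$ coincide with those of $\iflatsem{L}$ at the corresponding static tag (with $S_\idown$ handling the inner tag), which is exactly your index-correspondence lemma justified by the uniqueness of each $\retann{}{a}$ annotation. The only difference is organizational --- you fold the data/index comparison into one structural induction rather than two passes --- so there is nothing substantive to add.
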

\begin{proof}[\proofheader sketch]
The high-level idea is to separate results into data and indexes and
compare each separately.
It is straightforward, albeit tedious, to show that the different definitions
are equal if we replace all dynamic indexes by a
dummy value.
It then remains to show that the dynamic indexes agree. The pertinent
case is the translation of a comprehension:
\[
[\For\,(\vec{G_i} \gets X_i)]\iton\,
\For\,(\vec{G_\idown} \gets X_\idown)\,\retann{\tuple{\Ind{a}{\iup}, N}}{b}
\]
which becomes $\Let\,q = S_\iup\,\In\,S_\idown$ 
for suitable $S_\iup$ and $S_\idown$. The dynamic indexes computed by
$S_\iup$ coincide exactly with those of $\iflatsem{L}$ at static index
$a$, and the dynamic indexes computed by $S_\idown$, if there are any,
coincide exactly with those of $\iflatsem{L}$.
\end{proof}


\section{Conversion to SQL}
\label{sec:to-sql}

Earlier translation stages have employed nested records for
convenience, but SQL does not support nested records. At this stage,
we eliminate nested records from queries. For example, we can represent a nested record 
$\record{\lab{a}=\record{\lab{b}=1,\lab{c}=2},\lab{d}=3}$
 as a flat record $\record{\lab{a{\delim}b}=1,\lab{a{\delim}c}=2,\lab{d}=3}$.
\begin{tronly}
  The (standard) details are presented in
  Appendix~\ref{app:record-flattening}.
\end{tronly}

In order to interpret shredded, flattened, let-inserted terms as SQL,
we interpret index generators using SQL's OLAP facilities.
\begin{syntax}
\text{Query terms}         & L    &\cce& (\Unionall)\,\vec{C} 
\sepskip\\
\text{Comprehensions}      & C    &\cce& \With\,q\,\As\,(S)\,C \mid S' \\
\text{Subqueries}          & S    &\cce& \Select\,R~\From\,\vec{G}\,\Where\,X \\
\text{Data sources}        & u    &\cce& t \mid q \\
\text{Generators}          & G    &\cce& u\,\As\,x \\
\text{Inner terms}         & N    &\cce& X \mid \rownumindex{\vec{X}} \\
\text{Record terms}        & R    &\cce& \ora{N\,\As\,\cl} \\
\text{Base terms}          & X    &\cce& x.\cl \mid c(\vec{X}) \mid \isEmpty{L} \\
\end{syntax}
Continuing our example, $\lins(q_1)$ and $\lins(q_2)$ translate to $q_1'$
and $q_2'$ where:
\[\small\bl
q_1' =\bl
             \Select\,
                         \bl
                           x.\lName\,
                           \As\,\lab{i\consl{1}\lName}, \\
                           \rownumindex{x.\lName}\,\As\,\lab{i\consl{1}\lPeople} \\
                         \el \\
                       \From\,\tDepartments\,\As\,x \\ 
\el \\
q_2' = \bl
  (\bl
     \With\,q\,\As\,(\bl 
                       \Select\,
                         \bl
                           x.\lName\,              \As\,\lab{i\consl{1}\lName}, \\
                           \rownumindex{x.\lName}\,\As\,\lab{i2} \\
                         \el \\
                       \From\,\tDepartments\,\As\,x) \\ 
                     \el \\
     \Select\,
       \bl
         a\,        \As\,\lab{i\consl{1}1},
         z.\lab{i2}\,\As\,\lab{i\consl{1}2},
         y.\lName\, \As\,\lab{i\consl{2}\lName},
         b\,        \As\,\lab{i\consl{2}\consl{\lTasks}1}, \\
         \rownum()\,\Over ~(\orderby\,z.\lab{i\consl{1}\lName},
         z.\lab{i2}, \\
\hfill y.\dpt, y.\emp, y.\lSalary)~ \As\,\lab{i\consl{2}\consl{\lTasks}2} \\
       \el \\
     \From\,\tEmployees\,\As\,y,\, q\,\As\,z \\
     \Where\,(z.\lab{i\consl{1}\lName} = y.\dpt ~\wedge \\
       \hfill     (y.\lSalary < 1000 \vee y.\lSalary > 1000000))) \\
  \el \\
\Unionall \\
  (\bl
     \With\,q\,\As\,(\bl 
                       \Select\,
                         \bl
                           x.\lName\,              \As\,\lab{i\consl{1}\lName}, \\
                           \rownumindex{x.\lName}\,\As\,\lab{i2} \\
                         \el \\
                       \From\,\tDepartments\,\As\,x) \\ 
                     \el \\
     \Select\,
       \bl
         a\,        \As\,\lab{i\consl{1}1},
         z.\lab{i2}\,\As\,\lab{i\consl{1}2},
         y.\lName\, \As\,\lab{i\consl{2}\lName},
         d\,        \As\,\lab{i\consl{2}\consl{\lTasks}1}, \\
         \rownum()\,\Over ~(\orderby\,z.\lab{i\consl{1}\lName},
         z.\lab{i2}, \\
\hfill y.\dpt, y.\lName, y.\lClient)~ \As\,\lab{i\consl{2}\consl{\lTasks}2} \\
       \el \\
     \From\,\tContacts\,\As\,y,\, q\,\As\,z \\
     \Where\,(z.\lab{i\consl{1}\lName} = y.\dpt \wedge y.\lClient)) \\
   \el \\
\el
\el
\]
%
Modulo record flattening, the above fragment of SQL is almost
isomorphic to the image of the let-insertion translation. The only
significant difference is the use of $\rownum$ in place of
$\cindex$. Each instance of $\cindex$ in the body $R$ of a subquery of
the form $ \For\,(\ora{x \gets t}\;\Where\,X)\,\ret{R} $ is simulated
by a term of the form $\rownumindex{\ora{x.\cl}}$, where:
\[\small
\bl x_i :
\record{\cl_{i,1}:A_{i,1}, \dots, \cl_{i,m_i}:A_{i,m_i}}\\
\ora{x.\cl} = 
\bl
 x_1.\cl_{1,1}, \dots, x_1.\cl_{1,m_1}, 
    ~~\dots,~~ 
   x_n.\cl_{n,1}, \dots, x_n.\cl_{n,m_n} \\
 \el
\el
\]

A possible concern is that $\rownum$ is non-deterministic. It computes
row numbers ordered by the supplied columns, but if there is a tie,
then it is free to order the equivalent rows in any order. However,
we always order by \emph{all} columns
of all tables referenced from the current subquery, so our use of
$\rownum$ is always deterministic.
(An alternative could be to use
nonstandard features such as PostgreSQL's OID or MySQL's ROWNUM, but
sorting would still be necessary to ensure consistency across inner
and outer queries.)


\section{Experimental evaluation}
\label{sec:implementation}

Ferry's loop-lifting translation has been implemented in Links
previously by Ulrich, a member of the Ferry
team~\cite{ulrich11ferry-links}, following the approach described by
Grust et al.~\cite{grust10pvldb} to generate SQL:1999 algebra plans,
then calling Pathfinder~\cite{pathfinder} to optimise and evaluating
the resulting SQL on a PostgreSQL database.  We have also implemented
query shredding in Links, running against PostgreSQL; our
implementation\footnote{\small\url{http://github.com/slindley/links/tree/shredding}}
does not use Pathfinder.  We performed initial experiments with a
larger ad hoc query benchmark, and developed some optimisations,
including inlining certain \verb|WITH| clauses to unblock rewrites,
using keys for row numbering, and implementing stitching in one pass
to avoid construction of intermediate in-memory data structures that
are only used once and then discarded.  We report on shredding with
all of these optimisations enabled.

 \begin{figure}[tb]
   \centering
\small
\begin{verbatim}
QF1: SELECT e.emp FROM employees e 
     WHERE e.salary > 10000
QF2: SELECT e.emp, t.tsk
     FROM employees e, tasks t 
     WHERE e.emp = t.emp
QF3: SELECT e1.emp, e2.emp 
     FROM employees e1, employees e2 
     WHERE e1.dpt = e2.dpt
       AND e1.salary = e2.salary 
       AND e1.emp <> e2.emp
QF4: (SELECT t.emp FROM tasks t 
      WHERE t.tsk = 'abstract')
     UNION ALL (SELECT e.emp FROM employees 
                WHERE e.salary > 50000)
QF5: (SELECT t.emp FROM tasks t 
      WHERE t.tsk = 'abstract')
     MINUS
     (SELECT e.emp FROM employees e 
      WHERE e.salary > 50000)
QF6: ((SELECT t.emp FROM tasks t 
       WHERE t.tsk = 'abstract')
      UNION ALL (SELECT e.emp FROM employees e 
                 WHERE e.salary > 50000))
     MINUS 
     ((SELECT t.emp FROM tasks t 
       WHERE t.tsk = 'enthuse')
      UNION ALL (SELECT e.emp FROM employees e 
                 WHERE e.salary > 10000))
\end{verbatim}
\caption{SQL versions of flat queries used in
  experiments}\label{fig:flat-benchmark}
\end{figure}
\begin{figure}[tb]
\[\small\ba{@{}l@{~}l@{}}
\mathtt{Q1}:& \bl
  \For\,(d \gets \tDepartments) \\
  \retl\recordl
    \bl
      \lName      = d.\lName, \\
      \lEmployees = \employeesOfDept~d,\\
      \lContacts = \contactsOfDept~d\recordr
      \el
\el
\smallskip\\
\mathtt{Q2:} &
\bl
\For\, (d \gets \mathtt{Q1})\\
    \Where\, (all~d.\lEmployees~(\lambda x. contains~x.\lTasks~\cStr{abstract}))\\
    \ret{\record{\dpt=d.\dpt}}
\el
\smallskip\\
\mathtt{Q3:}&\bl
\For\,(e \gets \tEmployees) \\
    \ret{\record{\lName=e.\lName,\tsk=\tasksOfEmp(e)}}
\el
\smallskip\\
\mathtt{Q4:}&\bl
\For\,(d \gets \tDepartments) \\
\ret{\recordl \dpt=d.\dpt, \lEmployees= \bl 
  \For\, (e \gets \tEmployees) \\
  \Where\, (d.\dpt = e.\dpt) \\
                \ret{e.\emp} \recordr 
\el}
\el
\smallskip\\
\mathtt{Q5:}&\bl
\For\,(t \gets \tTasks) \, \ret{\record{a=t.\tsk, b=\employeesByTask~t}}
\el
\smallskip\\
\mathtt{Q6:}&\bl
\For\, (d \gets \mathtt{Q1}) \\
  \retl \recordl
     \bl
       \lDepartment = d.\lName, \\
       \lPeople = \\
       \bl
       \quad \getTasks(\outliers(d.\lEmployees))~(\lambda y.\,y.\lTasks)   \\
       \union~
           \getTasks(\clients(d.\lContacts))~(\lambda y.\,\ret{\cStr{buy}})\recordr \retr\\
       \el \\
     \el
\el
\ea
\]%
   \caption{Nested queries used in experiments}
   \label{fig:nested-benchmark}
 \end{figure}

\paragraph*{Benchmark queries} 
\begin{sloppypar}
There is no standard benchmark for queries returning nested results.
In particular, the popular TPC-H benchmark is not suitable for
comparing shredding and loop-lifting: the TPC-H queries do not return
nested results, and can be run directly on any SQL-compliant RDBMS, so
neither approach needs to do any work to produce an SQL query.
\end{sloppypar}

We selected twelve queries over the organisation schema described in
Section~\ref{sec:example} to use as a benchmark.  The first six queries,
named \verb|QF1|--\verb|QF6|, return flat results and can be
translated to SQL using existing techniques, without requiring either
shredding or loop-lifting.  We considered these queries as a sanity
check and in order to measure the overhead introduced by loop-lifting
and shredding. Figure~\ref{fig:flat-benchmark} shows the SQL versions
of these queries.

We also selected six queries \verb|Q1|--\verb|Q6| that do involve
nesting, either within query results or in intermediate stages.  They
are shown in Figure~\ref{fig:nested-benchmark}; they use the auxiliary
functions defined in Section~\ref{sec:example}.
\verb$Q1$ is the query $\Qorg$ that builds the nested
organisation view from Section~\ref{sec:example}. \verb$Q2$ is a flat
query that computes a flat result from \verb$Q1$ consisting of all
departments in which all employees can do the ``abstract'' task; it is
a typical example of a query that employs higher-order
functions. \verb$Q3$ returns records containing each employee and the
list of tasks they can perform. \verb$Q4$ returns records containing
departments and the set of employees in each department. \verb$Q5$
returns a record of tasks paired up with sets of employees and their
departments. \verb$Q6$ is the outliers query $\Qoutliers$ introduced in
Section~\ref{sec:example}.

\begin{figure}[tb]
\centering
  \includegraphics[scale=0.5]{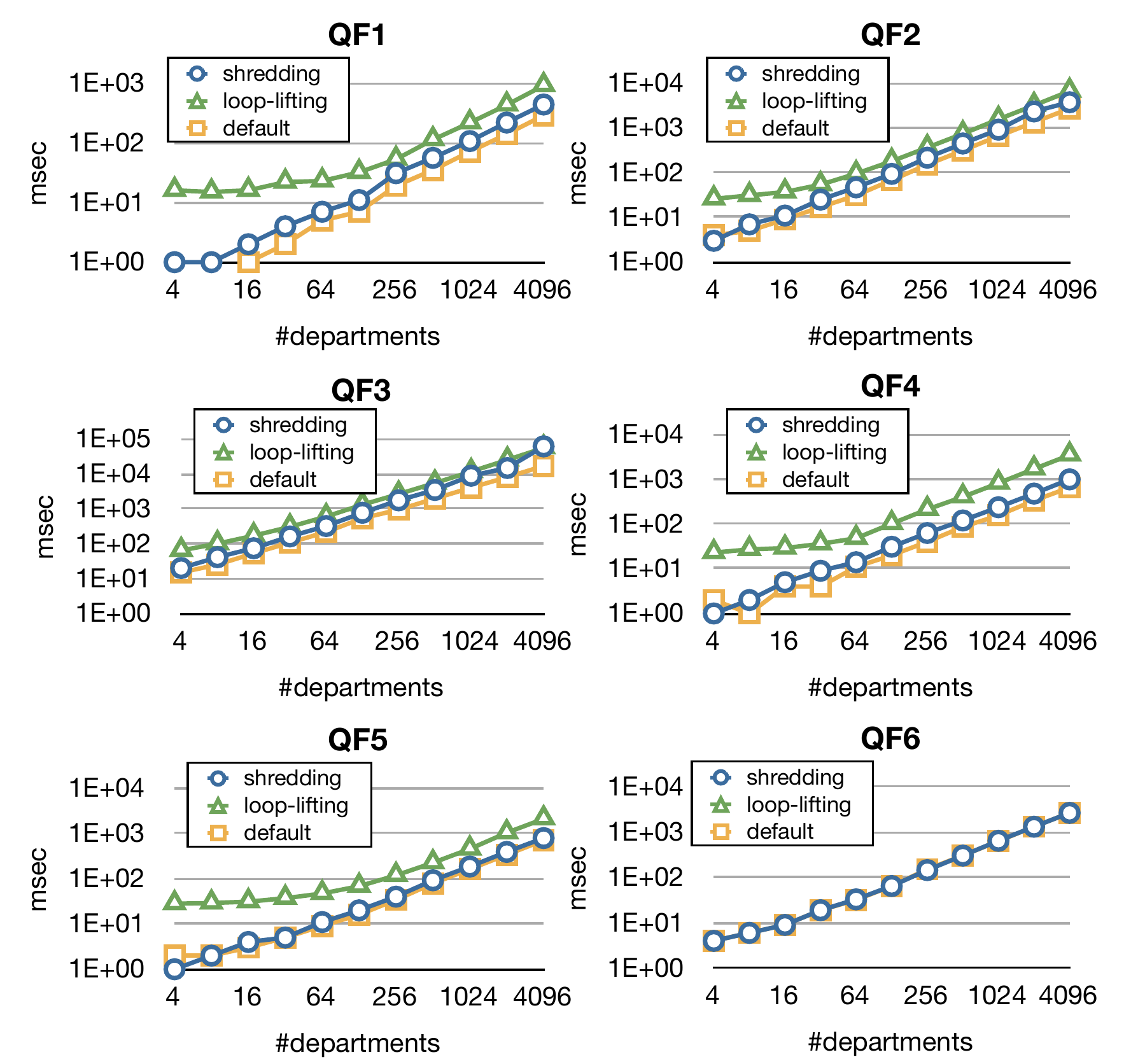}
  \caption{Experimental results (flat queries)}
  \label{fig:experiments-flat}
\end{figure}

\begin{figure}[tb]
  \centering
  \includegraphics[scale=0.48]{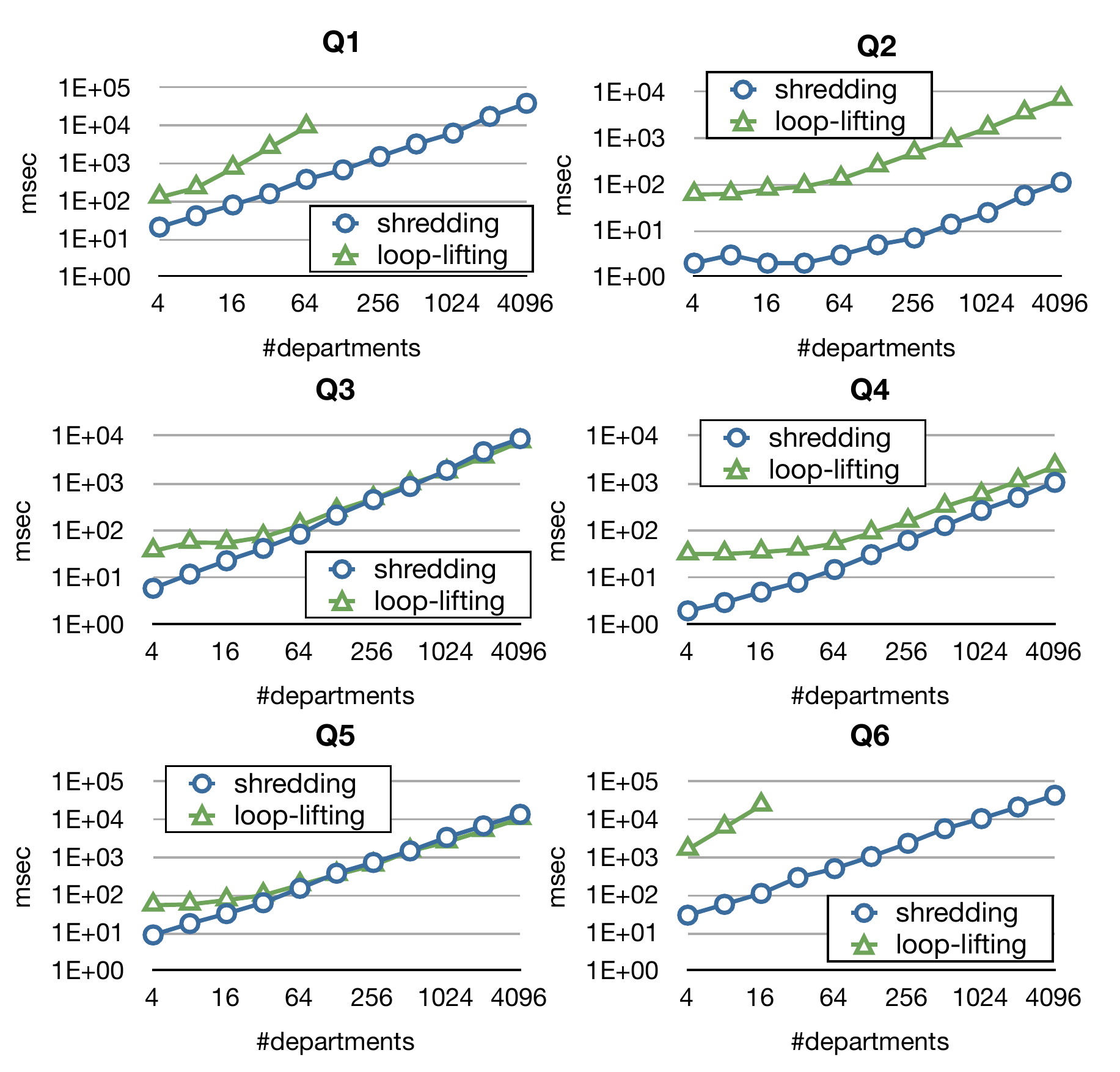}
  \caption{Experimental results (nested queries)}
  \label{fig:experiments-nested}
\end{figure}
\paragraph*{Experimental results}

We measured the query execution time for all queries on randomly
generated data, where we vary the number of departments in the
organisation from 4 to 4096 (by powers of 2). Each department has on
average 100 employees and each employee has 0--2 tasks, and the
largest (4096 department) database was approximately 500MB. Although the test
datasets are moderate in size, they suffice to illustrate the asymptotic
trends in the comparative performance of the different techniques.
All tests were
performed using PostgreSQL 9.2 running on a MacBook Pro with 4-core
2.6GHz CPU, 8GB RAM and 500GB SSD storage, with the database server
running on the same machine (hence, negligible network latency).  We
measure \emph{total} time to translate a nested query to SQL, evaluate
the resulting SQL queries, and stitch the results together to form a
nested value, measured from Links.

We evaluated each query using query shredding and loop-lifting, and
for the flat queries we also measured the time for Links' default
(flat) query
evaluation. The experimental results for the flat queries are shown in
Figure~\ref{fig:experiments-flat} and for the nested queries in
Figure~\ref{fig:experiments-nested}.  Note that both axes are
logarithmic.  All times are in milliseconds; the times are medians of 5 runs.  
The times for small data sizes provide a
comparison of the overhead associated with shredding, loop-lifting or
Links' default query normalisation algorithm.

\paragraph*{Discussion} 
 We should re-emphasise that Ferry (and
Ulrich's loop-lifting implementation for Links) supports grouping and
aggregation features that are not handled by our translation.  We
focused on queries that both systems can handle, but Ferry has a clear
advantage for grouping and aggregation queries or when ordering is
important (e.g.~sorting or top-$k$ queries).  Ferry is based on list
semantics, while our approach handles multiset semantics.  So, some
performance differences may be due to our exploitation of
multiset-based optimisations that Ferry (by design) does not exploit.

The results for flat queries show that shredding has low per-query
overhead in most cases compared to Links' default flat query
evaluation, but the queries it generates are slightly
slower. Loop-lifting, on the other hand, has a noticeable per-query
overhead, likely due to its invocation of Pathfinder and associated
serialisation costs.  In some cases, such as \verb|QF4| and
\verb|QF5|, loop-lifting is noticeably slower asymptotically; this
appears to be due to additional sorting needed to maintain list
semantics.  We encountered a bug that prevented loop-lifting from
running on query \verb|QF6|; however, shredding had negligible
overhead for this query.  In any case, the overhead of either
shredding or loop-lifting for flat queries is irrelevant: we can
simply evaluate such queries using Links' default flat query
evaluator.  Nevertheless, these results show that the overhead of
shredding for such queries is not large, suggesting that the queries
it generates are similar to those currently generated by Links.
(Manual inspection of the generated queries confirms this.)

The results for nested queries are more mixed.  In most cases, the
overhead of loop-lifting is dominant on small data sizes, which again
suggests that shredding may be preferable for OLTP or Web workloads
involving rapid, small queries.  Loop-lifting scales poorly on two
queries (\verb|Q1| and \verb|Q6|), and did not finish within 1 minute
even for small data sizes.  Both \verb|Q1| and \verb|Q6| involve 3 levels of nesting
and in the innermost query, loop-lifting generates queries with
Cartesian products inside OLAP operators such as \verb|DENSE_RANK| or
\verb|ROW_NUMBER| that Pathfinder was not able to remove.  The queries
generated by shredding in these cases avoid this pathological
behaviour.  For other queries, such as \verb|Q2| and \verb|Q4|,
loop-lifting performs better but is still slower than shredding as
data size increases. Comparison of the queries generated by
loop-lifting and shredding reveals that loop-lifting encountered similar problems with
hard-to-optimise OLAP operations.  Finally, for \verb|Q3| and
\verb|Q5|, shredding is initially faster (due to the overhead of
loop-lifting and calling Pathfinder) but as data size increases,
loop-lifting wins out.  Inspection of these generated queries reveals
that the queries themselves are similar, but the shredded queries
involve more data movement.  Also, loop-lifting returns sorted
results, so it avoids in-memory hashing or sorting while constructing
the nested result.  It should be possible to incorporate similar
optimisations into shredding to obtain comparable performance.

Our experiments show that shredding performs similarly or
better than loop-lifting on our (synthetic) benchmark queries on moderate (up to
500MB) databases.  Further work may need to be done to investigate
scalability to larger databases or consider more realistic query benchmarks.



\section{Related and future work}
\label{sec:related-work}

We have discussed related work on nested queries, Links, Ferry and
LINQ in the introduction.  Besides Cooper~\cite{cooper09dbpl}, several
authors have recently considered higher-order query
languages. Benedikt et al.~\cite{benedikt10pods,vu11icdt} study higher-order queries over flat
relations. The Links approach was adapted to LINQ in F\# by Cheney et
al.~\cite{cheney13icfp}.  Higher-order features are also being added
to XQuery 3.0~\cite{xquery3}.

Research on shredding XML data into relations and evaluating XML
queries over such representations~\cite{xmltosql} is superficially
similar to our work in using various indexing or numbering schemes to
handle nested data. Grust et al.'s work on translating XQuery to SQL
via Pathfinder~\cite{pathfinder} is a mature solution to this problem,
and Grust et al.~\cite{grust07icde} discuss optimisations in the
presence of unordered data processing in XQuery. However, XQuery's ordered
data model and treatment of node identity would block
transformations in our algorithm that assume unordered, pure operations.

We can now give a more detailed comparison of our approach with the
indexing strategies in Van den Bussche's work and in Ferry. Van den
Bussche's approach uses natural indexes (that is, $n$-tuples of ids),
but does not preserve multiset semantics. Our approach preserves
multiplicity and can use natural indexes, we also propose a flat indexing scheme based on
$\rownum$. In Ferry's indexing scheme, the surrogate indexes only link
adjacent nesting levels, whereas our indexes take information at all
higher levels into account. Our flat indexing scheme relies on this
property, and Ferry's approach does not seem to be an instance of ours
(or vice versa). Ferry can generate multiple SQL:1999 operations and
Pathfinder tries to merge them but cannot always do so. Our approach
generates $\rownum$ operations only at the end, and does not rely on
Pathfinder.  Finally, our approach uses normalisation and tags parts
of the query to disambiguate branches of
unions.

Loop-lifting has been implemented in Links by
Ulrich~\cite{ulrich11ferry-links}, and Grust and
Ulrich~\cite{grust13dbpl} recently presented techniques for supporting
higher-order functions as query results. By using Ferry's
loop-lifting translation and Pathfinder, Ulrich's system also supports
list semantics and aggregation and grouping operations; to our
knowledge, it is an open problem to either prove their correctness or
adapt these techniques to fit our approach. Ferry's approach supports
a list-based semantics for queries, while we assume a bag-based
semantics (matching SQL's default behaviour). Either approach can
accommodate set-based semantics simply by eliminating duplicates in
the final result. In fact, however, we believe the core query
shredding translation
(Sections~\ref{sec:shredding}--\ref{sec:indexing-schemes}) works just
as well for a list semantics. The only parts that rely on unordered
semantics are normalisation (Section~\ref{subsec:normalisation}) and
conversion to SQL (Section~\ref{sec:to-sql}). We leave these
extensions to future work.

Our work is also partly inspired by work on unnesting for nested data
parallelism.  Blelloch and Sabot~\cite{blelloch90jpdc} give a
compilation scheme for NESL, a data-parallel language with nested
lists; Suciu and Tannen~\cite{suciu94spaa} give an alternative scheme
for a nested list calculus.  This work may provide an alternative (and
parallelisable) implementation strategy for Ferry's list-based
semantics~\cite{grust10pvldb}.



\section{Conclusion}
\label{sec:conclusion}

Combining efficient database access with high-level programming
abstractions is challenging in part because of the limitations of flat
database queries.  Currently, programmers must write flat queries and
manually convert the results to a nested form.  This damages
declarativity and maintainability.  Query shredding can help to bridge
this gap.  Although it is known from prior work that query shredding
is possible in principle, and some implementations (such as Ferry)
support this, getting the details right is tricky, and can lead to
queries that are not optimised well by the relational engine.  Our
contribution is an alternative shredding translation that handles
queries over bags and delays use of OLAP operations until the final
stage.  Our translation compares favourably to loop-lifting in
performance, and should be straightforward to extend and to
incorporate into other language-integrated query systems.

\paragraph*{Acknowledgements}
We thank Ezra Cooper, Torsten Grust, and Jan Van den Bussche for
comments.  We are very grateful to Alex Ulrich for sharing his
implementation of Ferry in Links and for extensive discussions.
This work was supported by EPSRC
grants EP/J014591/1 and EP/K034413/1 and a Google Research Award.


%
{\small
\bibliographystyle{abbrv} 
\bibliography{shredding}
}
\newpage
\appendix

\normalsize
\section{Behaviour of Van den Bussche's simulation on multisets}
\label{app:vdb-incorrect}
\newcommand{\adom}{\textbf{adom}}

\newcommand{\Wrong}{\mathit{Wrong}}

As noted in the Introduction, Van den Bussche's simulation of nested
queries via flat queries does not work properly over multisets.  To
illustrate the problem, consider a simple query $R \cup S$, where $R$
and $S$ have the same schema $\bagtl \record{A:\Int,B:\bagtl{\Int}}$.
Suppose $R$ and $S$ have the following values:
\[
R = \begin{array}{|cc|}
\hline
A & B\\\hline
       1 &\{1\}\\
	2 &\{2\}\\
\hline
\end{array}
\quad 
S = \begin{array}{|cc|}
\hline
A & B\\\hline
       1 &\{3,4\}\\
	2 &\{2\}\\
\hline
\end{array}
\]
then their multiset union is:
\[       
R \cup S = \begin{array}{|cc|}
\hline
A & B\\\hline
       1 &\{1\}\\
	1 &\{3,4\}\\
	2 &\{2\}\\
	2 &\{2\}\\
\hline
\end{array}
\]
Van den Bussche's simulation (like ours) represents these nested
values by queries over flat tables, such as the following:
\[
R_1 = \begin{array}{|cc|}
\hline
     A& id \\
\hline
     1 & a       \\
     2 & b       \\
\hline
\end{array}
\quad
R_2 = 
\begin{array}{|cc|}
\hline
     id& B  \\
\hline
     a & 1       \\
     b &2       \\
\hline
\end{array}
\]
\[
S_1 = \begin{array}{|cc|}
\hline
     A&  id \\
\hline
     1 & a       \\
     2 & b       \\
\hline
\end{array}
\quad
S_2 = 
\begin{array}{|cc|}
\hline
     id& B  \\
\hline
     a & 3       \\
     a & 4       \\
    b &2       \\
\hline
\end{array}
\]
where $a,b$ are arbitrary distinct ids. Note however that $R$ and $S$
have overlapping ids, so if we simply take the union of $R_1$ and
$S_1$, and of $R_2$ and $S_2$ respectively, we will get:
\[\Wrong_1 = \begin{array}{|cc|}
\hline
     A& id \\
\hline
     1 & a       \\
     2 & b       \\
     1 & a       \\
     2 & b       \\
\hline
\end{array}
\quad
\Wrong_2 = 
\begin{array}{|cc|}
\hline
     id& B  \\
\hline
    a & 1       \\
    b &2       \\
     a & 3       \\
     a & 4       \\
    b &2       \\
\hline
\end{array}
\]
corresponding to nested value:
\[\Wrong = 
\begin{array}{|cc|}
\hline
     A& B  \\
\hline
     1 &\{1,3,4\}       \\
     1 &\{1,3,4\}       \\
     2 &\{2,2\}\\
     2 &\{2,2\}\\
\hline
\end{array}
\]
Instead, both Van den Bussche's simulation and our approach avoid clashes among ids
when taking unions.  Van den Bussche's simulation does this by adding
two new id fields to represent the result of a union, say $id_1$ and $id_2$.
 The tuples originating from $R$ will have \emph{equal} $id_1$ 
and $id_2$ values, while those originating from $S$ will have \emph{different} $id_1$ and 
$id_2$ values.  In order to do this, the simulation defines two queries, one for each 
result table.  The first query is of the form:
\begin{eqnarray*}
T_1 &=& (R_1 \times {(id_1:x,id_2:x) \mid x \in \adom} )\\
   &\cup& (S_1 \times {(id_1:x,id_2:x') \mid x \neq x' \in \adom})
\end{eqnarray*}
and similarly 
\begin{eqnarray*}
T_2 &=& (R_2 \times {(id_1:x,id_2:x) \mid x \in \adom} )\\
   &\cup& (S_2 \times {(id_1:x,id_2:x') \mid x \neq x' \in \adom})
\end{eqnarray*}
where $\adom$ is the active domain of the database (in this case,
$\adom = \{1,2,3,4,a,b\}$) --- this is of course also definable as a query.
Thus, in this example, the result is of the form
\[
T_1 = 
\begin{array}{|cccc|}
\hline
     A& id  & id_1 & id_2\\
\hline
     1 & a & x & x       \\
     2 & b & y & y \\
     1 & a & z  & z'       \\
    2 & b  & v & v'\\
\hline
\end{array}
\qquad
T_1 = 
\begin{array}{|cccc|}
\hline
     id  & id_1 & id_2 & B\\
\hline
      a & x & x    & 1   \\
      b & y & y  & 2\\
      a & z  & z'   & 3    \\
      a & w  & w'    & 4   \\
    b  & v & v' & 2\\
\hline
\end{array}
\]
where $x,y$ are any elements of $\adom$ (rows mentioning $x$ and $y$ stand for 6 
instances) and $z\neq z'$, $w \neq w'$ and $v \neq v'$ are any pairs of distinct 
elements of $\adom$ (rows mentioning these variables stand for 30 instances of 
distinct pairs from $\adom$).  This leads to an $O(|\adom| * |R| + 
|\adom|^2 * |S|)$ blowup in the number of tuples.  Specifically, for
our example, $|T_1| = 72$, whereas the actual number
of tuples in a natural representation of $R \cup S$ is only 9.
In a set semantics,
the set value simulated by these tables is correct even with all of
the extra tuples; however, for a
multiset semantics, this quadratic blowup is not correct ---
even for our example, with $|\adom| = 6$, the result of evaluating $R
\cup S$ yields a different number of tuples from the result of
evaluating $S \cup R$, and neither represents the correct multiset in
an obvious way.  It may be possible (given knowledge of the query and
active domain, but not the source database) to ``decode'' the flat query
results and obtain the correct nested result, but doing so appears no
easier than developing an alternative, direct algorithm.

\section{Typing rules}
\label{app:typing-rules}

The (standard) typing rules for \langname queries are shown in Figure~\ref{fig:typing-rules}.
The typing rules for shredded terms, packages, and values are shown in
Figures~\ref{fig:shredded-typing-rules},
\ref{fig:shredded-package-typing-rules}, \ref{fig:indexed-nested-value-typing-rules} and \ref{fig:shredded-value-typing-rules}.

\begin{figure*}[p!]

\begin{mathpar}
\inferrule[Var]
{ }
{\Gamma, x:A \vdash x : A}

\inferrule[Constant]
{\Sigma(c) = \tuple{O_1,\dots,O_n} \to O' \\
 [\Gamma \vdash M_i : O_i]\iton}
{\Gamma \vdash c(M_1,\dots,M_n) : O'}

\inferrule[Lam]
{\Gamma, x:A \vdash M : B}
{\Gamma \vdash \lam{x} M : A \to B}

\inferrule[App]
{\Gamma \vdash M : A \to B \\ \Gamma \vdash N : A}
{\Gamma \vdash M~ N : B}

\inferrule[Record]
{[\Gamma \vdash M_i : A_i]\iton}
{\Gamma \vdash \record{\cl_i=M_i}\iton : \record{\cl_i=A_i}\iton}

\inferrule[Project]
{\Gamma \vdash M : \record{\cl_i:A_i}\iton}
{\Gamma \vdash M.\cl_j : A_j}

\inferrule[If]
{\Gamma \vdash M : \Bool \\
 \Gamma \vdash N : A \\
 \Gamma \vdash N' : A}
{\Gamma \vdash \If\, M~ \Then~ N~ \Else~ N' : A}

\inferrule[Empty]
{ }
{\Gamma \vdash \emptybag : \bagt{A}}

\inferrule[Singleton]
{\Gamma \vdash M : A}
{\Gamma \vdash \ret{M} : \bagt{A}}

\inferrule[Union]
{\Gamma \vdash M : \bagt{A} \\
 \Gamma \vdash N : \bagt{A}}
{\Gamma \vdash M \union N : \bagt{A}}

\inferrule[For]
{\Gamma \vdash M : \bagt{A} \\ \Gamma, x:A \vdash N : \bagt{B}}
{\Gamma \vdash \For\,(x \gets M)\,N : \bagt{B}}

\inferrule[IsEmpty]
{\Gamma \vdash M : \bagt{A}}
{\Gamma \vdash \isEmpty{M} : \Bool}

\inferrule[Table]
{\Sigma(t)  =\bagt{\record{\ora{\cl:O}}}}
{\Gamma \vdash \Table~t : \bagt{\record{\ora{\cl:O}}}}
\end{mathpar}

\caption{Typing rules for higher-order nested queries\label{fig:typing-rules}}

\begin{mathpar}
\inferrule[Var]
{ }
{\Gamma, x:\record{\ora{\cl:F}} \vdash x : \record{\ora{\cl:F}}}

\inferrule[Constant]
{\Sigma(c) = \tuple{O_1,\dots,O_n} \to O' \\
 [\Gamma \vdash X_i : O_i]\iton}
{\Gamma \vdash c(X_1,\dots,X_n) : O'}

\inferrule[Record]
{[\Gamma \vdash N_i : A_i]\iton}
{\Gamma \vdash \record{\cl_i=N_i}\iton : \record{\cl_i=A_i}\iton}

\inferrule[Project]
{\Gamma \vdash x : \record{\cl_i:A_i}\iton}
{\Gamma \vdash x.\cl_j : A_j}


\inferrule[Index]
{ }
{\Gamma \vdash \Ind{a}{d} : \Index}

\inferrule[Singleton]
{\Gamma \vdash I : \Index \\ \Gamma \vdash N : F}
{\Gamma \vdash \retann{\tuple{I,N}}{a} : \bagt{\tuple{\Index, F}}}

\inferrule[Union]
{[\Gamma \vdash C_i : \bagt{\tuple{\Index, F}}]\iton}
{\Gamma \vdash \bigunion\,\vec{C} : \bagt{\tuple{\Index, F}}}

\inferrule[For]
{[\Sigma(t_i) = \bagt{A_i}]\iton \\
 \Gamma, [x_i:A_i]\iton \vdash X : \Bool \\
 \Gamma, [x_i:A_i]\iton \vdash C : \bagt{\tuple{\Index, F}}}
{\Gamma \vdash \For\,([x_i \gets t_i]\iton\,\Where\,X)\,
                                       C : \bagt{\tuple{\Index, F}}}

\inferrule[IsEmpty]
{\Gamma \vdash L : \bagt{\tuple{\Index, F}}}
{\Gamma \vdash \isEmpty{L} : \Bool}

\end{mathpar}

\caption{Typing rules for shredded terms
\label{fig:shredded-typing-rules}}

\begin{mathpar}
\inferrule[Base]
{ }
{\vdash O(\sterm) : O(\stype)}

\inferrule[Record]
{[ \vdash \sht{A_i}(\sterm) : \sht{A'_i}(\stype)]\iton}
{\vdash \record{\cl_i:\sht{A_i}(\sterm)}\iton : \record{\cl_i:\sht{A'_i}(\stype)}\iton}

\inferrule[Bag]
{\vdash \sht{A}(\sterm) : \sht{A'}(\stype) \\ \vdash L : A \\ \erase(\sht{A}(\sterm)) = A}
{\vdash \shann{\bagt{\sht{A}(\sterm)}}{L} : \shann{\bagt{\sht{A'}(\stype)}}{A}}
\end{mathpar}
 We write $\sterm$ for the set of shredded terms, and
$\stype$ for the set of shredded types.
\caption{Typing rules for shredded packages\label{fig:shredded-package-typing-rules}}

\end{figure*}


\section{Query normalisation}
\label{app:sn}

\newcommand{\Split}[2]{\llparenthesis{#1}\rrparenthesis_{#2}}
\newcommand{\splitbag}[2]{\mathcal{B}\llparenthesis{#1}\rrparenthesis_{#2}^\star}
\newcommand{\splitr}[2]{\mathcal{R}\llparenthesis{#1}\rrparenthesis_{#2}}
\newcommand{\splitfield}[2]{\mathcal{F}\llparenthesis{#1}\rrparenthesis_{#2}}

Following previous work~\cite{lindley12tldi}, we separate query
normalisation into a rewriting phase and a type-directed structurally
recursive function. Where we diverge from this previous work is that we
extend the rewrite relation to hoist all conditionals up to the
nearest comprehension (in order to simplify the rest of the
development), and the structurally recursive function is generalised
to handle nested data. Thus normalisation can be divided into three
stages.

\begin{compactitem}
\item The first stage performs symbolic evaluation, that is,
  $\beta$-reduction and commuting conversions, flattening unnecessary
  nesting and eliminating higher-order functions.

\item The second stage hoists all conditionals up to the nearest
  enclosing comprehension in order to allow them to be converted to where
  clauses.

\item The third and final stage hoists all unions up to the top-level,
  $\eta$-expands tables and variables, and turns all conditionals into
  where clauses.
\end{compactitem}

\paragraph*{Weak normalisation and strong normalisation}

Given a term $M$ and a rewrite relation $\rewriteto$, we write $M
\nrewriteto$ if $M$ is irreducible, that is no rewrite rules in
$\rewriteto$ apply to $M$. The term $M$ is said to be in \emph{normal
  form}.

A term $M$ is \emph{weakly normalising} with respect to a rewrite
relation $\rewriteto_r$, or \emph{$r$-WN}, if there exists a finite
reduction sequence
\[
M \rewriteto M_1 \rewriteto \dots \rewriteto M_n \nrewriteto
\]
A term $M$ is \emph{strongly normalising} with respect to a rewrite
relation $\rewriteto_r$, or \emph{$r$-SN}, if every reduction sequence
starting from $M$ is finite. If $M$ is $r$-SN, then we write
$\maxr_r(M)$ for the maximum length of a reduction sequence starting
from $M$.

A rewrite relation $\rewriteto_r$ is \emph{weakly-normalising}, or
\emph{WN}, if all terms $M$ are weakly-normalising with respect to
$\rewriteto_r$.  Similarly, a rewrite relation $\rewriteto_r$ is
\emph{strongly-normalising}, or \emph{SN}, if all terms $M$ are
strongly-normalising with respect to $\rewriteto_r$.

For any $\rewriteto_r$ we can define a partial function $\nf_r$ such
that $\nf_r(M)$ is a normal form of $M$, if one exists, otherwise
$\nf_r(M)$ is undefined.  Note that there may be more than one normal
form for a given $M$; $\nf_r$ chooses a particular normal form for
each $M$ that has one.  If $\leadsto_r$ is weakly normalising, then
$\nf_r$ is a total function.

We now describe each normalisation stage in turn.

\subsection{Symbolic evaluation}

The $\beta$-rules perform symbolic evaluation, including substituting
argument values for function parameters, record field projection,
conditionals where the test is known to be true or false, or iteration
over singleton bags.
\begin{equations}
(\lam{x}N)~ M &\rewriteto_c&  N[x := M] \\
\record{\ora{\cl=M}}.\cl_i &\rewriteto_c& M_i \\
\If\, \True\, \Then\, M\, \Else\,N &\rewriteto_c& M \\
\If\, \False\, \Then\, M\, \Else\,N &\rewriteto_c& N \\
\For\,(x \gets \ret{M})\, N &\rewriteto_c& N[x := M] \\
\end{equations}
Fundamentally, $\beta$-rules always follow the same pattern. Each is
associated with a particular type constructor $T$, and the left-hand
side always consists of an \emph{introduction} form for $T$ inside an
\emph{elimination} form for $T$. For instance, in the case of
functions (the first $\beta$-rule above), the introduction form is a
lambda and the elimination form is an application. Applying a
$\beta$-rule \emph{eliminates} $T$ in the sense that the introduction
form from the left-hand side either no longer appears or has been
replaced by a term of a simpler type on the right-hand side.


Each instance of a $\beta$-rule is associated with an
\emph{elimination frame}. An \emph{elimination frame} is simply the
elimination form with a designated \emph{hole} $[~]$ that can be
\emph{plugged} with another expression. For instance, elimination
frames for the function $\beta$-rule are of the form $E[~] =
[~]~M$. If we plug an introduction form $\lambda x . N$ into
$E[~]$, written $E[\lambda x . N]$, then we obtain the left-hand side
of the associated $\beta$-rule $(\lambda x.N)~M$.

\begin{tronly}
  (This notion of an expression with a hole that can be filled by
  another expression is commonly used in rewriting and operational
  semantics for programming languages~\cite[ch. 19]{pierce02}; here,
  it is not essential but helps cut down the number of explicit rules,
  and helps highlight commonality between rules.)
\end{tronly}
The elimination frames of $\langname$ are as follows.
\begin{syntax}
&  E[~]  &  \cce & [~]~ M
            \mid [~].\cl
            \mid  \If\,[~]\, \Then\,M\, \Else\, N
            \mid \For\,(x \gets [~])\,N 
\end{syntax}%
The following rules express that comprehensions, conditionals, empty
bag constructors, and unions can always be \emph{hoisted} out of the
above elimination frames. In the literature such rules are often
called \emph{commuting conversions}. They are necessary in order to
expose all possible $\beta$-reductions. For instance,
\[(\If\,M\,\Then\,\record{\cl=N}\,\Else\,N').\cl\]
 cannot
$\beta$-reduce, but $\If\,M\,\Then\,\record{\cl=N}.\cl\,\Else\,N'.\cl$
can.

%
%
\begin{equations}
E[\For\,(x \gets M)\, N] &\rewriteto_c& \For\,(x \gets M)\,
E[N] \\ E[\If\,L\,\Then\,M\,\Else\,N] &\rewriteto_c&
\If\,L\,\Then\,E[M]\,\Else\,E[N] \\ E[\emptybag] &\rewriteto_c& \emptybag
\\ E[M_1 \union M_2] &\rewriteto_c& E[M_1] \union E[M_2] \\
\end{equations}%
For example:
\[(\If\,L\,\Then\,M_1\,\Else\,M_2 )~M \rewriteto_c
\If\,L\,\Then\,M_1~M\,\Else\,M_2 ~M
\]
\begin{sloppypar}
Note that some combinations of elimination frames and rewrite rule are
impossible in a well-typed term, such as $ \If\,\emptybag\, \Then\,M\,
\Else\, N$.
For the purposes of reduction we treat $\Empty$ like an uninterpreted
constant, that is, we do reduce inside emptiness tests, but they do
not in any other way interact with the reduction rules.
\end{sloppypar}

Next we prove that $\rewriteto_c$ is strongly normalising. 
\begin{tronly}
  The proof is based on a previous proof of strong normalisation for
  simply-typed $\lambda$-calculus with sums~\cite{lindley07tlca},
  which generalises the $\top\top$-lifting
  approach~\cite{lindley05tlca}, which in turn extends Tait's proof of
  strong normalisation for simply-typed $\lambda$-
  calculus~\cite{tait67intensional}.
\end{tronly}

\newcommand{\fsId}{\mathit{Id}}
\newcommand{\fsArrow}{\multimap}
\newcommand{\fsApp}[2]{{#1}[{#2}]}
\newcommand{\fsLength}[1]{|{#1}|}

\newcommand{\IFF}{\quad\Longleftrightarrow\quad}
\newcommand{\DEFF}{\quad\stackrel{{\mathit{def}}}{\Longleftrightarrow}\quad}

\newcommand{\hastype}{:}

\newcommand{\reducestoc}{\rewriteto_c}

\newcommand{\subst}[3]{{#1}[{#2}\mathbin{:{=}}{#3}]}
\newcommand{\substrange}[6]{{#1}[{#2}\mathbin{:{=}}#3#4#5\mathbin{:{=}}{#6}]}

\paragraph*{Frame stacks}

\begin{syntax}
(\text{frame stacks}) & S &\cce& \fsId \mid S \circ E \\[1ex]
(\text{stack length}) & \fsLength{\fsId}     &=& 0 \\
                      & \fsLength{S \circ E} &=& \fsLength{S} + 1 
\sepskip\\
(\text{plugging})     & \fsApp{\fsId}{M}       &=& M \\
                      & \fsApp{(S \circ E)}{M} &=& \fsApp{S}{(\fsApp{E}{M})} \\
\end{syntax}  

\begin{tronly}
  Following previous work~\cite{lindley07tlca} we assume variables
  are annotated with types.
\end{tronly}
We assume variables are annotated with types.
We write $A \fsArrow B$ for the type of frame
stack $S$, if $\fsApp{S}{M} \hastype B$ for all terms $M \hastype A$.

\paragraph*{Frame stack reduction}
\[
\begin{array}{@{}rcl@{}}
S \reducestoc S'
   &\DEFF& \forall M . \fsApp{S}{M} \reducestoc \fsApp{S'}{M} \\
   &\IFF& \fsApp{S}{x} \reducestoc \fsApp{S'}{x}
\end{array}
\]
\noindent
Frame stacks are closed under reduction. A frame stack $S$ is
\emph{$c$-strongly normalising}, or \emph{$c$-SN}, if all reduction
sequences starting from $S$ are finite.

\begin{lemma}
\label{lem:frame-reduction}
If $S \reducestoc S'$, for frame stacks $S, S'$, then $\fsLength{S'}
\leq \fsLength{S}$.
\end{lemma}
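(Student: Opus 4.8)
\noindent
The plan is to unfold the definition of frame-stack reduction and run a finite case analysis on the single redex that witnesses $S \reducestoc S'$. First I would pick a fresh variable $x$, so that $S \reducestoc S'$ amounts to the one-step reduction $\fsApp{S}{x} \reducestoc \fsApp{S'}{x}$, and write $S = \fsId \circ E_1 \circ \cdots \circ E_n$ with $\fsLength{S} = n$ and $\fsApp{S}{x} = E_1[E_2[\cdots E_n[x]\cdots]]$. The key syntactic observation to record is that plugging an elimination frame never yields an \emph{introduction} form, nor $\emptybag$, nor a union: for each $i$, $E_{i+1}[\cdots]$ is an application, a projection, a conditional (when $E_{i+1} = \If\,[~]\,\Then\,M\,\Else\,N$), or a comprehension (when $E_{i+1} = \For\,(y \gets [~])\,N$); and $E_n[x]$, having a bare variable in its hole, is itself never a redex. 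Consequently every redex of $\fsApp{S}{x}$ occurs either (a) at a \emph{frame junction}, i.e.\ at the spine position holding $E_i[E_{i+1}[\cdots]]$ for some $1 \le i \le n-1$, or (b) strictly inside one of the side subterms of some frame (the argument of an application frame, a branch of a conditional frame, or the body of a comprehension frame).

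Next I would treat these two cases. In case (a) no $\beta$-rule can fire, since $E_{i+1}[\cdots]$ is not an introduction form, and neither the union nor the empty-bag commuting conversion can fire, since $E_{i+1}[\cdots]$ is neither a union nor $\emptybag$; so the redex must be one of $E[\If\,L\,\Then\,M\,\Else\,N] \rewriteto_c \If\,L\,\Then\,E[M]\,\Else\,E[N]$ or $E[\For\,(y \gets M)\,N] \rewriteto_c \For\,(y \gets M)\,E[N]$, with $E = E_i$ and $E_{i+1}$ the if- or for-frame whose subject is the remaining spine $R = E_{i+2}[\cdots E_n[x]\cdots]$ (with $R = x$ when $i = n-1$). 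In either subcase the contractum has the form $E'[R]$ for a single new elimination frame $E'$ that absorbs both $E_i$ and $E_{i+1}$ into the branches (resp.\ the body), so the reduct of $\fsApp{S}{x}$ is $\fsApp{S'}{x}$ for $S' = \fsId \circ E_1 \circ \cdots \circ E_{i-1} \circ E' \circ E_{i+2} \circ \cdots \circ E_n$, whence $\fsLength{S'} = n-1 < \fsLength{S}$. In case (b) the reduction takes place wholly inside a fixed side subterm, rewriting it in place and turning $E_i$ into another elimination frame $E'_i$ of the same shape, so the reduct is $\fsApp{S'}{x}$ for $S' = \fsId \circ E_1 \circ \cdots \circ E'_i \circ \cdots \circ E_n$ with $\fsLength{S'} = n = \fsLength{S}$. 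In both cases $\fsLength{S'} \le \fsLength{S}$, which is the claim.

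The part I expect to require the most care — though it is elementary — is the exhaustive verification that no redex of $\fsApp{S}{x}$ escapes this dichotomy: that the hole position contributes nothing because it holds only the variable $x$, that the spine position $n$ is inert, and that among the $\beta$-rules and commuting conversions the only ones straddling two consecutive spine frames are the two if/for commuting conversions listed above (ruling out, frame-shape by frame-shape, every $\beta$-redex as well as the union and empty-bag conversions). Once that enumeration is pinned down, the rest is just bookkeeping on the definitions of $\circ$ and $\fsLength{-}$.
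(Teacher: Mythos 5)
Your proof is correct and matches the substance of the paper's (one-line) argument: the paper says only ``induction on the structure of $S$'', and your case analysis on the position of the redex in $\fsApp{S}{x}$ is exactly what that induction unfolds to — the only reductions that touch the spine are the $\If$/$\For$ commuting conversions, each of which merges two adjacent frames into one (length $n-1$), while all other reductions stay inside a side subterm of a single frame (length unchanged). The enumeration you flag as the delicate part is indeed exhaustive, since an elimination frame plugged with anything is an application, projection, conditional, or comprehension — never an introduction form, $\emptybag$, or a union — so no $\beta$-rule and neither of the remaining commuting conversions can fire at a frame junction.
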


\begin{proof}
Induction on the structure of $S$.
\end{proof}

\paragraph*{Reducibility}  We define \emph{reducibility} as follows:
~
\begin{compactitem}
\item $\fsId$ is reducible.
\item $S \circ( [~]~N) \hastype (A \to B) \fsArrow C$ is reducible
  if $S$ and $N$ are reducible.
\item $S \circ ([~].\cl) \hastype (\ora{\cl:A}) \fsArrow C$ is
  reducible if $S$ is reducible.
\item $S \hastype \bagt{A} \fsArrow C$ is reducible if
  $\fsApp{S}{\ret{M}}$ is $c$-SN for all reducible $M \hastype A$.
\item $S \hastype \Bool \fsArrow C$ is reducible if $\fsApp{S}{\True}$
  is $c$-SN and $\fsApp{S}{\False}$ is $c$-SN.
\item $M:A$ is reducible if $\fsApp{S}{M}$ is $c$-SN for all reducible $S
\hastype A \fsArrow C$.
\end{compactitem}

\begin{lemma}
\label{lem:redterm-sn}
If $M \hastype A$ is reducible then $M$ is $c$-SN.
\end{lemma}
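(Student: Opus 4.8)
\emph{Proof proposal.}
The plan is to read the statement off the definition of reducibility. Its first clause states that the empty frame stack $\fsId$ is reducible, and $\fsId$ has type $A \fsArrow A$ for every $A$. So, given a reducible term $M \hastype A$, the last clause of the definition --- ``$M$ is reducible if $\fsApp{S}{M}$ is $c$-SN for all reducible $S \hastype A \fsArrow C$'' --- can be instantiated with $S := \fsId$, yielding that $\fsApp{\fsId}{M}$ is $c$-SN. Since $\fsApp{\fsId}{M} = M$ by the definition of plugging, $M$ is $c$-SN, as required. So the argument is a single line.

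The only point that needs care is that the definition of reducibility is genuinely well founded, so that this instantiation is legitimate. Reducibility is defined by recursion on the structure of types, providing at each type a notion of reducible frame stack \emph{out of} that type and of reducible term \emph{of} that type. The frame-stack clauses only ever appeal to smaller types --- either by decomposing a function- or record-typed stack into a stack out of a structurally smaller type together with smaller components, or, in the $\bagt{A}$ and $\Bool$ cases, by referring to $c$-SN of plugged-in closed introduction forms (hence to reducibility of terms at the strictly smaller type $A$). In particular the stack clauses never mention reducibility of terms at the \emph{same} type, whereas the term clause at a type quantifies over stacks out of that type, which have already been defined at the same stage. Consequently $\fsId \hastype A \fsArrow A$ is available when we settle reducibility of terms at $A$, and the one-line argument goes through.

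I would present this lemma together with its companion --- \emph{every reducible frame stack is $c$-SN} --- since that is the fact that actually carries weight later. The companion is proved by a case analysis on $S$, using Lemma~\ref{lem:frame-reduction} to bound reduction lengths and the observation $S \reducestoc S' \iff \fsApp{S}{x} \reducestoc \fsApp{S'}{x}$ for a fresh variable $x$, so that $c$-SN of $\fsApp{S}{x}$ transfers to $c$-SN of $S$. For the lemma actually stated here there is essentially no obstacle; the only mild subtlety belongs to the companion, namely the $\bagt{A}$- and $\Bool$-typed stack cases, where one must exhibit a reducible argument (a variable, via the standard ``neutral terms are reducible'' observation) in order to convert the hypothesis ``$\fsApp{S}{\ret{M}}$ is $c$-SN for all reducible $M$'' into strong normalisation of $S$ itself.
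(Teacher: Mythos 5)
Your proposal is correct and matches the paper's proof exactly: the paper also derives the result immediately from the reducibility of $\fsId$ together with the final clause of the definition of reducibility on terms, using $\fsApp{\fsId}{M} = M$. Your additional remarks about well-foundedness and the companion fact for frame stacks (which the paper obtains as Corollary~\ref{lem:redfs-sn} via the reducibility of variables, Lemma~\ref{lem:xred}) are consistent with the paper's development but not needed for this lemma.
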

\begin{proof}
Follows immediately from reducibility of $\fsId$ and the definition of
reducibility on terms.
\end{proof}

\begin{lemma}
\label{lem:xred}
$x \hastype A$ is reducible.
\end{lemma}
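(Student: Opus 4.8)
The plan is to prove this by induction on the structure of the type $A$, but to make the induction close I will first strengthen what I am proving: I will establish simultaneously, for every type $A$ and every $C$, that (i) every reducible frame stack $S \hastype A \fsArrow C$ is $c$-SN, and (ii) every variable $x \hastype A$ is reducible. These two statements are genuinely intertwined — as we shall see, proving (i) in the bag case needs (ii) at a smaller type, and proving (ii) at type $A$ needs (i) at type $A$ — so a single induction on $A$ proving their conjunction is the natural route, rather than trying to sequence the two facts.

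For part (ii), I unfold the definition of reducibility on terms: I must show that $\fsApp{S}{x}$ is $c$-SN for every reducible $S \hastype A \fsArrow C$. By part (i) of the same induction step, $S$ is $c$-SN. The key observation is that a bare variable is never the active part of a redex: none of the $\beta$-rules nor the commuting conversions fires unless the subterm in elimination position is an introduction form (a $\lambda$-abstraction, a record, $\True$/$\False$, or a $\ret{-}$) or one of $\For$, $\If$, $\emptybag$, $\union$, and $x$ is none of these, nor does plugging $x$ into any elimination frame $E[~]$ produce one. Hence every redex of $\fsApp{S}{x}$ lies strictly inside one of the frames of $S$, so $\fsApp{S}{x} \reducestoc N$ implies $N = \fsApp{S'}{x}$ for some frame stack $S'$ with $S \reducestoc S'$. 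Consequently an infinite reduction sequence from $\fsApp{S}{x}$ would project to an infinite reduction sequence from $S$, contradicting $S$ being $c$-SN; so $\fsApp{S}{x}$ is $c$-SN, i.e.\ $x \hastype A$ is reducible.

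For part (i), I proceed by cases on which clause of the reducibility definition applies to $S \hastype A \fsArrow C$. If $S = \fsId$ it has no reductions, hence is $c$-SN. If $S = S' \circ ([~]~N)$ with $A = A' \to B$, then $S'$ and $N$ are reducible; the induction hypothesis at the structurally smaller type $B$ gives that $S'$ is $c$-SN, and Lemma~\ref{lem:redterm-sn} gives that $N$ is $c$-SN; since every reduction of $S$ happens inside $S'$ or inside $N$, well-foundedness of the product order shows $S$ is $c$-SN. The case $S = S' \circ ([~].\cl)$ is identical, involving only the component $S'$. If $S \hastype \bagt{A'} \fsArrow C$ is reducible then $\fsApp{S}{\ret{M}}$ is $c$-SN for every reducible $M \hastype A'$; instantiating with a fresh variable $x \hastype A'$, which is reducible by the induction hypothesis at the smaller type $A'$ (part (ii)), we get that $\fsApp{S}{\ret{x}}$ is $c$-SN, and since every reduction of $S$ lifts to one of $\fsApp{S}{\ret{x}}$, $S$ is $c$-SN. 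The case $S \hastype \Bool \fsArrow C$ is analogous, using $\fsApp{S}{\True}$ and $\fsApp{S}{\False}$.

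The main obstacle is exactly the bag case (and, similarly, the boolean case) of part (i): the reducibility predicate on frame stacks of bag type is defined ``negatively'', as $c$-SN of all plugs $\ret{M}$ for reducible $M$, so turning it back into plain $c$-SN of $S$ requires a concrete reducible inhabitant of the element type — and the only one at hand is a variable, which is precisely what we are proving. The simultaneous induction on types resolves this apparent circularity, since the variable being plugged in has strictly smaller type. The one point that must be checked with some care, though it is routine, is the claim underlying part (ii) that a variable participates in no redex: this has to be verified against each of the $\beta$-rules and commuting conversions together with the four elimination frames of $\langname$.
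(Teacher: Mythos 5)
Your proof is correct and takes essentially the same route as the paper: the paper also argues by induction on $A$ using Lemma~\ref{lem:redterm-sn}, and your clause (i) is precisely the paper's Corollary~\ref{lem:redfs-sn}, which you have (quite reasonably) folded into the induction as a mutual statement rather than deriving it afterwards. One small slip in your part (ii): $E[x]$ \emph{can} be an $\If$ or a $\For$ (take $E = \If\,[~]\,\Then\,M\,\Else\,N$ or $E = \For\,(y \gets [~])\,N$), so a commuting conversion may fire across two adjacent frames of $S$; this does not damage the argument, since the reduct is still of the form $\fsApp{S'}{x}$ with $S \reducestoc S'$ --- the closure of frame stacks under reduction recorded alongside Lemma~\ref{lem:frame-reduction} --- so the projection to an infinite reduction of $S$ still goes through.
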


\begin{proof}
By induction on $A$ using Lemma~\ref{lem:frame-reduction} and
Lemma~\ref{lem:redterm-sn}.
\end{proof}

\begin{corollary}
\label{lem:redfs-sn}
If $S \hastype A \fsArrow C$ is reducible then $S$ is $c$-SN.
\end{corollary}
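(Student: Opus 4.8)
The plan is to read this off directly from the term-level clause of the reducibility definition, together with Lemma~\ref{lem:xred}. First I would fix a variable $x \hastype A$, chosen fresh so that it occurs nowhere in $S$; by Lemma~\ref{lem:xred} this $x$ is reducible. Now recall the term clause: a term $M \hastype A$ is reducible exactly when $\fsApp{S'}{M}$ is $c$-SN for \emph{every} reducible frame stack $S' \hastype A \fsArrow C$. Instantiating this with $M := x$ (reducible) and $S' := S$ (reducible by hypothesis) immediately yields that $\fsApp{S}{x}$ is $c$-SN.

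Second, I would transport strong normalisation from $\fsApp{S}{x}$ back up to $S$ itself. Frame-stack reduction is defined so that $S \reducestoc S'$ holds iff $\fsApp{S}{x} \reducestoc \fsApp{S'}{x}$. Hence any reduction sequence $S \reducestoc S_1 \reducestoc S_2 \reducestoc \cdots$ induces, step for step, a reduction sequence $\fsApp{S}{x} \reducestoc \fsApp{S_1}{x} \reducestoc \fsApp{S_2}{x} \reducestoc \cdots$ of the same length, each step of which is a genuine $\reducestoc$-step. Were $S$ to admit an infinite such sequence, $\fsApp{S}{x}$ would too, contradicting the previous paragraph; therefore every reduction sequence from $S$ is finite, i.e.\ $S$ is $c$-SN.

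I do not expect any real obstacle here: the substantive work is already packaged in Lemma~\ref{lem:xred} (reducibility of variables) and Lemma~\ref{lem:redterm-sn} (reducible terms are $c$-SN), and this corollary is essentially just their instantiation at a frame stack. The only two points I would pause over are that $x$ may indeed be chosen fresh, so that plugging it into $S$ introduces no spurious capture, and that the correspondence between reduction steps of $S$ and of $\fsApp{S}{x}$ is faithful in both directions — which is exactly what the definition of $\reducestoc$ on frame stacks gives us.
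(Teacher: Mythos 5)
Your proof is correct and is exactly the argument the paper intends: the corollary follows by instantiating the term-level reducibility clause at a reducible variable $x$ (Lemma~\ref{lem:xred}) to get that $\fsApp{S}{x}$ is $c$-SN, and then transporting strong normalisation back to $S$ via the definition of frame-stack reduction, $S \reducestoc S' \IFF \fsApp{S}{x} \reducestoc \fsApp{S'}{x}$. The paper leaves this proof implicit, and your write-up fills it in faithfully with no gaps.
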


Each type constructor has an associated $\beta$-rule.  Each
$\beta$-rule gives rise to an SN-closure property.

\begin{lemma}[SN-closure]
\label{lem:sn-closure}
~
\begin{sloppypar}
\begin{description}
\item[($\to$)] If $\fsApp{S}{\subst{M}{x}{N}}$ and $N$ are $c$-SN then
  $\fsApp{S}{(\lambda x.M)~N}$ is $c$-SN.
\item[($\record{}$)] If $\vec{M}$, are $c$-SN then
  $\fsApp{S}{\record{\ora{\cl=M}}.\cl_i}$ is $c$-SN.
\item[($\bagt{-}$)] If $\fsApp{S}{\subst{N}{x}{M}}$ and $M$ are $c$-SN
  then \\ $\fsApp{S}{\For\,(x \gets \ret{M})\,N}$ is $c$-SN.
\item[($\Bool$)] If $\fsApp{S}{N}$ and $\fsApp{S}{N'}$ are $c$-SN then
  \\
  $\fsApp{S}{\If\,\True\,\Then\,N\,\Else\,N'}$ is $c$-SN and \\
  $\fsApp{S}{\If\,\False\,\Then\,N\,\Else\,N'}$ is $c$-SN.
\end{description}
\end{sloppypar}
\end{lemma}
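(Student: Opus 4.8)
The plan is to prove each of the four clauses separately by well-founded induction, using the fact that $\reducestoc$ is finitely branching: a term $t$ is $c$-SN exactly when all of its one-step reducts are $c$-SN, and then $\maxr_c(t) = 1 + \max\{\maxr_c(t') \mid t \reducestoc t'\}$. In every clause the term to be shown $c$-SN has the shape $\fsApp{S}{r}$, where $r$ is the introduction-inside-elimination redex under consideration ($(\lam{x}M)~N$, $\record{\ora{\cl=M}}.\cl_i$, $\For\,(x \gets \ret{M})\,N$, or $\If\,\True\,\Then\,N\,\Else\,N'$), and the hypotheses state that certain terms are $c$-SN: $\fsApp{S}{\subst{M}{x}{N}}$ and $N$ for $(\to)$; $\fsApp{S}{M_i}$ and all the $M_j$ for $(\record{})$; $\fsApp{S}{\subst{N}{x}{M}}$ and $M$ for $(\bagt{-})$; and $\fsApp{S}{N}$ and $\fsApp{S}{N'}$ for $(\Bool)$. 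I would run the induction on the lexicographically ordered pair whose first component is the sum of the $\maxr_c$-values of the hypothesised terms and whose second component is $\fsLength{S}$, so that it suffices to show every one-step reduct of $\fsApp{S}{r}$ is $c$-SN.

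The case analysis is on where the reduction occurs. \emph{(i)} If $r$ is contracted by its own $\beta$-rule, the reduct is literally one of the hypothesised $c$-SN terms. \emph{(ii)} If the reduction happens strictly inside $S$, then $S \reducestoc S'$, so by the definition of frame-stack reduction each hypothesised term of the form $\fsApp{S}{\cdot}$ also reduces, the first component of the measure strictly decreases, and the induction hypothesis (for the stack $S'$) applies. \emph{(iii)} If the reduction happens inside a subterm of $r$, I use that substitution commutes with reduction: $M \reducestoc M'$ gives $\subst{M}{x}{N} \reducestoc \subst{M'}{x}{N}$ in one step, while $N \reducestoc N'$ makes $\subst{M}{x}{N}$ reduce to $\subst{M}{x}{N'}$ in as many steps as $x$ occurs in $M$ (possibly zero). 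So the relevant term $\fsApp{S}{\subst{M}{x}{N}}$ (resp.\ $\fsApp{S}{M_i}$, $\fsApp{S}{\subst{N}{x}{M}}$) either strictly reduces, or stays fixed when the substituted variable does not occur, in which case one of $\maxr_c(N), \maxr_c(M), \maxr_c(M_j)$ strictly decreases instead; either way the first component of the measure drops and the induction hypothesis applies. \emph{(iv)} For $(\bagt{-})$ and $(\Bool)$ only, a commuting conversion may fire at the junction between $S$ and $r$ --- this cannot occur for $(\to)$ or $(\record{})$, because $\lam{x}M$ and $\record{\ora{\cl=M}}$ are not among the $\For/\If/\emptybag/\union$ shapes that commuting conversions act on, whereas $\For\,(x \gets \ret{M})\,N$ and $\If\,\True\,\Then\,N\,\Else\,N'$ are. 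Writing $S = S' \circ E$ and $\alpha$-renaming so that the bound variable is not free in $E$, the reduct is $\fsApp{S'}{\For\,(x \gets \ret{M})\,E[N]}$, resp.\ $\fsApp{S'}{\If\,\True\,\Then\,E[N]\,\Else\,E[N']}$; here the first component of the measure is unchanged, since $\fsApp{S'}{E[\subst{N}{x}{M}]} = \fsApp{S}{\subst{N}{x}{M}}$ and $\fsApp{S'}{E[N]} = \fsApp{S}{N}$, $\fsApp{S'}{E[N']} = \fsApp{S}{N'}$, but $\fsLength{S'} = \fsLength{S}-1$, so the pair decreases lexicographically and the induction hypothesis applies to the larger redex sitting in the shorter stack. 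These possibilities exhaust the one-step reducts of $\fsApp{S}{r}$, so it is $c$-SN.

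I expect the crux to be case \emph{(iv)}: a commuting conversion neither contracts the redex nor shrinks any subterm, so the ``step-count'' part of the measure is unaffected, and the only thing that drives the induction is that the frame stack becomes strictly shorter --- this is precisely why the measure must be lexicographic with $\fsLength{S}$ as its second component, and it is the reason for introducing frame stacks (and the length monotonicity of Lemma~\ref{lem:frame-reduction}) in the first place. A secondary point needing care is case \emph{(iii)} with a vacuous or duplicating substitution: it is why the hypotheses must separately record that the substituted terms $N$ and $M$ (and every record component, not just the projected one) are $c$-SN, rather than trying to extract this from $\fsApp{S}{\subst{M}{x}{N}}$ alone.
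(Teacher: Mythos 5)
Your proof is correct and takes essentially the same approach as the paper: each clause is proved by induction on a measure built from the maximal reduction lengths of the hypothesised $c$-SN terms, augmented for the $\bagt{-}$ and $\Bool$ clauses by the frame-stack length $\fsLength{S}$ (the paper adds it as a summand where you use it as a lexicographic second component), precisely to absorb the commuting-conversion reducts you identify as the crux. The paper's proof states only these measures, so your case analysis of the one-step reducts is exactly the omitted detail, carried out correctly.
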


\begin{proof}
~
\begin{description}
\item[($\to$):] By induction on $\max_c(S)+\max_c(M)+\max_c(N)$.
\item[($\record{}$):] By induction on
  \[
  \max_c{S}+(\sum\iton \max_c(M_i))+\max_c(N)+(\sum\iton \max_c(M'_i)).
  \]
\item[($\bagt{}$):] By induction on
  \[\fsLength{S}+\max_c(\fsApp{S}{\subst{N}{x}{M}})+\max_c(M).\]
\item[($\Bool$):] By induction on
  $\fsLength{S}+\max_c(\fsApp{S}{N})+\max_c(\fsApp{S}{N'})$.
\end{description}
This completes the proof.
\end{proof}

Now we obtain reducibility-closure properties for each type
constructor.

\begin{lemma}[reducibility-closure]
\label{lem:red-closure}
~
\begin{description}
\item[($\to$)] If $\subst{M}{x}{N}$ is reducible for all reducible $N$,
  then $\lambda x.M$ is reducible.
\item[($\record{}$)] If $\vec{M}$ are reducible, then
  $\record{\ora{\cl=M}}$ is reducible.
\item[($\bagt{}$)] If $M$ is reducible, $\subst{N}{x}{M'}$ is reducible
  for all reducible $M'$, then $\For\,(x \gets M)\,N$ is
  reducible.
\item[($\Bool$)] If $M, N, N'$ are reducible then
  $\If\,M\,\Then\,N\,\Else,N'$ is reducible.
\end{description}
\end{lemma}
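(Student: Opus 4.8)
The plan is to verify each clause directly from the definition of reducibility: to show a term $P \hastype A$ is reducible it suffices to show $\fsApp{S}{P}$ is $c$-SN for every reducible frame stack $S \hastype A \fsArrow C$. Each such obligation is discharged by rewriting $\fsApp{S}{P}$ so that the relevant $\beta$-redex becomes visible and then invoking the matching clause of the SN-closure lemma (Lemma~\ref{lem:sn-closure}), fed with the $c$-strong-normalisation facts that hold because the subterms in question are reducible (Lemma~\ref{lem:redterm-sn}), frame stacks are reducible (Corollary~\ref{lem:redfs-sn}), or variables are reducible (Lemma~\ref{lem:xred}). The argument splits along the positive/negative line: for $\lambda$-abstractions and records we case-analyse the shape of $S$, whereas for comprehensions and conditionals we instead \emph{extend} $S$ by the appropriate elimination frame and verify that the extension is again a reducible frame stack.

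Take the negative cases first. For ($\to$), let $S \hastype (A\to B)\fsArrow C$ be reducible; since application is the only elimination frame that can follow a term of function type, $S$ is either $\fsId$ or $S'\circ([~]~N)$ with $S'$ and $N$ reducible. If $S = \fsId$, instantiating the hypothesis at the reducible variable $x$ shows that $M = \subst{M}{x}{x}$ is reducible, hence $c$-SN, and since a $\lambda$-abstraction reduces only inside its body, $\lambda x.M$ is $c$-SN. If $S = S'\circ([~]~N)$, then $\fsApp{S}{\lambda x.M} = \fsApp{S'}{(\lambda x.M)~N}$; here $\subst{M}{x}{N}$ is reducible so $\fsApp{S'}{\subst{M}{x}{N}}$ is $c$-SN, and $N$ is $c$-SN, so SN-closure~($\to$) gives the claim. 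The record case is parallel: a reducible $S \hastype \record{\ora{\cl:A}}\fsArrow C$ is either $\fsId$ (and then $\record{\ora{\cl=M}}$ reduces only inside its components, each $c$-SN, so it is $c$-SN) or $S'\circ([~].\cl_i)$ with $S'$ reducible, in which case $\fsApp{S}{\record{\ora{\cl=M}}} = \fsApp{S'}{\record{\ora{\cl=M}}.\cl_i}$ is $c$-SN by SN-closure~($\record{}$), to which we also supply that $\fsApp{S'}{M_i}$ is $c$-SN.

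For the positive cases we use the stack-extension move. For ($\bagt{}$), assume $M \hastype \bagt{A}$ is reducible and $\subst{N}{x}{M'}$ is reducible for every reducible $M'$; let $S$ be a reducible frame stack at the type of $\For\,(x\gets M)\,N$, and form $S\circ(\For\,(x\gets[~])\,N) \hastype \bagt{A}\fsArrow C$. By definition this is a reducible frame stack precisely when $\fsApp{S}{\For\,(x\gets\ret{M'})\,N}$ is $c$-SN for every reducible $M'$, which follows from SN-closure~($\bagt{}$) since $\subst{N}{x}{M'}$ is reducible (so $\fsApp{S}{\subst{N}{x}{M'}}$ is $c$-SN) and $M'$ is $c$-SN. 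Reducibility of $M$ then gives that $\fsApp{(S\circ(\For\,(x\gets[~])\,N))}{M} = \fsApp{S}{\For\,(x\gets M)\,N}$ is $c$-SN. The conditional case has the same shape: given reducible $M \hastype \Bool$ and reducible $N, N'$, the stack $S\circ(\If\,[~]\,\Then\,N\,\Else\,N')$ is reducible since $\fsApp{S}{\If\,\True\,\Then\,N\,\Else\,N'}$ and $\fsApp{S}{\If\,\False\,\Then\,N\,\Else\,N'}$ are $c$-SN by SN-closure~($\Bool$) (using that $\fsApp{S}{N}$ and $\fsApp{S}{N'}$ are $c$-SN), and reducibility of $M$ then finishes it.

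Because all the combinatorial effort is absorbed into Lemma~\ref{lem:sn-closure}, these proofs are essentially bookkeeping and I do not expect a serious obstacle. The one point that genuinely needs care is the neutral base cases, $S = \fsId$ for $\lambda$-abstractions and for records: there one must remember to instantiate the hypothesis at the reducible variable $x$ in order to conclude that $M$ (respectively each $M_i$) is $c$-SN rather than merely well-typed, and to note that at top level no commuting conversion fires on a bare $\lambda$-abstraction or record, so its only reductions lie inside subterms.
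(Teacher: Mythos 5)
Your proof is correct and follows exactly the route the paper takes: the paper's own proof of this lemma is the one-line remark that each clause follows from the corresponding part of Lemma~\ref{lem:sn-closure} using Lemma~\ref{lem:redterm-sn} and Corollary~\ref{lem:redfs-sn}, and your argument is a faithful expansion of that remark (case analysis on the shape of reducible stacks at the negative types, stack extension at $\bagt{-}$ and $\Bool$). Your extra appeal to Lemma~\ref{lem:xred} in the $S=\fsId$ cases is a detail the paper leaves implicit but is indeed needed to conclude that the body of a bare $\lambda$-abstraction or the components of a bare record are $c$-SN.
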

\begin{proof}
Each property follows from the corresponding part of
Lemma~\ref{lem:sn-closure} using
Lemma~\ref{lem:redterm-sn} and Corollary~\ref{lem:redfs-sn}.
\end{proof}
We also require additional closure properties for the empty bag and
union constructs.
\begin{lemma}[reducibility-closure II]
\label{lem:red-closure-monadplus}
~
\begin{description}
\item[($\emptybag{}$)] The empty bag $\emptybag{}$ is reducible.
\item[($\union$)] If $M, N$ are reducible, then $M \union N$ is reducible.
\end{description}
\end{lemma}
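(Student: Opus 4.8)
The plan is to follow exactly the template already used for the other type constructors: first prove the two missing \emph{SN-closure} facts, and then derive reducibility-closure from them precisely as Lemma~\ref{lem:red-closure} is derived from Lemma~\ref{lem:sn-closure}. Concretely I would establish (i) if $S$ is $c$-SN then $\fsApp{S}{\emptybag}$ is $c$-SN, and (ii) if $\fsApp{S}{M}$ and $\fsApp{S}{N}$ are $c$-SN then $\fsApp{S}{M \union N}$ is $c$-SN (note (ii) already subsumes $S$ being $c$-SN). Granting these, the lemma is immediate: for any reducible frame stack $S \hastype \bagt{A} \fsArrow C$, Corollary~\ref{lem:redfs-sn} gives that $S$ is $c$-SN, so $\fsApp{S}{\emptybag}$ is $c$-SN and hence $\emptybag$ is reducible; and if $M$, $N$ are reducible then $\fsApp{S}{M}$ and $\fsApp{S}{N}$ are $c$-SN by the definition of reducibility on terms, so $\fsApp{S}{M \union N}$ is $c$-SN and hence $M \union N$ is reducible.

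For the two SN-closure facts the key observation is that $\emptybag$ and $M \union N$ are \emph{not} introduction forms eliminated by any elimination frame, so plugging them into $S$ creates no new $\beta$-redex; the only new redexes are instances of the commuting conversions $\fsApp{E}{\emptybag} \reducestoc \emptybag$ and $\fsApp{E}{M \union N} \reducestoc \fsApp{E}{M} \union \fsApp{E}{N}$ acting on the bottom frame $E$ of $S$. I would prove (i) by induction on $\fsLength{S} + \max_c(S)$ and (ii) by induction on $\fsLength{S} + \max_c(\fsApp{S}{M}) + \max_c(\fsApp{S}{N})$, in each case case-splitting on the possible one-step reducts. A redex lying entirely inside $S$ replaces $S$ by some $S'$ with $S \reducestoc S'$; then $\fsLength{S'} \leq \fsLength{S}$ by Lemma~\ref{lem:frame-reduction} and the $\max_c$ component strictly decreases, so the measure drops; for (ii) a redex inside $M$ or $N$ strictly decreases $\max_c(\fsApp{S}{M})$ or $\max_c(\fsApp{S}{N})$. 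For the absorption step in (i), write $S = S'' \circ E$; then $\fsApp{S}{\emptybag} \reducestoc \fsApp{S''}{\emptybag}$ with $\fsLength{S''} < \fsLength{S}$ and $S''$ still $c$-SN (an infinite reduction of $S''$ would lift to one of $S$), so the induction hypothesis applies. For the distribution step in (ii), again $S = S'' \circ E$ and $\fsApp{S}{M \union N} \reducestoc \fsApp{S''}{\fsApp{E}{M} \union \fsApp{E}{N}}$; here $\fsLength{S''} < \fsLength{S}$, and, decisively, $\fsApp{S''}{\fsApp{E}{M}} = \fsApp{S}{M}$ and $\fsApp{S''}{\fsApp{E}{N}} = \fsApp{S}{N}$ are $c$-SN by hypothesis, so the induction hypothesis at the shorter stack $S''$ yields that $\fsApp{S''}{\fsApp{E}{M} \union \fsApp{E}{N}}$ is $c$-SN. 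As in the analogous proofs, well-typedness excludes the nonsensical frame/redex combinations (e.g.\ $\If\,\emptybag\,\Then\,M\,\Else\,N$), so every $E$ appearing at the bottom of a bag-typed stack is a $\For$-frame.

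The only real obstacle I anticipate is the distribution rule for $\union$: unlike every $\beta$-rule, and unlike the $\emptybag$ absorption rule, it \emph{duplicates} the ambient context, so a naive structural or size induction on the plugged term fails. The resolution is to make $\fsLength{S}$ the dominant component of the induction measure and to exploit the identity $\fsApp{(S'' \circ E)}{M} = \fsApp{S''}{\fsApp{E}{M}}$, which shows that the two strictly shorter stacks produced by a distribution step are applied to terms already assumed $c$-SN, so the hypothesis at $S''$ applies directly. Everything else is routine bookkeeping entirely parallel to the proof of Lemma~\ref{lem:sn-closure}.
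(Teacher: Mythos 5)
Your proof is correct and takes essentially the same approach as the paper: the paper's own proof establishes both parts by induction on exactly your measures ($\fsLength{S}+\max_c(S)$ for $\emptybag$ and $\fsLength{S}+\max_c(\fsApp{S}{M})+\max_c(\fsApp{S}{N})$ for $\union$), with the hoisting of the bag constructor out of the innermost elimination frame --- which strictly decreases the stack length while preserving the other components --- as the only interesting case. Your explicit factoring through SN-closure statements merely makes precise what the paper obtains from its reducibility hypotheses via Corollary~\ref{lem:redfs-sn} and the definition of reducibility on terms, so the two arguments coincide in substance.
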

\begin{proof}
~
\begin{description}
\item[($\emptybag{}$):] Suppose $S : \bagt{A} \fsArrow C$ is reducible. We
  need to prove that $\fsApp{S}{\emptybag{}}$ is $c$-SN. The proof is by
  induction on $\fsLength{S} + \max_c(S)$. The only interesting case
  is hoisting the empty bag out of a bag elimination frame, which
  simply decreases the size of the frame stack by 1.
\item[($\union$):] Suppose $M,N : \bagt{A}$, and $S : \bagt{A} \fsArrow C$
  are reducible. We need to show that $\fsApp{S}{M \union N}$ is
  $c$-SN. The proof is by induction on $\fsLength{S} +
  \max_c(\fsApp{S}{M}) + \max_c(\fsApp{S}{N})$. The only interesting
  case is hoisting the union out of a bag elimination frame, which
  again decreases the size of the frame stack by 1, whilst leaving the
  other components of the induction measure unchanged.
\end{description}
This completes the proof.
\end{proof}

\begin{theorem}
\label{th:reducibility-fs}
Let $M$ be any term. Suppose $x_1 \hastype A_1,\dots,x_n \hastype A_n$
includes all the free variables of $M$.  If $N_1 \hastype
A_1,\dots,N_n \hastype A_n$ are reducible then
$\subst{M}{\vec{x}}{\vec{N}}$ is reducible.
\end{theorem}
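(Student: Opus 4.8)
The plan is to prove the statement by induction on the structure of $M$, at each term constructor invoking the corresponding closure property established above; this is the standard ``fundamental lemma'' of a Tait-style reducibility argument, and once it is in hand, strong normalisation of $\reducestoc$ follows by instantiating it with $N_i = x_i$ (reducible by Lemma~\ref{lem:xred}) and applying Lemma~\ref{lem:redterm-sn}. Throughout I write $\theta$ for the simultaneous substitution $[\vec{x} := \vec{N}]$, and I rename the bound variables of $M$ so that none clashes with any $x_i$ or with a free variable of any $N_i$; this licenses the substitution identity $\theta(M)[y := P] = M[\vec{x} := \vec{N},\, y := P]$ for a freshly chosen bound variable $y$, which is exactly what the binder cases need.

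First I would dispatch the routine cases. If $M = x_i$ then $\theta(M) = N_i$, reducible by hypothesis. If $M = \lam{y}M'$, then for any reducible $P$ the induction hypothesis applied to $M'$ (context extended by $y$, substituends extended by $P$) gives that $\theta(M')[y := P]$ is reducible, whence Lemma~\ref{lem:red-closure}~($\to$) yields that $\lam{y}\theta(M') = \theta(M)$ is reducible. The cases $M = \record{\cl_i = M_i}\iton$, $M = \If\,L\,\Then\,N\,\Else\,N'$, $M = \emptybag$, $M = M_1 \union M_2$ and $M = \For\,(y \gets M_1)\,M_2$ proceed identically: apply the induction hypothesis to the immediate subterms (extending the context by $y$ in the $\For$ case) and then invoke, respectively, Lemma~\ref{lem:red-closure}~($\record{}$), Lemma~\ref{lem:red-closure}~($\Bool$), Lemma~\ref{lem:red-closure-monadplus}~($\emptybag{}$), Lemma~\ref{lem:red-closure-monadplus}~($\union$) and Lemma~\ref{lem:red-closure}~($\bagt{}$). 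For $M = M_1~M_2$, $M = M'.\cl$ and $M = \ret{M'}$ no extra lemma is required: by the induction hypothesis $\theta(M_1),\theta(M_2),\theta(M')$ are reducible, and for each reducible frame stack $S$ of the appropriate type the frame stacks $S \circ ([~]~\theta(M_2))$, $S \circ ([~].\cl)$ and $S$ itself are reducible, so plugging in and unfolding the definition of term reducibility shows $\fsApp{S}{\theta(M)}$ is $c$-SN (for $\ret{M'}$ this is verbatim the clause defining reducibility of $S : \bagt{A} \fsArrow C$).

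The remaining constructs are the ``inert'' ones --- primitive operations $c(\vec{M})$, tables $\Table~t$, and emptiness tests $\isEmpty{L}$ --- and for these I would first record three short auxiliary closure lemmas, each proved by induction on a measure of the form $\fsLength{S} + \max_c(S) + \sum\max_c(\text{immediate subterms})$, exactly as in Lemma~\ref{lem:sn-closure} and Lemma~\ref{lem:red-closure-monadplus}. The point is that none of $c(\vec{M})$, $\Table~t$, $\isEmpty{L}$ is an introduction form matching any elimination frame, so every reduction available under a surrounding frame stack either acts on a proper subterm (strictly decreasing a $\max_c$ summand) or is a commuting conversion that strictly shortens the stack. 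Using these lemmas: $c(\theta(\vec{M}))$ is $c$-SN whenever each $\theta(M_i)$ is, hence reducible since its type $O'$ is a base type and the only reducible frame stack at a base type is $\fsId$; $\Table~t$ is reducible outright; and $\isEmpty{\theta(L)}$ is reducible whenever $\theta(L)$ is, because then $\fsApp{S}{\isEmpty{\theta(L)}}$ is $c$-SN for every reducible $S : \Bool \fsArrow C$ by Corollary~\ref{lem:redfs-sn} and the auxiliary lemma.

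The argument is largely mechanical; the main obstacle --- such as it is --- lies in the parts that genuinely need care, namely the capture-avoidance bookkeeping underpinning the substitution identity used in the $\lambda$ and $\For$ cases, and the three inert-symbol cases, which fall outside the closure lemmas already proved and so each demand their own small SN-closure induction.
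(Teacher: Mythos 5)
Your proof is correct and follows essentially the same route as the paper's: the paper disposes of this theorem in one line (``by induction on the structure of terms using Lemma~\ref{lem:red-closure} and Lemma~\ref{lem:red-closure-monadplus}''), and your proposal is a faithful, more detailed expansion of that structural induction, rightly observing that application, projection, $\Ret$, constant application, $\Table$ and $\isEmpty{-}$ fall outside the stated closure lemmas and need their own (easy) frame-stack arguments. One small imprecision: for a $\Bool$-typed constant application $\fsId$ is \emph{not} the only reducible frame stack (an $\If$-frame can sit immediately above it), but the auxiliary SN-closure lemma you set up---whose induction measure already includes $\fsLength{S}+\max_c(S)$---covers that case without change.
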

\begin{proof}
By induction on the structure of terms using
Lemma~\ref{lem:red-closure} and Lemma~\ref{lem:red-closure-monadplus}.
\end{proof}

\begin{theorem}[strong normalisation]
\label{th:beta-sn}
The relation $\reducestoc$ is strongly normalising.
\end{theorem}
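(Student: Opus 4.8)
The plan is to close the loop on the reducibility argument that the preceding lemmas have set up. By this point the heavy lifting has already been done: Lemma~\ref{lem:sn-closure} establishes the SN-closure property attached to each type constructor's $\beta$-rule, Lemmas~\ref{lem:red-closure} and \ref{lem:red-closure-monadplus} upgrade these to reducibility-closure properties for each introduction form (including $\emptybag$ and $\union$), and Theorem~\ref{th:reducibility-fs} packages everything into the statement that substituting reducible terms for the free variables of any term yields a reducible term. So the final step is essentially bookkeeping.

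First I would take an arbitrary term $M$ and let $x_1\hastype A_1,\dots,x_n\hastype A_n$ enumerate its free variables. By Lemma~\ref{lem:xred}, each variable $x_i\hastype A_i$ is reducible. Applying Theorem~\ref{th:reducibility-fs} to the identity substitution $\vec{x}\mathbin{:{=}}\vec{x}$, we conclude that $\subst{M}{\vec{x}}{\vec{x}} = M$ is reducible. Then, by Lemma~\ref{lem:redterm-sn}, any reducible term is $c$-SN; hence $M$ is $c$-SN. Since $M$ was arbitrary, every term is strongly normalising with respect to $\reducestoc$, which is exactly the claim.

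The main obstacle is not in this assembly step at all, but in Lemma~\ref{lem:sn-closure}, on which it rests — and in particular in verifying that hoisting a comprehension, conditional, empty bag, or union out of an elimination frame (the commuting-conversion rules) does not disrupt the induction measures. This is where the frame-stack length bound of Lemma~\ref{lem:frame-reduction} and the careful choice of measures (involving $\max_c$ of the plugged frame stack rather than of the frame stack alone) do the real work: in each of those cases the rewrite strictly decreases $\fsLength{S}$ while leaving the remaining components of the measure unchanged. Once those measures are checked, together with Corollary~\ref{lem:redfs-sn} to handle reducible frame stacks, the strong normalisation theorem follows at once by the three-line argument above.
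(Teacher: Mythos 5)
Your proof is correct and is essentially identical to the paper's: both instantiate Theorem~\ref{th:reducibility-fs} at the identity substitution using Lemma~\ref{lem:xred} to get reducibility of $M$, and then invoke Lemma~\ref{lem:redterm-sn} to conclude $M$ is $c$-SN (the paper leaves this last step implicit). Your explicit mention of the identity substitution and of Lemma~\ref{lem:redterm-sn} merely spells out what the paper compresses.
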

\begin{proof}
Let $M$ be a term with free variables $\vec{x}$. By
Lemma~\ref{lem:xred}, $\vec{x}$ are reducible. Hence, by
Theorem~\ref{th:reducibility-fs}, $M$ is $c$-SN.
\end{proof}

\begin{tronly}
  It is well known that $\beta$-reduction in simply-typed
  $\lambda$-calculus has non-elementary complexity in the worst
  case~\cite{beckmann01bounds}. The relation $\rewriteto_c$ includes
  $\beta$-reduction, so it must be at least as bad (we conjecture that
  it has the same asymptotic complexity, as $\rewriteto_c$ can be
  reduced to $\beta$-reduction on simply-typed $\lambda$-calculus via
  a CPS translation following de Groote~\cite{deGroote02}).
  However, we believe that the asymptotic complexity is unlikely to
  pose a problem in practice, as the kind of higher-order code that
  exhibits worst-case behaviour is rare. It has been our experience
  with Links that query normalisation time is almost always dominated
  by SQL execution time.
\end{tronly}

\subsection{If hoisting}

To hoist conditionals ($\If$-expressions) out of constant
applications, records, unions, and singleton bag constructors, we
define $\If$-hoisting frames as follows:
\begin{syntax}
  &  F[~]  &\cce& c(\vec{M},[~],\vec{N})
             \mid \record{\ora{\cl'=M},\cl=[~],\ora{\cl''=N}} \sepskip\\
  &        &\mid& [~] \union N \mid  M \union [~] \mid \ret{[~]} \\
\end{syntax}%
The $\If$-hoisting rule says that if an expression contains an
$\If$-hoisting frame around a conditional, then we can lift the
conditional up and push the frame into both branches:
\begin{equations}
F[\If\,L\,\Then\,M\,\Else\,N] &\rewriteto_h& \If\,L\,\Then\,F[M]\,\Else\,F[N] \\
\end{equations}%

We write $\size(M)$ for the size of $M$, as in the total number of
syntax constructors in $M$.

\begin{lemma}
\label{lem:snh}
~
\begin{compactenum}
\item\label{enum:snh-if} If $M, N, N'$ are $h$-SN then
  $\If\,M\,\Then\,N\,\Else\,N'$ is $h$-SN.
\item\label{enum:snh-union} If $M, N$ are $h$-SN then $M \union N$ is
  $h$-SN.
\item\label{enum:snh-const} If $\vec{M}$ are $h$-SN then $c(\vec{M})$
  is $h$-SN.
\item\label{enum:snh-record} If $\vec{M}$ are $h$-SN then
  $\record{\ora{\cl=M}}$ is $h$-SN.
\end{compactenum}
\end{lemma}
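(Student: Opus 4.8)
The four statements are the $\rewriteto_h$-analogues of the SN-closure properties of Lemma~\ref{lem:sn-closure}, and the plan is to prove them in the same style. Throughout I would use the basic fact that a term is $h$-SN exactly when all of its one-step $\rewriteto_h$-reducts are $h$-SN (an infinite reduction from the term would, after one step, be an infinite reduction from a reduct), together with the observation that any subterm of an $h$-SN term is itself $h$-SN. So in each case it suffices to enumerate the possible one-step reducts of the compound term and verify that each is $h$-SN; if there are none, the term is trivially $h$-SN.

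First I would dispatch statement~\ref{enum:snh-if} on its own, by induction on $\maxr_h(M) + \maxr_h(N) + \maxr_h(N')$, which is well-defined since $M$, $N$, $N'$ are assumed $h$-SN. The key point is that $\If\,M\,\Then\,N\,\Else\,N'$ sitting at the root of a term is never itself a $\rewriteto_h$-redex, because the $\If$-hoisting frames $F[~]$ never place their hole in the test or a branch of a conditional. Hence every one-step reduct is obtained by reducing inside $M$, $N$, or $N'$ and is of the form $\If\,M'\,\Then\,N\,\Else\,N'$ (or the analogous forms for the other two subterms); each such reduct has strictly smaller measure and retains the $h$-SN hypotheses on its subterms, so the induction hypothesis makes it $h$-SN, and we are done.

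The remaining three statements I would prove independently of one another, each by lexicographic induction on the pair $\bigl(\sum_j \maxr_h(M_j),\ \sum_j \size(M_j)\bigr)$ over the immediate subterms $\vec{M}$, now using statement~\ref{enum:snh-if} as an already-established fact. Consider statement~\ref{enum:snh-const} for $c(\vec{M})$; statements~\ref{enum:snh-union} and \ref{enum:snh-record} are entirely analogous. A one-step reduct of $c(\vec{M})$ is either (a) $c(M_1,\dots,M_i',\dots,M_n)$ for some $M_i \rewriteto_h M_i'$, or (b) of the form $\If\,L\,\Then\,c(\dots,P,\dots)\,\Else\,c(\dots,Q,\dots)$ when some $M_i = \If\,L\,\Then\,P\,\Else\,Q$ and the hoisting frame is the $i$-th argument slot of $c$. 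In case (a) the first component of the measure strictly decreases, since $\maxr_h(M_i') < \maxr_h(M_i)$ and $M_i'$ is again $h$-SN, so the induction hypothesis for statement~\ref{enum:snh-const} applies directly. In case (b), $L$, $P$, $Q$ are subterms of the $h$-SN term $M_i$ and hence $h$-SN; replacing $M_i$ by $P$ (respectively $Q$) weakly decreases the first component, because $\maxr_h(P) \le \maxr_h(\If\,L\,\Then\,P\,\Else\,Q) = \maxr_h(M_i)$, and strictly decreases the second component whenever the first is unchanged, because $\size(P) < \size(M_i)$; so the induction hypothesis shows $c(\dots,P,\dots)$ and $c(\dots,Q,\dots)$ are $h$-SN, and then statement~\ref{enum:snh-if} shows the whole conditional is $h$-SN.

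The delicate point I expect is the bookkeeping in case (b): one must check that the lexicographic measure genuinely decreases even though a hoisting step can duplicate the non-hole parts of the frame $F$ and thereby increase the overall size of the term. This is precisely why a naive single measure such as total size does not work, why the pair $\bigl(\sum \maxr_h,\ \sum \size\bigr)$ (or an equivalent refinement) is the right choice, and why it is convenient to establish statement~\ref{enum:snh-if} separately first, so that it can be invoked in case (b) with no circularity.
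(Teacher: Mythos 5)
Your proof is correct and follows essentially the same route as the paper's: induction on measures built from $\maxr_h$ and $\size$ of the immediate subterms, establishing the conditional case first and invoking it in the others when a hoisting step fires at the root, with the lexicographic pair $\bigl(\sum \maxr_h,\ \sum \size\bigr)$ handling the duplication caused by hoisting. The only difference is that you use the simpler measure $\maxr_h(M)+\maxr_h(N)+\maxr_h(N')$ for part~1 where the paper uses a four-component tuple also tracking sizes; your simpler measure suffices precisely because, as you observe, a conditional at the root is never itself an $h$-redex.
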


\begin{proof}
~
\begin{sloppypar}
\begin{Cases}
\item[\ref{enum:snh-if}:]
By induction on $\recordl\maxr_h(M)$, $\size(M)$, $\maxr_h(N) +
\maxr_h(N')$, $size(N) + size(N')\recordr$.
\item[\ref{enum:snh-union}:] By induction on $\recordl \maxr_h(M) +
  \maxr_h(N), \size_h(M) + \size_h(N) \recordr$ using
  (\ref{enum:snh-if}).
\item[\ref{enum:snh-const} and \ref{enum:snh-record}:] By induction on
  \[\recordl \sum\iton \maxr_h(M_i), \sum\iton \size(M_i) \recordr\]
 using
  (\ref{enum:snh-if}).
\end{Cases}
\end{sloppypar}
This concludes the proof.
\end{proof}

\begin{proposition}
\label{prop:if-hoist-sn}
  The relation $\rewriteto_h$ is strongly normalising.
\end{proposition}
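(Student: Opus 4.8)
The plan is to prove, by induction on the structure of $M$, that every term $M$ is $h$-SN; since $\rewriteto_h$ is $h$-SN exactly when every term is, this suffices. The overall shape mirrors the proof of Theorem~\ref{th:beta-sn}, but it is considerably lighter: $\rewriteto_h$ involves no substitution, so none of the reducibility-candidate machinery is needed, and a plain structural induction goes through once Lemma~\ref{lem:snh} is available to handle the constructors that can host or produce an $\If$.

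First I would dispatch the atomic cases — variables $x$, tables $\Table~t$, the empty bag $\emptybag$, and nullary constants $c()$ — which contain no $\If$-hoisting redex and hence are trivially $h$-SN. Next come the constructors that are themselves $\If$-hoisting frames, together with $\If$ itself: for $c(\vec{M})$, $\record{\ora{\cl=M}}$, and $M\union N$ the induction hypothesis gives that the immediate subterms are $h$-SN, and the matching clause of Lemma~\ref{lem:snh} (parts \ref{enum:snh-const}, \ref{enum:snh-record}, \ref{enum:snh-union}) yields that the whole term is $h$-SN; for $\If\,M\,\Then\,N\,\Else\,N'$, Lemma~\ref{lem:snh}(\ref{enum:snh-if}) applies directly; and for the singleton $\ret{M}$, which is an $\If$-hoisting frame not explicitly listed among the four clauses, a one-line argument paralleling the constant case (induction on $\maxr_h(M)+\size(M)$, using Lemma~\ref{lem:snh}(\ref{enum:snh-if}) on the hoisted result) closes the gap. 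Finally, for the remaining \emph{transparent} constructors — $\lam{x}M$, $M~N$, $M.\cl$, $\isEmpty{M}$, and $\For\,(x\gets M)\,N$ — none of which is an $\If$-hoisting frame, every $\rewriteto_h$-step from such a term takes place strictly inside one of its immediate subterms, so any reduction sequence out of it decomposes into reduction sequences of those subterms and has length at most the sum of their $\maxr_h$ values, which are finite by the induction hypothesis.

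The conceptual crux — and the reason naive structural induction fails at the frame constructors without Lemma~\ref{lem:snh} — is that hoisting a conditional out of a frame $F[-]$ \emph{duplicates} $F$: rewriting $F[\If\,L\,\Then\,M\,\Else\,N]$ to $\If\,L\,\Then\,F[M]\,\Else\,F[N]$ copies whatever redexes lie inside $F$ into both branches, so the result need not have smaller $\maxr_h$ than the original. This duplication is precisely what the lexicographic inductions in the proof of Lemma~\ref{lem:snh} are designed to tame (by tracking $\size$ alongside $\maxr_h$), so once that lemma is in hand the main obstacle has already been surmounted and the Proposition follows routinely. The only remaining care is to verify the case list is exhaustive over the grammar of $\langname$ terms, and in particular that the singleton case is not overlooked.
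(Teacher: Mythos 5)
Your proof is correct and takes essentially the same route as the paper, whose entire argument is ``by induction on the structure of terms using Lemma~\ref{lem:snh}''; your expansion of the atomic, frame-constructor, and transparent-constructor cases is exactly what that one-liner elides. Your observation that the singleton frame $\ret{[~]}$ is not explicitly covered by the four clauses of Lemma~\ref{lem:snh} and needs the same measure argument as the constant case is a fair catch, but it fills a presentational gap rather than changing the approach.
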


\begin{proof}
  By induction on the structure of terms using Lemma~\ref{lem:snh}.
\end{proof}

\subsection{The query normalisation function}

The following definition of the function $\norm$ generalises the
normalisation algorithm from previous work~\cite{lindley12tldi}.

\begin{equations}
\norm_A(M) &=& \Split{\nf_h(\nf_c(M))}{A}
\\[2ex]
\multicolumn{3}{l}{\text{where:}}\\[2ex]
\Split{c([X_i:O_i]\iton)}{O} &=& c([\Split{X_i}{O_i}]\iton) \\
\Split{x.\cl}{O} &=& x.\cl \\
\Split{\isEmpty{(M : \bagt{A})}}{\Bool} &=& \isEmpty{(\Split{M}{\bagt{A}})} \\
\Split{M}{\record{\cl_i:A_i}\iton}
  &=& \record{\cl_i=\splitfield{M}{A_i,{\cl_i}}}\iton \\
\Split{M}{\bagt{A}}
  &=& \bigunion\,(\splitbag{M}{A,\nil,\True})
\\[2ex]
\splitbag{\ret{M}}{A,\vec{G},L}
  &=& [ \For\,(\vec{G}\,\Where\,L)\,\ret{\Split{M}{A}} ] \\
\splitbag{\For\,(x \gets t)\,M}{A,\vec{G},L}
  &=& \splitbag{M}{A,\vec{G} \append [x \gets t],L} \\
\splitbag{\Table\,t}{A,\vec{G},L}
  &=& \splitbag{\ret{x}}{A,\vec{G}\append [x \gets t],L} \\
&&\quad\text{($x$ fresh)}\\
\splitbag{\emptybag}{A,\vec{G},L} &=& \nil \\
\splitbag{M \union N}{A,\vec{G},L}
  &=& \splitbag{M}{A,\vec{G},L} \append \splitbag{N}{A,\vec{G},L} \\
\splitbag{\If\,L'\,M\,\,N}{A,\vec{G},L}
  &=& \splitbag{M}{A,\vec{G},L \wedge L'}  \\
   && \quad    \append~ \splitbag{N}{A,\vec{G},L \wedge \neg L'}
\\[2ex]
\splitfield{x}{A,\cl_i}
  &=& \Split{x.\cl_i}{A} \\
\splitfield{\record{\cl_i=M_i}\iton}{A,\cl_i}
  &=& \Split{M_i}{A} \\

\end{equations}%
Strictly speaking, in order for the above definition of $\norm_A$ to
make sense we need the two rewrite relations to be confluent. It is
easily verified that the relation $\rewriteto_c$ is locally confluent,
and hence by strong normalisation and Newman's Lemma it is
confluent. The relation $\rewriteto_h$ is not confluent as the
ordering of hoisting determines the final order in which booleans are
eliminated. However, it is easily seen to be confluent modulo
reordering of conditionals, which in turn means that $\norm_A$ is
well-defined if we identify terms modulo commutativity of conjunction,
which is perfectly reasonable given that conjunction is indeed
commutative.

\begin{theorem}
The function $\norm_A$ terminates.
\end{theorem}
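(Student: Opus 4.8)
The plan is to read termination straight off the definition $\norm_A(M) = \Split{\nf_h(\nf_c(M))}{A}$: this is the composition of three computations --- rewrite $M$ to a normal form with respect to $\rewriteto_c$, rewrite that to a normal form with respect to $\rewriteto_h$, then apply the type-directed function $\Split{-}{A}$ --- so it suffices to show that each of the three stages halts on every input of the relevant shape. The right to speak of $\nf_c$ and $\nf_h$ as \emph{functions} at all (i.e. well-definedness of the normal forms, up to commutativity of $\wedge$ for $\nf_h$) has already been established above via local confluence, Newman's Lemma, and confluence modulo reordering of conditionals, so only termination remains.

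The first two stages are immediate gifts of results already proved. Since $\rewriteto_c$ is strongly normalising (Theorem~\ref{th:beta-sn}), $M$ admits no infinite $\rewriteto_c$-reduction sequence, so whatever strategy the implementation uses it reaches $\nf_c(M)$ in finitely many steps; likewise, since $\rewriteto_h$ is strongly normalising (Proposition~\ref{prop:if-hoist-sn}), computing $\nf_h(\nf_c(M))$ halts. Hence all the actual work lies in showing that the structurally recursive $\Split{-}{-}$ --- together with its mutually recursive auxiliaries $\splitbag{-}{-}$ and $\splitfield{-}{-}$ --- is total on the terms produced by $\nf_h \circ \nf_c$.

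I would prove this by well-founded induction on a lexicographic measure whose first component is the size of the term argument and whose second component is the size of the type argument. Reading the defining clauses, almost every recursive call either strictly shrinks the term --- the iteration of $\splitbag{-}{-}$ through the $\For$, $\union$, $\emptybag$ and $\If$ constructors down to a $\Ret$ body, the clause $\Split{c(\vec{X})}{O} \to \Split{X_i}{O_i}$, the clause $\splitfield{\record{\cl_i=M_i}\iton}{A_i,\cl_i} \to \Split{M_i}{A_i}$, and crucially the emptiness-test clause $\Split{\isEmpty{(M : \bagt{A})}}{\Bool} = \isEmpty{(\Split{M}{\bagt{A}})}$, which descends into the strict subterm $M$ even though its type $\bagt{A}$ may be larger than $\Bool$ --- or else leaves the term essentially unchanged while strictly shrinking the type: the bag clause $\Split{M}{\bagt{A}} \to \splitbag{M}{A,\nil,\True}$ and the record clause $\Split{M}{\record{\cl_i:A_i}\iton} \to \splitfield{M}{A_i,\cl_i}$, where on a variable or projection the auxiliary appends a field and passes to the strictly smaller component type $A_i$. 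For the latter to count as ``leaving the term unchanged'' one measures terms with the projection constructor weighted $0$, which is exactly what the $\splitfield$ clause needs. The one clause that does not fit this pattern is $\splitbag{\Table\,t}{A,\vec{G},L} = \splitbag{\ret{x}}{A,\vec{G}\append[x \gets t],L}$, which swaps a table for a fresh singleton; but it fires at most once per table, after which the remaining work is bounded by the size of the flat record type $A$, so giving the $\Table$ and $\Ret$ constructors suitable fixed weights restores the strict decrease. With the measure so calibrated every recursive call is strictly smaller, $\Split{-}{-}$ terminates, and therefore so does $\norm_A$.

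The main obstacle is precisely this termination argument for $\Split$: it is not plain structural recursion on either parameter alone --- the emptiness-test clause shrinks the term while enlarging the type, the $\splitfield$ clause enlarges the term (by a projection) while shrinking the type, and the table clause enlarges the term while fixing the type --- so one cannot rely on a single structural measure and must combine term size and type size lexicographically, with the projection constructor weighted $0$ and the $\Table$/$\Ret$ constructors weighted to absorb the table clause. Everything else (the two strong-normalisation stages, and well-definedness of the normal forms) is already in hand from the results stated above.
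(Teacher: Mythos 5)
Your proposal is correct and follows essentially the same route as the paper's (one-line) proof: termination of $\nf_c$ and $\nf_h$ from the two strong-normalisation results, plus termination of $\Split{-}{}$, $\splitbag{-}{}$ and $\splitfield{-}{}$, with the $\Table$ clause singled out as the one non-structural case (the paper handles it by ``expanding out'' its right-hand side, you by reweighting constructors --- the same fix in different clothing). The explicit lexicographic measure on (term size with projections weighted $0$, type size) is a legitimate and more careful filling-in of what the paper compresses into the phrase ``structurally recursive''.
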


\begin{proof}
The result follows immediately from strong normalisation of
$\rewriteto_c$ and $\rewriteto_h$, and the fact that the functions
$\Split{-}{}$, $\splitbag{-}{}$, and $\splitfield{-}{}$ are
structurally recursive (modulo expanding out the right-hand-side of
the definition of 
$\splitbag{\Table\,t}{A,\vec{G},L}$.
\end{proof}

\section{Proof of correctness of shredding}
\label{app:main-proof}
 The main correctness result is that if we shred an annotated
normalised nested query, run all the resulting shredded queries, and
stitch the shredded results back together, then that is the same as
running the nested query directly.


Intuitively, the idea is as follows: 

\begin{compactenum}
\item define shredding on nested values; 
\item show that shredding commutes
  with query execution; and 
\item show that stitching is a left-inverse to
  shredding of values.
\end{compactenum}
Unfortunately,
this strategy is a little too naive to work, because nested values do not contain
enough information about the structure of the source query in order to
generate the same indexes that are generated by shredded queries.

\begin{figure}[tb]
\[
\begin{tikzpicture}[font=\scriptsize]
\matrix(m)[matrix of math nodes, row sep=6em, column sep=2em, text
  height=1.5ex, text depth=0.25ex] {
         & & \sem{\Sigma}\\
\nsem{A} & &
\asem{A} & & & \pasem{A} \\
};
\path[->] (m-1-3) edge node[left,xshift=-2ex] {$\nsem{\erase(L)}$} (m-2-1);
\path[->] (m-1-3) edge node[left] {$\asem{L}$} (m-2-3);
\path[->] (m-1-3) edge node[right,xshift=1ex] {$\pasem{L}_A$} (m-2-6);
\path[->] (m-2-3) edge node[auto] {$\erase$} (m-2-1);
\path[->] (m-2-3) edge[bend left=15] node[above,xshift=-2.5ex] {$\shred_{(-)}(A)$} (m-2-6);
\path[->] (m-2-6) edge[bend left=15] node[auto] {$\buildname$} (m-2-3);
\end{tikzpicture}
\]

\caption{Correctness of shredding and stitching\label{fig:shred-stitch-pic}}

\end{figure}

To fix the problem, we will give an annotated semantics $\asem{-}$
that is defined only on queries that have first been converted to
normal form and annotated with static indexes as described in
Section~\ref{sec:shredding}. We will ensure that $\nsem{\erase(L)} =
\erase(\asem{L})$, where we overload the $\erase$ function to erase
annotations on terms and values. The structure of the resulting proof
is depicted in the diagram in Figure~\ref{fig:shred-stitch-pic}, where
we view a query as a function from the interpretation of the input
schema $\Sigma$ to the interpretation of its result type. To prove
correctness is to prove that this diagram commutes.

\begin{figure*}[tb]
\begin{mathpar}
\inferrule[Result]
{[\vdash v_i : A_i]\iton \\ [\vdash J : \Index]\iton}
{\vdash [v_i\ann{J_i}]\iton : \bagt{A}}

\inferrule[Record]
{[\vdash v_i : A_i]\iton}
{\vdash \record{\cl_i=v_i}\iton : \record{\cl_i=A_i}\iton}

\inferrule[Constant]
{\Sigma(c) = O}
{\vdash c : O}
\end{mathpar}

\caption{Typing rules for nested annotated values
         \label{fig:indexed-nested-value-typing-rules}}

\[
\ba{@{}c@{\quad}c@{\quad}c@{}} \ba{r@{~}c@{~}l}
  \ssem{L} &=& \ssem{L}_{\varepsilon,1} \\
  \ea
&
  \ba{r@{~}c@{~}l}
  \ssem{\record{\cl=N}\iton}_{\rho,\iota} &=& \record{\cl_i=\ssem{N_i}_{\rho,\iota}}\iton \\
  \ssem{X}_{\rho,\iota} &=& \nsem{X}_\rho \\
  \ea
&
  \ba{r@{~}c@{~}l}
  \ssem{\Ind{a}{\iup}}_{\rho,\iota.i}        &=& \key(\Ind{a}{\iota}) \\
  \ssem{\Ind{a}{\idown}}_{\rho,\iota.i}      &=& \key(\Ind{a}{\iota.i}) \\
  \ea\\
\ea
\]
\[
\bl
\ssemcomp{\bigunion\iton C_i}_{\rho,\iota} = \Concat([\ssemcomp{C_i}_{\rho,\iota}]\iton) \hfill
\ssemcomp{\retann{N}{a}}_{\rho,\iota} = [\bsem{N}_{\rho,\iota} \ann{\key(\Ind{a}{\iota})}]
\smallskip\\
\ssemcomp{\For\,([x_i \gets t_i]\iton~\Where\,X)\,C}_{\rho,\iota} = 
\Concat([\ssemcomp{C}_{\rho[x_i \mapsto r_i]\iton,\iota.j} 
   \mid \tuple{j, \vec{r}} \gets 
      \enum([\vec{r} \mid [r_i \gets \tsem{t_i}]\iton, \nsem{X}_{\rho[x_i \mapsto r_i]\iton}])]) \\
\el
\]
\caption{Semantics of shredded queries, with annotations\label{fig:shredded-semantics-app}}
\end{figure*}

\subsection{Annotated semantics of nested queries}
\label{sec:annotated-semantics}
%

We annotate bag elements with distinct indexes as follows:
\begin{syntax}
\text{Results}           & s     &\cce& [\iv_1\ann{I_1}, \dots, \iv_m\ann{I_m}] \\
\text{Inner values}      & \iv   &\cce& c \mid r \mid s \\
\text{Rows}              & r     &\cce& \record{\cl_1=\iv_1, \dots, \cl_n=\iv_n} \\
\text{Indexes}           & I, J \\
\end{syntax}%
Typing rules for these values are in Figure~\ref{fig:indexed-nested-value-typing-rules}.
The index annotations $\ann{I}$ on collection elements are needed
solely for our correctness proof, and do not need to be present at run
time.

The \emph{canonical} choice for representing indexes is to take $I =
\Ind{a}{\iota}$. We also allow for alternative indexing schemes in
by parameterising the semantics
over an indexing function $\key$ mapping each canonical index
$\Ind{a}{\iota.j}$ to a concrete representation. In the simplest case,
we take $\key$ to be the identity function (where static indexes are
viewed as distinct integers). We will later consider other definitions
of $\key$ that depend on the particular query being shredded. Informally, the only constraints on
$\key$ are that it is defined on every canonical index
$\Ind{a}{\iota}$ that we might need in order to shred a query, and it
is injective on all canonical indexes. We will formalise these
constraints later, but for now,
$\key$ can be assumed to be the identity function.

Given an indexing function $\key$, the semantics of nested queries on
annotated values is defined as follows:
\begin{equations} 
\asem{L} &=& \asem{L}_{\varepsilon,\DUnit}
\sepskip\\
\asem{\bigunion\iton\,C_i}_{\rho,\iota}
  &=& \Concat([\asemcomp{C_i}_{\rho,\iota}]\iton) \\
\asem{\record{\cl_i=M_i}\iton}_{\rho,\iota}
  &=& \record{\cl_i=\asem{M_i}_{\rho,\iota}}\iton \\
\asem{X}_{\rho,\iota} &=& \nsem{X}_\rho
\sepskip\\
\multicolumn{3}{l}
{\bl
 \asemcomp{\For\,([x_i \gets t_i]\iton~\Where\,X)\,\retann{M}{a}}_{\rho,\iota} = \\
 \quad[\asem{M}_{\rho[x_i \mapsto r_i]\iton,\iota.j} \ann{\key(\Ind{a}{\iota.j})} \\
 \quad\quad\mid \tuple{j, \vec{r}} \gets 
      \enum([\vec{r} \mid [r_i \gets \tsem{t_i}]\iton, \nsem{X}_{\rho[x_i \mapsto r_i]\iton}])] \\
 \el}
\end{equations}%
As well as an environment, the current dynamic index is threaded
through the semantics.  Note that we use the ordinary semantics to
evaluate values that cannot have any annotations (such as boolean
tests in the last case).

It is straightforward to show by induction that the annotated
semantics erases to the standard semantics:
\begin{theorem}\label{th:ann-semantics-erases}
  For any $L$ we have $\erase(\asem{L}) = \nsem{\erase(L)}$.
\end{theorem}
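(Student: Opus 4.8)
The plan is to strengthen the statement to one amenable to a direct mutual induction over the annotated normal-form grammar ($L ::= \bigunion\,\vec C$, $C ::= \For\,(\vec G\,\Where\,X)\,\retann{M}{a}$, $M ::= X \mid R \mid L$, and base terms $X$), proving: for every subterm $M$ of $L$, every environment $\rho$, and every dynamic index $\iota$,
$\erase(\asem{M}_{\rho,\iota}) = \nsem{\erase(M)}_\rho$, and correspondingly $\erase(\asemcomp{C}_{\rho,\iota})$ equals the standard interpretation of $\erase(C)$ under $\rho$. The theorem is then the special case $\rho = \varepsilon$, $\iota = \DUnit$. A preliminary observation keeps the statement in this clean form: since every generator of a normalised flat--nested query ranges over a flat table, environments only ever bind variables to annotation-free flat rows, so $\erase$ acts trivially on $\rho$ and need not appear on the right-hand side.

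First I would record the routine commutation facts between $\erase$ and the meta-level list plumbing used by the semantics: $\erase$ distributes over $\Concat$ and over list comprehensions, and $\erase([\iv\ann{I}]) = [\erase(\iv)]$, i.e. erasure discards the element annotation and recurses into the element. With these, the cases $M = \bigunion_i C_i$, $M = \record{\cl_i = M_i}\iton$, and the $\Concat$/union clauses fall out immediately from the induction hypothesis, because in each of these the annotated semantics differs from the standard one of Figure~\ref{fig:bag-semantics} only by annotations that $\erase$ subsequently removes.

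The only case with any content is the comprehension clause $C = \For\,([x_i \gets t_i]\iton\,\Where\,X)\,\retann{M}{a}$. Here $\asemcomp{C}_{\rho,\iota}$ builds the list $[\,\asem{M}_{\rho[x_i\mapsto r_i]\iton,\,\iota.j}\ann{\key(\Ind{a}{\iota.j})} \mid \tuple{j,\vec r}\gets \enum(\cdots)\,]$, in which the inner generator $[\vec r \mid [r_i \gets \tsem{t_i}]\iton,\, \nsem{X}_{\rho[x_i\mapsto r_i]\iton}]$ is already phrased via the unannotated semantics and so is unchanged by $\erase$. Applying the commutation facts, $\erase(\asemcomp{C}_{\rho,\iota})$ becomes $[\,\erase(\asem{M}_{\rho[x_i\mapsto r_i]\iton,\,\iota.j}) \mid \tuple{j,\vec r}\gets \enum(\cdots)\,]$; by the induction hypothesis the body equals $\nsem{\erase(M)}_{\rho[x_i\mapsto r_i]\iton}$, which no longer mentions $j$ or $\iota$, so the $\enum$ wrapper may be dropped, leaving exactly the list produced by the standard semantics of $\erase(C) = \For\,([x_i\gets t_i]\iton\,\Where\,X)\,\ret{\erase(M)}$ (unfolding nested $\For$, the $\Where$-filter, and $\ret$).

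The one point requiring care is the base-term case $M = X$, in particular $X = \isEmpty{L'}$: the annotated semantics sets $\asem{X}_{\rho,\iota} = \nsem{X}_\rho$, so we must know that evaluating the still-annotated subquery $L'$ with $\nsem{-}$ agrees with $\nsem{\erase(L')}_\rho$; this is forced by reading $\nsem{-}$ as annotation-oblivious (equivalently, pre-composed with $\erase$ on terms), and since an emptiness test yields a boolean it carries no value annotations, so $\erase$ is trivial on the right-hand side and the case closes. I expect this bookkeeping — pinning down the conventions about where $\erase$ is applied and confirming environments stay annotation-free — to be the only delicate spot; the induction itself is entirely mechanical.
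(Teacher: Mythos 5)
Your proposal is correct and is exactly the argument the paper has in mind: the paper offers no written proof beyond the remark that the result is ``straightforward to show by induction,'' and your strengthened induction hypothesis (quantifying over $\rho$ and $\iota$), the commutation of $\erase$ with $\Concat$ and comprehensions, the dropping of the $\enum$ wrapper once the body no longer depends on $j$, and the convention that $\nsem{-}$ is annotation-oblivious in the $\isEmpty{}$ case are precisely the details that such an induction requires. No gaps.
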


\begin{figure*}[tb]
\begin{mathpar}
\inferrule[Result]
{[\vdash I_i : \Index]\iton \\ [\vdash \iv_i : F]\iton \\ [\vdash J_i : \Index]\iton}
{\vdash \bagv{\record{I_1,\iv_1\ann{J_1}}, \dots, \record{I_n,\iv_n\ann{J_n}}} : \bagt{\record{\Index,F}}}

\inferrule[Constant]
{\Sigma(c) = O}
{\vdash c : O}

\inferrule[Record]
{[\vdash n_i : F_i]\iton}
{\vdash \record{\cl_i=\iv_i}\iton : \record{\cl_i=F_i}\iton}
\end{mathpar}
\caption{Typing rules for shredded values\label{fig:shredded-value-typing-rules}}
\end{figure*}

\subsection{Shredding and stitching annotated values}

To define the semantics of shredded queries and packages, we use
annotated values in which collections are annotated pairs of indexes
and annotated values. Again, we allow annotations $\ann{J}$ on
elements of collections to facilitate the proof of correctness.
Typing rules for these values are shown in
Figure~\ref{fig:shredded-value-typing-rules}.
\begin{syntax}
\text{Results}         & s     &\cce& 
  [\tuple{I_1,\iv_1}\ann{J_1}, \dots, \tuple{I_m,\iv_m}\ann{J_m}]  \\
\text{Shredded values} & \iv   &\cce& c \mid r \mid I \\
\text{Rows}            & r     &\cce& \record{\cl_1=\iv_1, \dots, \cl_n=\iv_n} \\
\text{Indexes}         & I, J \\
\end{syntax}%
%

Having defined suitably annotated versions of nested and shredded
values, we now extend the shredding function to operate on nested
values.
\begin{equations}
\shouter{s}_p                              &=& \bagv{\shlist{s}_{\Unit,p}} 
\sepskip\\
\shlist{\bagv{[v_i \ann{J_i}]\iton}}_{I,\epsilon}
  &=& [\tuple{I,\shinner{v_i}_{J_i}} \ann{J_i}]\iton \\
\shlist{\bagv{[v_i \ann{J_i}]\iton}}_{I,\pbag.p} &=& \Concat([\shlist{v_i}_{J_i,p}]\iton) \\
\shlist{\record{\cl_i=v_i}\iton}_{I,\cl_i.p}   &=& \shlist{v_i}_{I,p} 
\sepskip\\
\shinner{c}_I                              &=& c \\
\shinner{\record{\cl_i=v_i}\iton}_I      &=& \record{\cl_i=\shinner{v_i}_I}\iton \\
\shinner{s}_I                              &=& I 
\end{equations}%
Note that in the cases for bags, the index annotation $J$ is passed as
an argument to the recursive call: this is where we need the ghost
indexes, to relate the semantics of nested and shredded queries.


We lift the nested result shredding function to build an (annotated)
shredded value package in the same way that we did for nested types
and nested queries.
\begin{equations}
\shred_s(\bagt{A}) = \package_{(\shouter{s}_{-})}(\bagt{A}) \\
\shred_{s,p}(\bagt{A}) = \package_{(\shouter{s}_{-}), p}(\bagt{A}) 
\end{equations}%




 We adjust the stitching function slightly to
maintain annotations; the adjustment is consistent with its behaviour
on unannotated values.
\begin{equations}
\buildtop{\sht{A}} &=&
  \build{\Ind{\top}{1}}{\sht{A}} 
\sepskip\\
\build{c}{O}                                              &=& c \\
\build{r}{\record{\cl_i:\sht{A_i}}\iton}  &=&
  \record{\cl_i=\build{r.cl_i}{\sht{A_i}}}\iton \\
\build{I}{\shann{\bagt{\sht{A}}}{s}} &=& 
    [(\build{\iv}{\sht{A}}) \ann{J}
             \mid \tuple{I, \iv} \ann{J} \gets s]
\end{equations}%
The inner value parameter $\iv$ to the auxiliary function
$\build{\iv}{-}$ specifies which values to stitch along the current
path.


\newcommand{\noann}[1]{{}}

We now show how to interpret shredded queries and query packages over
shredded values.  The semantics of shredded queries (including
annotations) is given in Figure~\ref{fig:shredded-semantics-app}.  The
semantics of a shredded query package is a \emph{shredded value
  package} containing indexed results for each shredded query.  For
each type $A$ we define $\psem{A} = \shred_A(A)$ and for each
flat--nested, closed $\vdash L : A$ we define $\psem{L}_A : \psem{A}$
as $\pmap_{\ssem{-}}~(\shred_L(A))$.

\subsection{Main results}

There are three key theorems. The first states that shredding commutes
with the annotated semantics.
\begin{theorem}
\label{th:sem-shred-commute}
If $\vdash L : \bagt{A}$ then:
\[
\psem{L}_{\bagt{A}} = \shred_{\asem{L}}(\bagt{A})
\]
\end{theorem}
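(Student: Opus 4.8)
The plan is to reduce the identity to a single per-path statement, and then prove that statement by a simultaneous induction on normalised terms.

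First I would unfold the definitions: $\psem{L}_{\bagt{A}} = \pmap_{\ssem{-}}(\shred_L(\bagt{A}))$ with $\shred_L(\bagt{A}) = \package_{(\shouter{L}_{-})}(\bagt{A})$, whereas $\shred_{\asem{L}}(\bagt{A}) = \package_{(\shouter{\asem{L}}_{-})}(\bagt{A})$. Both sides are built by $\package$ over the same path set $\paths(\bagt{A})$, so they have the same shape, and both erase to $\bagt{A}$ (by the erasure property of type shredding). Since mapping over annotations commutes with $\package$, i.e.\ $\pmap_f(\package_g(B)) = \package_{f\circ g}(B)$, it then suffices to show that $\ssem{\shouter{L}_p} = \shouter{\asem{L}}_p$ for every $p \in \paths(\bagt{A})$.

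To prove this I would establish, by simultaneous induction on the structure of normalised terms $M$, two statements generalised over an environment $\rho$, a nonempty dynamic index $\iota$, a static index $a$, and --- for the first --- a path $p$ into the type of $M$: an \emph{outer} identity $\ssem{\bigunion\,\shlist{M}_{a,p}}_{\rho,\iota} = \shlist{\asem{M}_{\rho,\iota}}_{\key(\Ind{a}{\iota}),\,p}$, and an \emph{inner} identity $\bsem{\shinner{M}_a}_{\rho,\iota} = \shinner{\asem{M}_{\rho,\iota}}_{\key(\Ind{a}{\iota})}$. The wanted per-path statement is the outer identity at the top level, with $a = \top$, $\rho = \varepsilon$, $\iota = \DUnit$ (identifying $\Ind{\top}{\DUnit}$ with $\Unit$, exactly as the value-level $\shouter{-}_{-}$ does). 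Union and record-projection are immediate, since $\shlist{-}$, $\ssem{-}$ and $\asem{-}$ all distribute over unions via $\Concat$/$\append$ and select the same record component. The crucial case is a comprehension $\For\,(\vec{G}\,\Where\,X)\,\retann{M}{b}$: both $\ssem{-}$ and $\asem{-}$ iterate $\enum$ over the \emph{identical} filtered product $[\vec{r} \mid [r_i \gets \tsem{t_i}]\iton,\ \nsem{X}_{\rho[x_i \mapsto r_i]\iton}]$, so they produce the same list with the same enumeration positions $j$; along the path $\epsilon$ the equality at position $j$ reduces to $\ssem{\Ind{a}{\iup}}_{\rho',\iota.j} = \key(\Ind{a}{\iota})$ (with $\rho' = \rho[x_i\mapsto r_i]\iton$) together with the inner hypothesis for $M$ at static index $b$; along $\pbag.p$, the body recursion $\shlist{M}_{b,p}$ is pushed under the generators, and when it reaches a leaf the relevant ``parent'' index works out to $\key(\Ind{b}{\iota.j})$, which is precisely the element tag $J_j = \key(\Ind{b}{\iota.j})$ that the value-level $\shlist{-}_{J_j,p}$ passes into its recursive call on $\asem{M}_{\rho',\iota.j}$, so the case closes by the outer hypothesis for $M$ at path $p$ and static index $b$. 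For the inner identity: base terms and records go by structural recursion, using $\ssem{X}_{\rho,\iota} = \nsem{X}_\rho = \asem{X}_{\rho,\iota}$; a nested bag $L$ in a record field is replaced by the stub $\Ind{a}{\idown}$ and needs no recursion, since $\bsem{\Ind{a}{\idown}}_{\rho,\iota} = \key(\Ind{a}{\iota}) = \shinner{\asem{L}_{\rho,\iota}}_{\key(\Ind{a}{\iota})}$; and $\isEmpty{L}$ follows because the top-level shredding preserves emptiness (the outer identity at $\epsilon$, read at the level of list-emptiness) together with Theorem~\ref{th:ann-semantics-erases}.

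The hard part will not be any individual case but the global index bookkeeping: one must keep the enumeration produced by $\enum$ in the shredded semantics in lock-step with the element tags produced by the annotated semantics, and track carefully how the $\iup$/$\idown$ discipline selects the static index --- $a$ at the current nesting level, $b$ one level down --- so that the ``parent'' index threaded by the value-level shredding at each path step coincides with the concrete index that $\ssem{-}$ assigns to $\Ind{a}{\iup}$ or $\Ind{b}{\idown}$ at the matching dynamic index. Once these correspondences are pinned down, everything else is forced; combining the per-path identity with the reduction above then gives $\psem{L}_{\bagt{A}} = \shred_{\asem{L}}(\bagt{A})$.
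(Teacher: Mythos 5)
Your proposal matches the paper's proof: the paper likewise reduces the theorem to the per-path identity $\ssem{\shouter{L}_p} = \shouter{\asem{L}}_p$, proves the inner identity $\ssem{\shinner{M}_a}_{\rho,\iota} = \shinner{\asem{M}_{\rho,\iota}}_{\Ind{a}{\iota}}$ as a companion lemma, and closes the crucial comprehension case by exactly your observation that both semantics enumerate the identical filtered product, so the $\enum$ positions stay in lock-step with the index annotations. The only cosmetic difference is that the paper organises the mutual induction for the outer identity on the structure of the path $p$, factoring the union case into a separate concatenation lemma, rather than on the term; this is immaterial since every recursive call of $\shlist{-}$ decreases both.
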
%


In order to allow shredded results to be correctly stitched together,
we need the indexes at the end of each path to a bag through a nested
value to be unique. We define $\indexes{p}{v}$, the indexes of nested
value $v$ along path $p$ as follows.
\begin{equations}
\indexes{\epsilon}{\bagv{[v_i\ann{J_i}]\iton}}     &=& [J_i]\iton \\
\indexes{\pbag.p}{\bagv{[v_i\ann{J_i}]\iton}} &=&
  \Concat([\indexes{p}{v_i}]\iton) \\
\indexes{\cl_i.p}{\record{\cl_i=v_i}\iton}       &=& \indexes{p}{v_i} \\
\end{equations}%

We say that a nested value $v$ is \emph{well-indexed (at type $A$)}
provided $\vdash v : A$, and for every path $p$ in $\paths(A)$, the
elements of $\indexes{p}{v}$ are distinct.  

\begin{lemma}
\label{lem:uniqueness}
If $\vdash L : A$, then $\asem{L}$ is well-indexed at $A$.
%
\end{lemma}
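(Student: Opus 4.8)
The plan is to prove, by induction on the structure of the annotated normalised term, an invariant slightly stronger than well-indexedness, and then read off the lemma. Recall that after the pre-processing of Section~\ref{sec:shredding} every comprehension body of $L$ carries a distinct static name; writing $\mathit{tags}(M)$ for the set of names occurring in a subterm $M$, this means syntactically distinct comprehension bodies have disjoint tag sets. The invariant I would carry is: for every annotated normalised $M$ with $\vdash M : B$, every $\rho$ and $\iota$, and every $q \in \paths(B)$, the list $\indexes{q}{\asem{M}_{\rho,\iota}}$ is duplicate-free, and each index in it has the form $\Ind{a}{\iota.\vec{w}}$ with $a \in \mathit{tags}(M)$ and $\vec{w}$ a nonempty sequence. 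Specialising to $M = L$, $\rho = \varepsilon$ and $\iota = \DUnit$ then gives the lemma, the well-typedness part following from the annotated semantics preserving types (cf.\ Theorem~\ref{th:ann-semantics-erases}). I would run the argument with $\key$ the identity function, as assumed in this part of the development; the extension to an arbitrary valid indexing function is immediate, since every canonical index appearing in $\asem{L}$ also occurs in $\isem{L}$, on which such a $\key$ is injective.

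The induction splits into three cases. A base term contributes no collection annotations and its type has no paths, so there is nothing to check. A record pushes each path into one of its components, and the claim follows from the induction hypothesis, since $\mathit{tags}$ only shrinks and $\iota$ is unchanged. The substantive case is a query term $L' = \bigunion_k C_k$ with $C_k = \For\,(\vec{G_k}\,\Where\,X_k)\,\retann{M_k}{a_k}$: for the path $\epsilon$, unfolding the semantics shows the collected indices are exactly the $\Ind{a_k}{\iota.j}$, with $k$ ranging over the union's branches and $j$ over the $\enum$-positions of the tuples satisfying $X_k$; for a path $\pbag.q'$, the collected list is the concatenation, over branches $k$ and positions $j$, of $\indexes{q'}{\asem{M_k}_{\rho_{k,j},\iota.j}}$, where $\rho_{k,j}$ extends $\rho$ by the $j$-th satisfying tuple, and the induction hypothesis applies to each summand since $M_k$ is a proper subterm.

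The main obstacle — and the reason the invariant must track both a tag-provenance fact and a dynamic-prefix fact — is the union case, because neither component of an index separates the collected indices by itself: distinct branches of a union routinely emit elements with identical dynamic indices, since each branch restarts its positions at $1$, while a single comprehension emits many elements sharing one static tag. The key point is that these two collision modes never overlap, and the two side conditions in the invariant are exactly what is needed to see this. Concretely, when comparing two indices drawn from the concatenation in the $\pbag.q'$ case: if they come from the same branch and position they differ by the induction hypothesis; if from the same branch but positions $j \neq j'$, their dynamic parts are $\iota.j.\vec{u}$ and $\iota.j'.\vec{v}$ and already differ at position $|\iota|+1$; and if from different branches $k \neq k'$, their static parts lie in the disjoint sets $\mathit{tags}(M_k)$ and $\mathit{tags}(M_{k'})$. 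Phrasing the invariant so that every pair falls into exactly one of these three cases is essentially the whole proof; within a single iteration, injectivity of $\enum$ supplies the remaining distinctness, and the same bookkeeping recovers the two side conditions at each step.
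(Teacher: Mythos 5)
Your argument is correct, and in fact the paper never supplies a proof of this lemma at all --- it is stated bare in the appendix (as is its generalisation to valid indexing functions), so there is nothing to compare against except the argument the authors presumably had in mind, which is exactly the one you give. The strengthened invariant is the right one: duplicate-freeness alone is not inductive, and your two side conditions (static tags drawn from $\mathit{tags}(M)$, dynamic parts extending the current $\iota$) are precisely what is needed to separate the three collision modes in the union/comprehension case, with global uniqueness of the static annotations from the pre-processing step doing the work across branches and the $\enum$ positions doing it across iterations. The only points worth tightening in a full write-up are cosmetic: the induction hypothesis should be phrased for open terms under a typing context matched by $\rho$ (the comprehension bodies $M_k$ have the generator variables free), and the well-typedness half of well-indexedness needs a routine type-preservation induction rather than an appeal to the erasure theorem.
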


Our next theorem states that for well-indexed values, stitching is a
left-inverse of shredding.
\begin{theorem}
\label{th:shred-build-is-id}
\begin{sloppypar}
If $\vdash s : \bagt{A}$ and $s$ is well-indexed at type $\bagt{A}$ then
$\buildtop{\shred_s(\bagt{A})} = s$.
\end{sloppypar}
\end{theorem}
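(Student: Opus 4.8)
The plan is to unfold both sides and reduce the theorem to a stronger claim provable by structural induction on types. Unfolding the definitions, $\buildtop{\shred_s(\bagt A)} = \build{\Ind{\top}{1}}{\package_{(\shouter{s}_{-}),\epsilon}(\bagt A)}$, so the theorem is the special case $p = \epsilon$, $B = \bagt A$, $I = \Unit = \Ind{\top}{1}$, $v = s$ of the following generalized claim. For every position $p$ --- a sequence of record-label selections and bag-descents $\pbag$ --- leading from $\bagt A$ to a sub-type $B$, and for every occurrence inside $s$ of a sub-value $v : B$ at position $p$ with \emph{enclosing index} $I$ (taken to be $\Unit$ if $p$ crosses no $\pbag$, and otherwise the annotation of the element chosen at the last $\pbag$ of $p$), we have $\build{\shinner{v}_I}{\package_{(\shouter{s}_{-}),p}(B)} = v$.

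First I would prove this claim by induction on the structure of $B$. When $B$ is a base type, $v$ is a constant $c$ and both $\shinner{c}_I$ and $\build{c}{B}$ equal $c$. When $B$ is a record type, the stitch function recurses into each field, $\shinner{-}_I$ distributes over the fields leaving $I$ unchanged, and the $i$-th component follows from the induction hypothesis at the extended position $p.\cl_i$ (the enclosing index is unchanged, since no $\pbag$ is crossed). When $B = \bagt{B'}$, the value $v$ is itself a bag $\bagv{[v_k\ann{K_k}]_k}$ and $\shinner{v}_I = I$; by definition, $\build{I}{-}$ applied to the package whose outermost annotation is $\shouter{s}_p$ keeps exactly those pairs of $\shouter{s}_p$ whose first component is $I$, drops that component, and recursively stitches the remaining inner value. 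Applying the induction hypothesis to each $v_k : B'$ at position $p.\pbag$ with enclosing index $K_k$ (the $k$-th element's own annotation), the bag case reduces to showing that filtering $\shouter{s}_p$ by $I$ returns precisely $[\tuple{I,\shinner{v_k}_{K_k}}\ann{K_k}]_k$, i.e.\ exactly the contribution of our bag $v$.

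That filtering step is the heart of the argument, and the only place well-indexedness is used. I would first show, by a side induction on $p$, that $\shouter{s}_p = \bagv{\shlist{s}_{\Unit,p}}$ is the concatenation, over all occurrences $o$ of bags at position $p$ inside $s$, of the blocks $[\tuple{I_o,\shinner{v'_j}_{K_j}}\ann{K_j}]_j$, where $\bagv{[v'_j\ann{K_j}]_j}$ is the sub-bag $b_o$ at occurrence $o$ and $I_o$ its enclosing index, with the element annotations $\ann{K_j}$ preserved throughout. It then remains to check that the indices $I_o$ are pairwise distinct, so that filtering by $I = I_{o^{\ast}}$ isolates exactly the block of our bag $v = b_{o^{\ast}}$. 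Writing $p = \hat{p}.\pbag.q$ with $q$ crossing no $\pbag$, so that $\hat{p}.\epsilon \in \paths(\bagt A)$ is a genuine path to the ``parent'' bag, every enclosing index $I_o$ is the annotation of an element of some occurrence of that parent bag, and two distinct occurrences of $p$-bags either descend from distinct occurrences of the parent or select distinct elements of a single occurrence; in both cases the annotations differ because $\indexes{\hat{p}.\epsilon}{s}$ consists of distinct elements, which is exactly well-indexedness of $s$ at the path $\hat{p}.\epsilon$.

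The main difficulty I anticipate is stating the generalized claim precisely enough that the induction closes: the bookkeeping of the ``enclosing index'' along a position, and the verification that the filter is injective. Two subtleties need care. First, $\shinner{-}_I$ collapses \emph{all} nested bags occurring inside a single record to the same index $I$; this is harmless because those bags are shredded into the \emph{distinct} flat collections $\shouter{s}_{p.\cl_i.\epsilon}$, so no cross-talk occurs between them. Second, the element annotations threaded through $\shlist{-}$, $\shinner{-}$ and $\buildname$ must line up exactly, so that ``well-indexed'' is precisely the hypothesis that makes the filtering deterministic. Once the generalized claim is established, the theorem is immediate, and everything else is routine unfolding of the definitions of $\package$, $\shouter{-}_{-}$, $\shinner{-}$, and $\buildname$.
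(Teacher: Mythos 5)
Your proof is correct and follows essentially the same route as the paper's: your generalized claim is the paper's Theorem~\ref{th:shred-build-is-desc}, which formalises ``the occurrence of a sub-value at position $p$ with enclosing index $I$'' via the descendant function $\desc{v}_{p,\iv}$ and is likewise proved by induction on the type. Your filtering step is exactly the paper's key Lemma~\ref{lem:desc}, also established by a side induction on the path, with well-indexedness invoked at precisely the same point to make the filter isolate a single block.
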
%
Combining Theorem~\ref{th:sem-shred-commute},
Lemma~\ref{lem:uniqueness} and Theorem~\ref{th:shred-build-is-id} we
obtain the main correctness result (see
Figure~\ref{fig:shred-stitch-pic}).
\begin{theorem}
\label{th:shred-stitch-app}
If $\vdash L : \bagt{A}$ then:
$
 \buildtop{\psem{L}_{\bagt{A}}} = \asem{L}
$, 
and in particular:
$
 \erase(\buildtop{\psem{L}_{\bagt{A}}}) = \nsem{L}
$.
\end{theorem}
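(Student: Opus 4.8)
The plan is to derive Theorem~\ref{th:shred-stitch-app} as a short composition of the three results already assembled in the ``Main results'' subsection, realising the commuting square of Figure~\ref{fig:shred-stitch-pic}. Fix $\vdash L : \bagt{A}$. First I would rewrite the left-hand side using Theorem~\ref{th:sem-shred-commute}, which gives $\psem{L}_{\bagt{A}} = \shred_{\asem{L}}(\bagt{A})$; this is the ``shredding commutes with evaluation'' edge. Next I would invoke Lemma~\ref{lem:uniqueness} with its generic type instantiated to $\bagt{A}$: this simultaneously supplies $\vdash \asem{L} : \bagt{A}$ and the fact that $\asem{L}$ is well-indexed, i.e.\ that for every $p \in \paths(\bagt{A})$ the list $\indexes{p}{\asem{L}}$ has pairwise-distinct elements. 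Both hypotheses of Theorem~\ref{th:shred-build-is-id} are then met, so that theorem yields $\buildtop{\shred_{\asem{L}}(\bagt{A})} = \asem{L}$, i.e.\ $\buildname$ is a left inverse of value-shredding on well-indexed values. Chaining the two equalities gives $\buildtop{\psem{L}_{\bagt{A}}} = \asem{L}$, which is the first claim.

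For the ``in particular'' clause I would apply $\erase$ to both sides of the equation just obtained and use Theorem~\ref{th:ann-semantics-erases}, which says $\erase(\asem{L}) = \nsem{\erase(L)}$; the right-hand side is exactly what $\nsem{L}$ abbreviates, so $\erase(\buildtop{\psem{L}_{\bagt{A}}}) = \nsem{L}$, completing the proof.

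Since the headline theorem is a two-line combination, the real content --- and the main obstacle --- lives in the three lemmas it cites, which I would expect to be proved as part of the development leading up to it. The hardest is Theorem~\ref{th:sem-shred-commute}: it needs an induction over the structure of normalised, annotated query terms showing that the dynamic index $\iota$ threaded through the annotated semantics $\asem{-}$ is reproduced position-for-position by the dynamic indices generated when each shredded query $\shouter{L}_p$ is run under $\ssem{-}$ --- in particular that the two uses of $\enum$ assign matching positions $\iota.j$ to matching tuples, and that in a comprehension body $\retann{\tuple{I,N}}{a}$ the pairing of the outer index $\iup$ with the inner index $\idown$ lines up with the parent/child indices attached to the corresponding nested value. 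Lemma~\ref{lem:uniqueness} is a secondary obstacle: it rests on the observation that $\enum$ yields strictly increasing positions, so that canonical indices $\Ind{a}{\iota.j}$ sharing a prefix $\iota$ are distinct, plus an induction propagating distinctness through records and nesting. Theorem~\ref{th:shred-build-is-id} is comparatively routine --- an induction on shredded-package structure, where at the bag case well-indexedness guarantees that the filter $\tuple{I,\iv} \gets s$ inside $\buildname$ recovers precisely the sublist emitted for index $I$.
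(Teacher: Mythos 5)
Your proof is correct and is essentially identical to the paper's own argument: the paper likewise obtains $\buildtop{\psem{L}_{\bagt{A}}} = \buildtop{\shred_{\asem{L}}(\bagt{A})} = \asem{L}$ by chaining Theorem~\ref{th:sem-shred-commute} with Lemma~\ref{lem:uniqueness} and Theorem~\ref{th:shred-build-is-id}, and then applies $\erase$ together with Theorem~\ref{th:ann-semantics-erases} for the second claim. Your assessment of where the real work lies (chiefly in Theorem~\ref{th:sem-shred-commute} and the well-indexedness lemma) also matches the paper's development.
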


Not all possible indexing schemes satisfy Lemma~\ref{lem:uniqueness}.
To identify those that do, we recall the function for computing
the canonical indexes of a nested query.
\[\small
\begin{array}{rcl}
\isem{L} &=& \isem{L}_{\varepsilon,\DUnit}
\sepskip\\
\isem{\bigunion\iton\,C_i}_{\rho,\iota}
  &=& \Concat([\isem{C_i}_{\rho,\iota}]\iton) \\
\isem{\record{\cl_i=M_i}\iton}_{\rho,\iota}
  &=& \Concat([\isem{M_i}_{\rho,\iota}]\iton) \\
\isem{X}_{\rho,\iota} &=& \nil
\sepskip\\
\multicolumn{3}{l}
{\bl
 \isem{\For\,([x_i \gets t_i]\iton~\Where\,X)\,\retann{M}{a}}_{\rho,\iota} = \\
 \quad\Concat([\Ind{a}{\iota.j} \cons \isem{M}_{\rho[x_i \mapsto r_i]\iton,\iota.j} \\
 \quad\quad\mid \tuple{j, \vec{r}} \gets 
      \enum([\vec{r} \mid [r_i \gets \tsem{t_i}]\iton, \nsem{X}_{\rho[x_i \mapsto r_i]\iton}])]) \\
 \el}
\end{array}
\]
Note that $\isem{-}$ resembles $\asem{-}$,
but instead of the nested value $v$ it computes all indexes of $v$.
An indexing function $\key : \Index \to A$ is
\emph{valid} with respect to the closed nested query $L$ if it is
injective and defined on every canonical index in $\isem{L}$.
%
\begin{lemma}
If $\key$ is valid for $L$ then $\asem{L}$ is well-indexed.
\end{lemma}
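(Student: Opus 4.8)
The plan is to prove that $\asem{L}$ is well-indexed, i.e.\ that $\vdash \asem{L} : A$ (where $\vdash L : A$) and that for every $p \in \paths(A)$ the list $\indexes{p}{\asem{L}}$ has pairwise distinct elements. The typing side condition is routine: it follows from a standard type-soundness argument for the annotated semantics (alternatively, from the erasure identity $\erase(\asem{L}) = \nsem{\erase(L)}$ together with soundness of the unannotated semantics), so the real content is distinctness. I would obtain distinctness in two stages: first reduce it to a statement purely about canonical indexes, and then prove that statement by induction on the normal form.

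For the reduction, observe that every index annotation occurring anywhere in $\asem{L}$ has the form $\key(\Ind{a}{\iota'})$ for some canonical index $\Ind{a}{\iota'}$. A straightforward simultaneous induction on normalised terms --- reading the clauses of $\asem{-}$ against those of $\isem{-}$, which respectively attach and collect exactly the same canonical indexes while threading the dynamic index $\iota$ in lock-step --- shows that the canonical indexes arising in $\asem{L}$ are precisely the elements of the list $\isem{L}$. Validity of $\key$ therefore makes $\asem{L}$ well defined, and since $\key$ is injective, $\indexes{p}{\asem{L}}$ is pairwise distinct as soon as the underlying canonical indexes along $p$ are pairwise distinct. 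Hence it suffices to prove the lemma for $\key = \mathrm{id}$, i.e.\ for the canonical indexing scheme.

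For the canonical case I would prove, by induction on the structure of the normalised term, the following strengthening: for every normalised term $M$ with $\Gamma \vdash M : A$, every environment $\rho$, every dynamic index $\iota$, and every $p \in \paths(A)$, the canonical indexes occurring in $\indexes{p}{\asem{M}_{\rho,\iota}}$ (i) have $\iota$ as a prefix of their dynamic component and a static component that occurs syntactically in $M$, and (ii) are pairwise distinct. Instantiating with $M = L$, $\rho = \varepsilon$, $\iota = \DUnit$ then gives the lemma. The base-term case is vacuous ($\paths(O) = \emptyset$); the record case, and the subcase $p = \pbag.p'$ of a comprehension, hand straight off to the induction hypothesis on the relevant subterm. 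The core of the induction step is the comprehension $\For\,([x_i \gets t_i]\iton~\Where\,X)\,\retann{M'}{a}$. At path $\epsilon$ the annotations are exactly the $\Ind{a}{\iota.j}$ as $\tuple{j,\vec r}$ ranges over $\enum(\dots)$, and these are distinct because $\enum$ never repeats a position $j$; invariant (i) is immediate. At a path $\pbag.p'$ the list is the concatenation over $\tuple{j,\vec r}$ of $\indexes{p'}{\asem{M'}_{\rho[x_i \mapsto r_i]\iton,\iota.j}}$; by the induction hypothesis applied to the body $M'$ (a strictly smaller normalised term) each block is internally distinct and consists of indexes whose dynamic component has $\iota.j$ as a prefix, so blocks for different $j$ are disjoint (a common element would have both $\iota.j$ and $\iota.j'$ as a prefix, impossible for $j \ne j'$), and once more $\enum$ rules out a repeated $j$. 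Finally, for a union $\bigunion_i C_i$ at any path, the list is the concatenation over $i$ of $\indexes{p}{\asem{C_i}_{\rho,\iota}}$; by the comprehension analysis just described each block is distinct and uses only static indexes occurring in $C_i$, and the pre-processing step that annotates every comprehension body with a \emph{fresh} name ensures that distinct $C_i$ use disjoint sets of static indexes, so the concatenation is distinct.

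The step I expect to be the main obstacle is getting the induction hypothesis right. Plain distinctness does not survive the union case, since the two branches of a union typically emit overlapping dynamic components (both iterate and generate $\iota.1, \iota.2, \dots$); one must additionally propagate the invariant that static components come from the current subterm and then appeal to the freshness of the static-index annotations to separate the branches. The dynamic-prefix invariant, on the other hand, is exactly what separates different iterations of one comprehension. Once both invariants are in place, everything that remains is routine unfolding of the definitions of $\indexes{-}{-}$, $\asem{-}$, $\isem{-}$, and $\enum$.
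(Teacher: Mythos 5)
Your proof is correct. The paper states this lemma (and the companion Lemma~\ref{lem:uniqueness}) without giving a proof, so there is no official argument to compare against; what you have written is a legitimate and complete filling-in of that gap. The two moves that matter are exactly the ones you make: first, reducing to the canonical scheme by observing that every annotation in $\asem{L}$ is $\key$ applied to a canonical index drawn from $\isem{L}$, so that validity (definedness plus injectivity) transfers distinctness from canonical to concrete indexes; and second, strengthening the induction hypothesis with both the dynamic-prefix invariant (which separates the iterations $\iota.j$ versus $\iota.j'$ of a single comprehension) and the static-locality invariant (which, combined with the freshness of the $\retann{}{a}$ annotations, separates the branches of a union, whose dynamic components genuinely do overlap). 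Plain distinctness alone would indeed break in the union case, as you note, so identifying that strengthening is the substantive content of the proof, and your case analysis over $\epsilon$, $\pbag.p$, and $\cl_i.p$ covers all paths generated by $\paths(A)$.
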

The only requirement on indexes in the proof of
Theorem~\ref{th:shred-stitch} is that nested values be well-indexed,
hence the proof extends to any valid indexing scheme.  The concrete,
natural, and flat indexing schemes are all valid.

\subsection{Detailed proofs}

We first show that the inner shredding
function $\shinner{-}$ commutes with the annotated semantics.
\begin{lemma}
\label{lem:sem-shinner-commute}
$
\ssem{\shinner{M}_a}_{\rho,\iota} =
  \shinner{\asem{M}_{\rho,\iota}}_{\Ind{a}{\iota}}
$
\end{lemma}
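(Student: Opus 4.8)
The plan is to prove the identity by induction on the structure of the normalised term $M$ (equivalently, on the clauses defining $\shinner{-}_a$ in Figure~\ref{fig:shredding-translation}), working with the canonical indexing scheme in which $\key$ is the identity; the extension to an arbitrary \emph{valid} $\key$ (Section~\ref{sec:indexing-schemes}) is routine, since $\key$ is injective and total on the canonical indexes that actually arise. There are four kinds of case. For $M = \record{\cl_i = M_i}\iton$, the term-level shredding $\shinner{-}_a$, the value-level shredding $\shinner{-}_{\Ind{a}{\iota}}$, $\ssem{-}$, and $\asem{-}$ all push through record constructors componentwise, so the claim follows from the induction hypotheses on the $M_i$. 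For $M = L$ a query (bag) term, $\shinner{L}_a = \Ind{a}{\idown}$, so the left-hand side is $\ssem{\Ind{a}{\idown}}_{\rho,\iota} = \Ind{a}{\iota}$ (the dynamic index is always of the form $\iota'.i$ at this point, since we are inside a $\retann{-}{a}$ guarded by at least one comprehension); on the other side $\asem{L}_{\rho,\iota}$ is a result $s$ and $\shinner{s}_{\Ind{a}{\iota}} = \Ind{a}{\iota}$, so the two agree. For a base term $M = X$ (that is, $X = x.\cl$ or $X = c(\vec{X})$), note that $\shinner{-}_a$ acts as the identity except that it rewrites each $\isEmpty{L'}$ subterm to $\isEmpty{\shouter{L'}_\epsilon}$; modulo that rewrite (handled in the next case), $\ssem{\shinner{X}_a}_{\rho,\iota} = \nsem{X}_\rho$, while $\asem{X}_{\rho,\iota} = \nsem{X}_\rho$ is a flat base value on which $\shinner{-}_{\Ind{a}{\iota}}$ is the identity, so both sides equal $\nsem{X}_\rho$.

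The delicate case is $M = \isEmpty{L}$, where $\shinner{\isEmpty{L}}_a = \isEmpty{\shouter{L}_\epsilon}$. Here the left-hand side is $\True$ exactly when the top-level shredded query $\shouter{L}_\epsilon$ evaluates to the empty list, whereas the right-hand side is $\shinner{-}_{\Ind{a}{\iota}}$ applied to the boolean $\nsem{\isEmpty{L}}_\rho$ --- which is the identity on booleans --- i.e.\ $\True$ exactly when $\asem{L}_{\rho,\iota} = \nil$. So the obligation is to show that $\ssem{\shouter{L}_\epsilon}_{\rho,\iota}$ and $\asem{L}_{\rho,\iota}$ have the same length. This is a self-contained auxiliary induction on the query-term structure of $L$: by the definition of $\shlist{-}_{a,\epsilon}$, the top-level shredded query has exactly the same generators, $\Where$-clauses, and union/comprehension structure as $L$, differing only in the bodies of its $\retann{-}{b}$; since a comprehension body affects only the values produced and not how many iterations succeed, $\ssemcomp{-}$ on $\shouter{L}_\epsilon$ and $\asemcomp{-}$ on $L$ produce result lists of equal length, hence one is empty iff the other is. I would either state and prove this length-preservation fact as a preliminary lemma, or fold it into a simultaneous induction together with the present lemma and the companion commutation statement for the outer shredding $\shouter{-}_p$ (of which Theorem~\ref{th:sem-shred-commute} is the packaged form).

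I expect the main obstacle to be precisely this $\isEmpty$ case: making the informal convention that ``$\nsem{-}$ applied to a shredded query term means $\ssem{-}$'' fully precise, establishing the length-preservation fact, and keeping the dynamic-index bookkeeping consistent between the two sides --- in particular the pattern $\iota$ versus $\iota.i$ in the index clauses of $\ssem{-}$, and the placement of $\key$ so that the canonical index $\Ind{a}{\iota}$ threaded through the value-level $\shinner{-}_{\Ind{a}{\iota}}$ matches the concrete index $\key(\Ind{a}{\iota})$ delivered by the shredded semantics. Every other case is a mechanical unfolding of definitions.
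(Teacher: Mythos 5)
Your proposal is correct and follows essentially the same route as the paper's proof: a structural induction on the normalised term $M$ with exactly the cases you list (variable projection, constant application, record, embedded query term $L$, and $\isEmpty{L}$), each discharged by unfolding the definitions of $\shinner{-}$, $\ssem{-}$, and $\asem{-}$. The one place you go beyond the paper is the $\isEmpty$ case, where the paper simply asserts in a single equational step that emptiness of $\shouter{L}_\epsilon$ coincides with emptiness of $L$; your length-preservation observation (the top-level shredding keeps the generators, $\Where$-clauses, and union structure and changes only comprehension bodies) is exactly the justification that step needs, and folding it into a simultaneous induction with the outer-shredding commutation lemma is a sound way to make it precise.
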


\begin{proof}
By induction on the structure of $M$.

\begin{Cases}


\item[Case $x.\cl$:]
\begin{derivation}
 & \ssem{\shinner{x.\cl}_a}_{\rho,\iota} \\
=& \\
 & \ssem{x.\cl}_{\rho,\iota} \\
=& \\
 & \asem{x.\cl}_{\rho,\iota} \\
=& \\
 & \shinner{\asem{x.\cl}_{\rho,\iota}}_{\Ind{a}{\iota}} \\
\end{derivation}

\newcommand{\sqbrack}[1]{[{#1}]}

\item[Case $c(\sqbrack{X_i}\iton)$:]
\begin{derivation}
 & \ssem{\shinner{c([X_i]\iton)}_a}_{\rho,\iota} \\
=& \\
 & \ssem{c([\shinner{X_i}_a]\iton)}_{\rho,\iota} \\
=& \\
 & \asem{c([\shinner{X_i}_a]\iton)}_{\rho,\iota} \\
=& \\
 & \sem{c}([\asem{\shinner{X_i}_a}_{\rho,\iota}]\iton) \\
=& \\
 & \shinner{\asem{c([X_i]\iton)}_{\rho,\iota}}_{\Ind{a}{\iota}} \\
\end{derivation}

\item[Case $\isEmpty{M}$:]
\begin{derivation}
 & \ssem{\shinner{\isEmpty{M}}_a}_{\rho,\iota} \\
=& \\
 & \ssem{\isEmpty{\shouter{M}_\epsilon}}_{\rho,\iota} \\
=& \\
 & \asem{\isEmpty{\shouter{M}_\epsilon}}_{\rho,\iota} \\
=& \\
 & \shinner{\asem{\isEmpty{M}}_{\rho,\iota}}_{\Ind{a}{\iota}} \\
\end{derivation}

\item[Case $\record{\ora{\cl=M}}$:]

\begin{derivation}
 & \ssem{\shinner{\record{\cl_i=M_i}\iton}_a}_{\rho,\iota} \\
=& \\
 & \ssem{\record{\cl_i=\shinner{M_i}_a}\iton}_{\rho,\iota} \\
=& \\
 & \record{\cl_i=\ssem{\shinner{M_i}_a}_{\rho,\iota}}\iton \\
=& \byy{\text{Induction hypothesis}} \\
 & \record{\cl_i=\shinner{\asem{M_i}_{\rho,\iota}}_{\Ind{a}{\iota}}}\iton \\
=& \\
 & \shinner{\record{\cl_i=\asem{M_i}_{\rho,\iota}}\iton}_{\Ind{a}{\iota}} \\
=& \\
 & \shinner{\asem{\record{\cl_i=M_i}\iton}_{\rho,\iota}}_{\Ind{a}{\iota}} \\
\end{derivation}

\item[Case $L$:]

\begin{derivation}
 & \ssem{\shinner{L}_a}_{\rho,\iota} \\
=& \\
 & \ssem{\Ind{a}{\idown}}_{\rho,\iota} \\
=& \\
 & \Ind{a}{\iota} \\
=& \\
 &\shinner{\asem{L}_{\rho,\iota}}_{\Ind{a}{\iota}} 
\end{derivation}
\end{Cases}
This completes the proof.
\end{proof}
The first part of the following lemma allows us to run a shredded
query by concatenating the results of running the shreddings of its
component comprehensions.
Similarly, the second part allows us to shred the results of running a
nested query by concatenating the shreddings of the results of running
its component comprehensions.
\begin{lemma}

\label{lem:concat-lift}
~
\begin{compactenum}
\item\label{enum:concat-lift-sem-shouter}
   $\begin{array}[t]{l}
     \ssemcomp{\bigunion\,\shlist{\bigunion\iton\,C_i}_{a,p}}_{\rho,\iota} =
       \Concat([\ssem{\shlist{C_i}_{a,p}}_{\rho,\iota}]\iton)
     \end{array}$
\item\label{enum:concat-lift-shouter-sem}
   $\begin{array}[t]{l}
    \shlist{\asem{\bigunion\iton\,C_i}_{\rho,\iota}}_{\Ind{a}{\iota},p} =
      \Concat([\shlist{\bagv{\asemcomp{C_i}_{\rho,\iota}}}_{\Ind{a}{\iota},p}]\iton) \\
    \end{array}$
\end{compactenum}

\end{lemma}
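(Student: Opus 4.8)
The plan is to prove both parts by induction on $n$, reducing each equation to the monoid laws for the meta-level list operations $\append$ and $\Concat$. The two parts are mirror images. On one side the shredding translation turns $\bigunion\iton C_i$ into a $\Concat$ of the per-component shreddings --- on terms by the clause $\shlist{\bigunion\iton C_i}_{a,p} = \Concat([\shlist{C_i}_{a,p}]\iton)$ of Figure~\ref{fig:shredding-translation}, and on values because $\asem{\bigunion\iton C_i}_{\rho,\iota} = \Concat([\asemcomp{C_i}_{\rho,\iota}]\iton)$ --- while on the other side the semantics $\ssemcomp{-}$ turns a $\bigunion$ of a list of comprehensions into a $\Concat$ of the per-comprehension semantics. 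So each side of each equation has the form ``apply a pointwise operation to every element of a concatenation, then $\Concat$'', and the content is just that the two ways of grouping agree. The enabling facts are the list identities $\Concat(\xs \append \ys) = \Concat(\xs) \append \Concat(\ys)$ and, as a consequence, $\Concat(\map\,g\,(\Concat\,\xs)) = \Concat(\map\,(\Concat \circ \map\,g)\,\xs)$, each proved by a one-line induction on list length.

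For part~(\ref{enum:concat-lift-sem-shouter}) I would unfold the semantics of a union, $\ssemcomp{\bigunion\,\ell}_{\rho,\iota} = \Concat(\map\,g\,\ell)$ with $g = (\lambda C.\,\ssemcomp{C}_{\rho,\iota})$, together with the clause for $\shlist{\bigunion\iton C_i}_{a,p}$, so that the left-hand side becomes $\Concat(\map\,g\,(\Concat([\shlist{C_i}_{a,p}]\iton)))$. Applying the second list identity above with $\xs = [\shlist{C_i}_{a,p}]\iton$ rewrites this to $\Concat([\Concat(\map\,g\,\shlist{C_i}_{a,p})]\iton)$, which is exactly the right-hand side once $\ssem{\shlist{C_i}_{a,p}}_{\rho,\iota}$ is read as $\Concat(\map\,g\,\shlist{C_i}_{a,p})$, that is, the semantics of the query $\bigunion\,\shlist{C_i}_{a,p}$.

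For part~(\ref{enum:concat-lift-shouter-sem}) the one ingredient with real content is a distributivity lemma for value-level shredding: for $p \in \paths(\bagt{A})$ and any element lists $\vec{v},\vec{w}$ of $\bagt{A}$, $\shlist{\bagv{\vec{v} \append \vec{w}}}_{I,p} = \shlist{\bagv{\vec{v}}}_{I,p} \append \shlist{\bagv{\vec{w}}}_{I,p}$. This is proved by case analysis on the shape of $p$: since the query $\bigunion\iton C_i$ has bag type, $p$ begins with $\epsilon$ or with $\pbag$ and never with a label, and in each case the defining clause of $\shlist{-}_{I,p}$ on a bag value processes the element list pointwise --- via $\shinner{-}$ when $p = \epsilon$, and via $\Concat$ of recursive calls when $p = \pbag.p'$ --- so it commutes with $\append$. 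Given this, part~(\ref{enum:concat-lift-shouter-sem}) follows by induction on $n$: $\asem{\bigunion\iton C_i}_{\rho,\iota}$ is the bag value whose element list is $\asemcomp{C_1}_{\rho,\iota} \append \dots \append \asemcomp{C_n}_{\rho,\iota}$, and iterating the distributivity lemma splits $\shlist{-}_{\Ind{a}{\iota},p}$ over this concatenation into $\Concat([\shlist{\bagv{\asemcomp{C_i}_{\rho,\iota}}}_{\Ind{a}{\iota},p}]\iton)$.

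I expect the main obstacle to be notational hygiene rather than mathematical difficulty: the brackets $\ssem{-}$ and $\shlist{-}$ are overloaded --- applied to a single query term versus to a list of comprehensions, and to syntactic terms versus to annotated values --- so one must keep the levels apart and check that the path-shape case analyses are exhaustive under the typing hypotheses. The only step carrying genuine content is the value-shredding distributivity lemma above, which I would factor out and prove separately (by induction on the length of $p$) so that it can be reused in the neighbouring parts of the correctness argument.
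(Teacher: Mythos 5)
Your proposal is correct and matches the paper's own proof in substance: both parts are established by unfolding the union clauses of $\shlist{-}$, $\ssemcomp{-}$, and $\asem{-}$ into $\Concat$s and then regrouping via the associativity/flattening law for $\Concat$ over comprehensions, with the same reading of the overloaded notation $\ssem{\shlist{C_i}_{a,p}}$ as the semantics of the union of that comprehension list. The only organisational difference is that you factor the distributivity of value-level shredding over $\append$ out as a named lemma, whereas the paper performs the same step inline in its equational derivation by splitting the concatenated element list into singletons and re-concatenating.
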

\begin{proof}
~
\begin{compactenum}
\item
\begin{derivation}
 & \ssemcomp{\bigunion\,\shlist{\bigunion\iton\,C_i}_{a,p}}_{\rho,\iota} \\
=&  \\
 & \ssemcomp{\bigunion\,\Concat([\shlist{C_i}_{a,p}]\iton)}_{\rho,\iota} \\
=& \\
 & \ssemcomp{\bigunion\,\Concat([\shlist{C}_{a,p} \mid C \gets \vec{C}])}_{\rho,\iota} \\
=& \\
 & \Concat(\Concat([\ssemcomp{C'}_{\rho,\iota} \mid C \gets \vec{C}, C' \gets \shlist{C}_{a,p}])) \\
=& \\
 & \Concat([\ssem{\bigunion\,\shlist{C}_{a,p}}_{\rho,\iota} \mid C \gets \vec{C}]) \\
=& \\
 & \Concat([\ssem{\bigunion\,\shlist{C_i}_{a,p}}_{\rho,\iota}]\iton)
\end{derivation}

\item
\begin{derivation}
 & \shlist{\asem{\bigunion\iton\,C_i}_{\rho,\iota}}_{\Ind{a}{\iota},p} \\
=& \\
 & \shlist{\bagv{\Concat([\asemcomp{C_i}_{\rho,\iota}]\iton)}}_{\Ind{a}{\iota},p} \\
=& \\
 & \shlist{\bagv{\Concat([\asemcomp{C}_{\rho,\iota} \mid C \gets \vec{C}])}}_{\Ind{a}{\iota},p} \\
=& \\
 & \Concat(\Concat([\shlist{\bagv{s}}_{\Ind{a}{\iota},p}
                    \mid C \gets \vec{C}, s \gets \asemcomp{C}_{\rho,\iota}])) \\
=& \\
 & \Concat([\shlist{\bagv{\asemcomp{C}_{\rho,\iota}}}_{\Ind{a}{\iota},p}
            \mid C \gets \vec{C}]) \\
=& \\
 & \Concat([\shlist{\bagv{\asemcomp{C_i}_{\rho,\iota}}}_{\Ind{a}{\iota},p}]\iton) \\
\end{derivation}
\end{compactenum}
This completes the proof.
\end{proof}


We are now in a position to prove that the outer shredding function
$\shouter{-}$ commutes with the semantics.


%
\begin{lemma}
\label{lem:sem-shouter-commute}
$ \ssem{\shouter{L}_p}_{\rho,\DUnit} =
\shouter{\asem{L}_{\rho,\DUnit}}_p $
\end{lemma}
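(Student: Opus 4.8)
The plan is to prove Lemma~\ref{lem:sem-shouter-commute} as an instance of a strengthened statement amenable to induction. Since $\shouter{L}_p = \bigunion\,(\shlist{L}_{\top,p})$ on the term side and $\shouter{\asem{L}_{\rho,\DUnit}}_p = \bagv{\shlist{\asem{L}_{\rho,\DUnit}}_{\Unit,p}}$ with $\Unit = \Ind{\top}{\DUnit}$ on the value side, it suffices to show, for every normalised term or comprehension $M$, every path $p$ through the type of $M$, every static index $a$, dynamic index $\iota$ and environment $\rho$, that
\[
\ssemcomp{\bigunion\,\shlist{M}_{a,p}}_{\rho,\iota} = \shlist{\bagv{\asem{M}_{\rho,\iota}}}_{\Ind{a}{\iota},p},
\]
where on the right $\asem{M}$ is read with $\asemcomp{-}$ in place of $\asem{-}$ when $M$ is a comprehension, and $\bagv{-}$ is the identity on record values; Lemma~\ref{lem:sem-shouter-commute} is then the case $M = L$, $a = \top$, $\iota = \DUnit$. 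The induction runs on the structure of $M$, following the recursion of $\shlist{-}_{a,p}$, with a case split on $p$ in the comprehension case. The union case $\shlist{\bigunion\iton C_i}_{a,p} = \Concat([\shlist{C_i}_{a,p}]\iton)$ reduces to the comprehension cases by the two parts of Lemma~\ref{lem:concat-lift}, which say precisely that $\ssemcomp{-}_{\rho,\iota}$ and value-level shredding each commute with this $\Concat$. The record case $\shlist{\record{\ora{\cl=M}}}_{a,\cl_j.p} = \shlist{M_j}_{a,p}$ is immediate: $\asem{\record{\ora{\cl=M}}}_{\rho,\iota} = \record{\ora{\cl=\asem{M}_{\rho,\iota}}}$ and $\shlist{\record{\ora{\cl=v}}}_{I,\cl_j.p} = \shlist{v_j}_{I,p}$, so both sides collapse to the induction hypothesis for $M_j$ at the strictly shorter path $p$.

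The substantive case is $M = \For\,(\vec{G}\,\Where\,X)\,\retann{M'}{b}$ with $\vec{G} = [x_i \gets t_i]\iton$. For $p = \epsilon$, $\shlist{M}_{a,\epsilon}$ is the singleton shredded comprehension returning $\tuple{\Ind{a}{\iup}, \shinner{M'}_b}$; unfolding $\ssemcomp{-}_{\rho,\iota}$ runs the generators under $\enum$ and, for each pair $\tuple{j, \vec{r}}$ surviving $X$, yields $\tuple{\ssem{\Ind{a}{\iup}}_{\rho',\iota.j},\ \ssem{\shinner{M'}_b}_{\rho',\iota.j}}$ tagged with $\key(\Ind{b}{\iota.j})$, where $\rho' = \rho[x_i \mapsto r_i]\iton$. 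By the index rule $\ssem{\Ind{a}{\iup}}_{\rho',\iota.j} = \key(\Ind{a}{\iota})$ and by Lemma~\ref{lem:sem-shinner-commute}, $\ssem{\shinner{M'}_b}_{\rho',\iota.j} = \shinner{\asem{M'}_{\rho',\iota.j}}_{\Ind{b}{\iota.j}}$. On the other side, $\asemcomp{M}_{\rho,\iota}$ is the list of $\asem{M'}_{\rho',\iota.j}$ tagged with $\key(\Ind{b}{\iota.j})$ over the identical enumeration, and $\shlist{-}_{\Ind{a}{\iota},\epsilon}$ rewrites each tagged element $v\ann{J}$ into $\tuple{\Ind{a}{\iota}, \shinner{v}_J}\ann{J}$; comparing the two lists element by element gives the equality.

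For $p = \pbag.p'$, $\shlist{M}_{a,\pbag.p'} = [\For\,(\vec{G}\,\Where\,X)\,C \mid C \gets \shlist{M'}_{b,p'}]$. Evaluating with $\ssemcomp{-}_{\rho,\iota}$ and interchanging the $\Concat$ coming from the $\bigunion$ with the $\Concat$ coming from the generator rule rewrites the left-hand side to $\Concat([\,\ssemcomp{\bigunion\,\shlist{M'}_{b,p'}}_{\rho',\iota.j} \mid \tuple{j,\vec{r}} \gets \enum(\dots)\,])$; the induction hypothesis for the body $M'$ at static index $b$ and dynamic index $\iota.j$ turns each summand into $\shlist{\bagv{\asem{M'}_{\rho',\iota.j}}}_{\Ind{b}{\iota.j},p'}$, and unfolding $\shlist{\bagv{\asemcomp{M}_{\rho,\iota}}}_{\Ind{a}{\iota},\pbag.p'}$ via the value-level rule $\shlist{\bagv{[v_k\ann{J_k}]}}_{I,\pbag.p'} = \Concat([\shlist{v_k}_{J_k,p'}])$ produces exactly that same $\Concat$.

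I expect the main obstacle to be the index bookkeeping rather than any conceptual difficulty: keeping straight which index occurrences are canonical and which are concrete (in the image of $\key$), checking that the convention ``the outer index at position $\iota.j$ is $\iota$ and the inner index is $\iota.j$'' is applied identically on the shredded-query side and on the annotated-value side, and verifying that the $\enum$-driven enumeration of dynamic positions $j$ produces literally the same list on both sides --- the same generators in the same order, filtered by the same $X$. Once those are pinned down, every case is routine unfolding of definitions combined with Lemma~\ref{lem:sem-shinner-commute} and Lemma~\ref{lem:concat-lift}.
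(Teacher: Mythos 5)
Your proposal is correct and takes essentially the same route as the paper: the paper likewise strengthens the lemma to a pair of mutually inductive equations over comprehensions and normalised terms, discharges the union case via Lemma~\ref{lem:concat-lift}, the record case by unfolding, and the comprehension cases at $\epsilon$ and $\pbag.p$ via Lemma~\ref{lem:sem-shinner-commute} and the induction hypothesis respectively. The only cosmetic difference is that the paper phrases the induction as being on the structure of the path $p$ rather than on the term, which is immaterial since both strictly decrease together in the one genuinely recursive case.
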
%

\begin{proof}
~
We prove the following:
\begin{compactenum}
\item\label{enum:sem-shouter-commute}
  $\ssem{\shouter{L}_p}_{\rho,\DUnit} = \shouter{\asem{L}_{\rho,\DUnit}}_p$
\item\label{enum:sem-shcomp-commute}
  $\ssemcomp{\bigunion\,\shlist{C}_{a,p}}_{\rho,\iota} =
  \shlist{\bagv{\asemcomp{C}_{\rho,\iota}}}_{\Ind{a}{\iota},p}$
\item\label{enum:sem-shlist-commute}
  $\ssemcomp{\shlist{M}_{a,p}}_{\rho,\iota} =
  \shlist{\asem{M}_{\rho,\iota}}_{\Ind{a}{\iota},p}$
\end{compactenum}
The first equation is the result we require. Observe that it follows
from (\ref{enum:sem-shcomp-commute}) and Lemma~\ref{lem:concat-lift}.
We now proceed to prove equations (\ref{enum:sem-shcomp-commute}) and
(\ref{enum:sem-shlist-commute}) by mutual induction on the structure
of $p$.

There are only two cases for (\ref{enum:sem-shcomp-commute}), as the
$\cl_i.p$ case cannot apply.

\begin{Cases}
\item[Case $\epsilon$:]

\begin{derivation}
 & \ssemcomp{\bigunion\,\shlist{\For\,(\ora{x \gets t}\,\Where\,X)\,\retann{M}{b}}_{a,\epsilon}}_{\rho,\iota} \\
=& \byy{\text{Definition of }\ssemcomp{-}} \\
 &\ssemcomp{\For\,(\ora{x \gets t}\,\Where\,X)\,
              \retann{\tuple{\Ind{a}{\iota}, \shinner{M}_b}}{b}}_{\rho,\iota} \\
=& \byy{\text{Definition of }\ssemcomp{-}} \\
 &[\tuple{\Ind{a}{\iota}, \ssem{\shinner{M}_b}_{\rho[\ora{x \mapsto v}],\iota.i}} \ann{\Ind{b}{\iota.i}} \\
 &\quad\mid \tuple{i, \vec{v}} \gets 
       \enum([\vec{v} \mid \ora{v \gets \tsem{t}}, \ssem{X}_{\rho[\ora{x \mapsto v}]}])] \\
=& \byy{\text{Lemma~\ref{lem:sem-shinner-commute}}} \\
 &[\tuple{\Ind{a}{\iota}, \shinner{\asem{M}_{\rho[\ora{x \mapsto v}],\iota.i}}_{\Ind{b}{\iota.i}}} \ann{\Ind{b}{\iota.i}} \\
 &\quad\mid \tuple{i, \vec{v}} \gets 
       \enum([\vec{v} \mid \ora{v \gets \tsem{t}}, \ssem{X}_{\rho[\ora{x \mapsto v}]}])] \\
=& \byy{\text{Definition of }\shlist{-}} \\
 &\shlistl\bagvl[\asem{M}_{\rho[\ora{x \mapsto v}],\iota.i} \ann{\Ind{b}{\iota.i}} \\
 & \quad\mid \tuple{i, \vec{v}} \gets
        \enum([\vec{v} \mid \ora{v \gets \tsem{t}}, \ssem{X}_{\rho[\ora{x \mapsto v}]}])]
    \bagvr\shlistr_{\Ind{a}{\iota},\epsilon} \\
=& \byy{\text{Definition of }\asemcomp{-}} \\
 &\shlist{\bagv{\asemcomp{\For\,(\ora{x \gets t}\,\Where\,X)\,\retann{M}{b}}_{\rho,\iota}}}_{\Ind{a}{\iota},\epsilon} \\
\end{derivation}

\item[Case $\pbag.\epsilon$:]

\begin{derivation}
 & \ssemcomp{\bigunion\,\shlist{\For\,(\ora{x \gets t}\,\Where\,X)\,\retann{M}{b}}_{a,\pbag.p}}_{\rho,\iota} \\
=& \byy{\text{Definition of }\shlist{-}} \\
 & \ssemcomp{\bigunion\,
     [\For\,(\ora{x \gets t}\,\Where\,X)\,C
      \mid C \gets \shlist{M}_{b,p}]}_{\rho,\iota} \\
=& \byy{\text{Definition of }\ssemcomp{-}} \\
 & \Concat(
   [\begin{array}[t]{@{}l}
    \Concat \\
    \quad([\ssem{C}_{\rho[\ora{x \mapsto v}],\iota.i} \\
    \quad ~\mid \tuple{i, \vec{v}} \gets 
                    \enum([\vec{v} \mid \ora{v \gets \tsem{t}}, \ssem{X}_{\rho[\ora{x \mapsto v}]}])]) \\
    \mid C \gets \shlist{M}_{b,p}]) \\
    \end{array} \\
=& \byy{\text{Definition of }\ssemcomp{-}} \\
 & \Concat \\
 & \quad ([\ssemcomp{\shlist{M}_{b,p}}_{\rho[\ora{x \mapsto v}],\iota.i} \\
 & \quad\quad \mid \tuple{i, \vec{v}} \gets 
                    \enum([\vec{v} \mid \ora{v \gets \tsem{t}}, \ssem{X}_{\rho[\ora{x \mapsto v}]}])]) \\
=& \byy{\text{Induction hypothesis (\ref{enum:sem-shlist-commute})}}\\
 & \Concat \\
 & \quad ([\shlist{\asem{M}_{\rho[\ora{x \mapsto v}],\iota.i}}_{b,p} \\
 & \quad\quad \mid \tuple{i, \vec{v}} \gets 
                    \enum([\vec{v} \mid \ora{v \gets \tsem{t}}, \ssem{X}_{\rho[\ora{x \mapsto v}]}])]) \\
=& \byy{\text{Definitions of }\asemcomp{-}\text{ and }\shlist{-}} \\
 & \shlist{\bagv{\asemcomp{\For\,(\ora{x \gets t}\,\Where\,X)\,\retann{M}{b}}_{\rho,\iota}}}_{\Ind{a}{\iota},\pbag.p} \\
\end{derivation}
\end{Cases}

There are three cases for (\ref{enum:sem-shlist-commute}).

\begin{Cases}
\item[Cases $\epsilon$ and $\pbag.\epsilon$:] follow from
  (\ref{enum:sem-shcomp-commute}) by applying the two parts of
  Lemma~\ref{lem:concat-lift} to the left and right-hand side
  respectively.

\item[Case $\cl_i.\epsilon$:]

\begin{derivation}
 & \ssemcomp{\shlist{\record{\ora{\cl=M}}}_{a,\cl_i.p}}_{\rho,\iota} \\
=& \byy{\text{Definition of }\shlist{-}} \\
 & \ssemcomp{\shlist{M_i}_{a,p}}_{\rho,\iota} \\
=& \byy{\text{Induction hypothesis (\ref{enum:sem-shlist-commute})}} \\
 & \shlist{\asem{M_i}_{\rho,\iota}}_{\Ind{a}{\iota},p} \\
=& \byy{\text{Definition of }\asem{-}} \\
 & \shlist{\asem{\record{\ora{\cl=M}}}_{\rho,\iota}}_{\Ind{a}{\iota},\cl_i.p} \\
\end{derivation}
\end{Cases}
This completes the proof.
\end{proof}


We now lift Lemma~\ref{lem:sem-shouter-commute} to shredded packages.


\begin{proof}[\proofheader of Theorem~\ref{th:sem-shred-commute}]
  We need to show that if $\vdash L : \bagt{A}$ then
\[
 \psem{L}_A = \shred_{\asem{L}}(\bagt{A})
 \]
This is straightforward by induction on $A$, using Lemma~\ref{lem:sem-shouter-commute}.
\end{proof}

We have proved that shredding commutes with the semantics.
It remains to show that stitching after shredding is the identity on
index-annotated nested results. We need two auxiliary notions: the
\emph{descendant} of a value at a path, and the \emph{indexes} at the
end of a path (which must be unique in order for stitching to work).

We define $\desc{v}_{p,\iv}$, the \emph{descendant} of a value $v$ at
path $p$ with respect to inner value $\iv$ as follows.
\begin{equations}
\desc{v}_{p,\iv}                                   &=& \desc{v}_{\Unit,p,\iv} 
\sepskip\\
\desc{v}_{J,p,c}                               &=& c \\
\desc{v}_{J,p,\record{\cl_i=\iv_i}\iton}             &=& \record{\cl_i=\desc{v}_{J,p.\cl_i,\iv_i}}\iton \\
\desclist{\bagv{s}}_{J,\epsilon,I} &=& \left\{
\begin{array}{ll}
 s,       &\quad\text{if }J = I \\
 {}\nil,  &\quad\text{if }J \neq I \\
\end{array}
\right.\\
\desclist{\bagv{[v_i \ann{J_i}]\iton}}_{J,\pbag.p,I} &=& \Concat([\desclist{v_i}_{J_i,p,I}]\iton) \\
\desclist{\record{\cl_i=v_i}\iton}_{J,\cl_i.p,I}   &=& \desclist{v_i}_{J,p,I} \\
\end{equations}%
Essentially, this extracts the part of $v$ that corresponds to $w$.
The inner value $\iv$ allows us to specify a particular descendant as
an index, or nested record of indexes; for uniformity it may also
contain constants.

We can now formulate the following crucial technical lemma, which
states that given the descendants of a result $v$ at path $p.\pbag$
and the shredded values of $v$ at path $p$ we can stitch them together
to form the descendants at path $p$.
\begin{lemma}[key lemma]
\label{lem:desc}
If $v$ is well-indexed and $\vdash \desclist{v}_{J,p,I} :
\bagt{A}$, then
\[
  \begin{array}[t]{@{}l}
 \desclist{v}_{J,p,I} =
    [\desc{v}_{J,p.\pbag,\iv}\ann{I_\idown}
       \mid \tuple{I_\iup,\iv}\ann{I_\idown} \gets \shlist{v}_{J,p}, I_\iup = I]
 \end{array}
\]
\end{lemma}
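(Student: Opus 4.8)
The plan is to prove the key lemma by induction on the structure of the path $p$, reducing the base case to a separate \emph{reconstruction} sub-lemma. Throughout, the hypothesis that $v$ is well-indexed (which Lemma~\ref{lem:uniqueness} guarantees for $\asem{L}$) is exactly what makes index-based lookups behave as genuine inverses of shredding, so I would keep that hypothesis threaded through every statement, noting that each sub-value of a well-indexed value is again well-indexed.

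First I would establish the reconstruction sub-lemma: if the bag occurring in $v$ at a path $q$ is $\bagv{[v_i\ann{J_i}]\iton}$, then for each $i$ we have $\desc{v}_{J,q.\pbag,\shinner{v_i}_{J_i}} = v_i$, where $J$ is the parent index of that bag. This I would prove by induction on the structure of $v_i$ (equivalently, on its nested type). The constant case is immediate, since $\shinner{c}_{J_i}=c$ and $\desc{-}$ returns constants unchanged. The record case splits $\shinner{\record{\cl_k=w_k}}_{J_i}=\record{\cl_k=\shinner{w_k}_{J_i}}$ componentwise, so that $\desc{v}_{J,q.\pbag,-}$ distributes over the fields and each component reduces to the induction hypothesis. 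The bag case is the crux: here $\shinner{v_i}_{J_i}=J_i$, so $\desc{v}_{\cdot,q.\pbag.\cdots,J_i}$ performs an index lookup that walks down $v$ and returns the sub-collection whose governing ghost index is $J_i$; well-indexedness (distinctness of the indexes along that path, together with the fact that static indexes are assigned uniquely per $\retann{}{a}$ during the shredding preprocessing) forces that lookup to return precisely the bag $v_i$ and to return $\nil$ in every other branch.

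With the sub-lemma available, the key lemma follows by path induction. For $p=\cl_i.p'$, the value $v$ is a record $\record{\cl_j=v_j}\iton$, and both $\desc{v}_{J,-,-}$ and $\shlist{v}_{J,-}$ simply project onto $v_i$, so the claim collapses to the induction hypothesis at $p'$. For $p=\pbag.p'$, $v$ is a bag $\bagv{[v_i\ann{J_i}]\iton}$; both $\desc{v}_{J,\pbag.p',I}$ and $\shlist{v}_{J,\pbag.p'}$ are defined as a $\Concat$ over the elements, with each element's ghost index $J_i$ threaded in as the new parent index, and since list-comprehension filtering and the descendant map both distribute over $\Concat$, the claim for $v$ reduces to the union over $i$ of the claims for $v_i$ at path $p'$ — each an instance of the induction hypothesis. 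For the base case $p=\epsilon$, $v$ is a bag $\bagv{[v_i\ann{J_i}]\iton}$ and $\shlist{v}_{J,\epsilon}=[\tuple{J,\shinner{v_i}_{J_i}}\ann{J_i}]\iton$, so every outer index equals $J$: both sides are $\nil$ when $J\neq I$, and when $J=I$ the right-hand side becomes $[\desc{v}_{J,\pbag,\shinner{v_i}_{J_i}}\ann{J_i}]\iton$, which equals $[v_i\ann{J_i}]\iton=\desc{v}_{J,\epsilon,I}$ by the reconstruction sub-lemma.

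The hard part will be the bag case of the reconstruction sub-lemma: it is the only place where one must argue that an index uniquely pins down a sub-collection of $v$, and getting the bookkeeping right — which path, which parent index, and why no sibling subtree can contain the same index — is where the well-indexedness hypothesis and the discipline of static-index assignment really do the work. Once that is in place, the remaining steps are routine unfoldings of the definitions of $\desc{-}$ and $\shlist{-}$ together with the distributivity of $\Concat$ over comprehensions.
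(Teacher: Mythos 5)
Your decomposition is genuinely different from the paper's and, in outline, it works. The paper proves the lemma by a single induction on the type $A$ with a side induction on $p$, strengthening the induction hypothesis to paths of the form $p.\vec{\cl}$ (a trailing vector of record labels) and handling the record case with an annotated $\zip$/$\unzip$ argument; the ``inverse'' property of $\shinner{-}$ is never isolated but is re-derived inside the $\epsilon$ subcase of the bag case. You instead factor out a reconstruction sub-lemma ($\desc{v}_{J,q.\pbag,\shinner{v_i}_{J_i}} = v_i$), proved by induction on the structure of $v_i$, and reduce the main statement to a path induction whose base case invokes it. This is arguably cleaner: the trailing-label generalisation that the paper threads through its entire induction is absorbed into the record case of your sub-lemma, and the delicate index-uniqueness argument is stated once, as a standalone fact.

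However, one step is under-justified, and your closing claim that the sub-lemma's bag case ``is the only place where one must argue that an index uniquely pins down a sub-collection'' is not right. In the $p = \pbag.p'$ case of your path induction, $\desc{v}_{J,\pbag.p'.\pbag,\iv}$ for an $\iv$ drawn from $\shlist{v_i}_{J_i,p'}$ unfolds, by the definition of $\desclist{-}$ on bags, to a $\Concat$ over \emph{all} elements $v_{i'}$, not just $v_i$; distributivity of $\Concat$ alone does not reduce this to $\desc{v_i}_{J_i,p'.\pbag,\iv}$. You must additionally argue that every index occurring in $\shlist{v_i}_{J_i,p'}$ occurs only within the subtree $v_i$, so that the lookups in the sibling subtrees $v_{i'}$ with $i' \neq i$ all return $\nil$ --- this is exactly the step that the paper's derivation labels ``$v$ is well-indexed'' in its $\pbag.p$ subcase. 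The argument is of the same kind as the one in your sub-lemma and well-indexedness supplies it, but it has to be made explicitly there too; as written, your $\pbag.p'$ case does not go through by distributivity alone.
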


\begin{proof}
First we strengthen the induction hypothesis to account for
records. The generalised induction hypothesis is as follows.

If $v$ is well-indexed and $\vdash \desc{v}_{J',p.\vec{\cl},I} :
\bagt{A}$, then
\[
  \begin{array}[t]{@{}l}
  [\desc{v}_{J',p.\pbag.\vec{\cl},\iv}\ann{I_\idown} 
  \mid \tuple{I_\iup,\iv}\ann{I_\idown} \gets \shlist{v}_{J',p.\vec{\cl}}, I_\iup = I] \qquad \\
       \hfill = [v'.\vec{\cl}\ann{I_\idown} 
                \mid {v'}\ann{I_\idown} \gets \desclist{v}_{J',p,I}] \\
  \end{array}
\]

The proof now proceeds by induction on the structure of $A$ and
side-induction on the structure of $p$.

\begin{Cases}

\item[Case $O$:]
~
\begin{Cases}
\item[Subcase $\epsilon$:]

If $J' \neq I$ then both sides are empty lists. Suppose that
$J' = I$.

\begin{derivation}
 & [(\desc{s}_{J',\pbag.\vec{\cl},c}\ann{I_\idown} \\
 & \qquad \mid \tuple{I_\iup,c}\ann{I_\idown}
               \gets \shlist{s}_{J',\vec{\cl}}, I_\iup = I] \\
=& \byy{\text{Definition of }\desc{-}} \\
 & [c\ann{I_\idown}
    \mid \tuple{I_\iup,c}\ann{I_\idown}
         \gets \shlist{s}_{J',\vec{\cl}}, I_\iup = I] \\
=& \byy{J' = I} \\
 & s \\
=& \byy{\text{Definition of }\desclist{-}}\\
 & [v'.\vec{\cl}\ann{I_\idown}
    \mid {v'}\ann{I_\idown} \gets \desclist{s}_{J',\epsilon,I}] \\
\end{derivation}

\item[Subcase $\cl_i.p$:]

\begin{derivation}
 & [(\desc{\record{\ora{\cl=v}}}_{J',\cl_i.p.\pbag.\vec{\cl},c}\ann{I_\idown} \\
 & \qquad \mid \tuple{I_\iup,c\ann{I_\idown}}
         \gets \shlist{\record{\ora{\cl=v}}}_{J',\cl_i.p.\vec{\cl}}, I_\iup = I] \\
=& \byy{\text{Definition of }\desc{-}} \\
 & [c\ann{I_\idown}
    \mid \tuple{I_\iup,c\ann{I_\idown}}
         \gets \shlist{\record{\ora{\cl=v}}}_{J',\cl_i.p.\vec{\cl}}, I_\iup = I] \\
=& \byy{\text{Definition of }\shlist{-}} \\
 & [c\ann{I_\idown}
    \mid \tuple{I_\iup,c\ann{I_\idown}}
         \gets \shlist{v_i}_{J',p.\vec{\cl}}, I_\iup = I] \\
=& \byy{\text{Definition of }\desc{-}} \\
 & [(\desc{v_i}_{J',p.\pbag.\vec{\cl},c}\ann{I_\idown}
    \mid \tuple{I_\iup,c}\ann{I_\idown}
         \gets \shlist{v_i}_{J',p.\vec{\cl}}, I_\iup = I] \\
=& \byy{\text{Induction hypothesis}} \\
 & [v'.\vec{\cl}\ann{I_\idown}
    \mid {v'}\ann{I_\idown} \gets \desclist{v_i}_{J',p,I}] \\
=& \byy{\text{Definition of }\shlist{-}} \\
 & [v'.\vec{\cl}\ann{I_\idown}
    \mid {v'}\ann{I_\idown} \gets \desclist{\record{\ora{\cl=v}}}_{J',\cl_i.p,I}] \\
\end{derivation}

\item[Subcase $\pbag.p$:]

\begin{derivation}
 & [(\desc{\bagv{[v_i\ann{J_i}]\iton}}_{J',\pbag.p.\pbag.\vec{\cl},c})\ann{I_\idown} \\
 &  \qquad \mid \tuple{I_\iup,c}\ann{I_\idown}
         \gets \shlist{[v_i\ann{J_i}]\iton}_{J',\pbag.p.\vec{\cl}}, I_\iup = I] \\
=& \byy{\text{Definitions of }\desc{-}\text{ and }\shlist{-}}\\
 & \Concat(
    [c\ann{I_\idown}
    \mid \tuple{I_\iup,c\ann{I_\idown}}
         \gets \shlist{v_i}_{J_i,p.\vec{\cl}}, I_\iup = I]\iton) \\
=& \byy{\text{Induction hypothesis}} \\
 & \Concat([v'.\vec{\cl}\ann{I_\idown} \mid {v'}\ann{I_\idown} \gets \desclist{v_i}_{J_i,p}]\iton) \\
=& \byy{\text{Definition of }\desclist{-}} \\
 & [v'.\vec{\cl}\ann{I_\idown} \mid {v'}\ann{I_\idown} \gets \desclist{[v\ann{J}]\iton}_{J',p}] \\
\end{derivation}

\end{Cases}

\item[Case $\unit$:] The proof is the same as for base types with the
  constant $c$ replaced by $\unit$.

\item[Case $\record{\ora{\cl:A}}$ where $|\vec{\cl}| \geq q$:]

We rely on the functions $\zip_{\vec{\cl}}$, for transforming a record
of lists of equal length to a list of records, and
$\unzip_{\vec{\cl}}$, for transforming a list of records to a record
of lists of equal length.
In fact we require special versions of $\zip$ and $\unzip$ that handle
annotations, such that $\zip$ takes a record of lists of equal length
whose annotations must be in sync, and $\unzip$ returns such a record.
\begin{equations}
\zip_{\vec{\cl}}\,\record{\cl_i=\nil}\iton &=& \nil \\
\zip_{\vec{\cl}}\,\record{\cl_i=v_i\ann{J} \cons s_i}\iton &=&
  \record{\cl_i=v_i}\iton\ann{J} \cons \zip_{\vec{\cl}}\,\record{\cl_i=s_i}\iton 
\sepskip\\
\unzip_{\vec{\cl}}(s) &=& \record{\cl_i=[v.\cl_i\ann{J} \mid v\ann{J} \gets s]}\iton \\
\end{equations}
If $\vec{\cl}$ is a non-empty list of column labels then
$\zip_{\vec{\cl}}$ is the inverse of $\unzip_{\vec{\cl}}$.
\[
\zip_{\vec{\cl}}(\unzip_{\vec{\cl}}(s)) = s, \qquad \text{if }|\vec{\cl}| \geq 1
\]

\begin{derivation}
 & [\desc{v}_{J',p.\pbag.\vec{\cl'},\iv}\ann{I_\idown} \\
 & \qquad \mid \tuple{I_\iup,\iv}\ann{I_\idown}
               \gets \shlist{v}_{J',p.\vec{\cl'}}, I_\iup = I] \\
=& \byy{\text{Definition of }\desc{-}} \\
 & [\record{\cl_i=\desc{v}_{J',p.\pbag.\vec{\cl'}.\cl_i,\iv}}\iton\ann{I_\idown}\ \\
 & \qquad \mid \tuple{I_\iup,\iv\ann{I_\idown}}
               \gets \shlist{v}_{J',p.\vec{\cl'}.\cl}, I_\iup = I] \\
=& \byy{\text{Definition of }\zip} \\
 & \zip_{\vec{\cl}}\,
    \recordl\cl_i=
       [v_{\cl_i,\iv}\ann{I_\idown}
        \mid \tuple{I_\iup,\iv\ann{I_\idown}}
                    \gets \shlist{v}_{J',p.\vec{\cl'}.\cl_i},\quad~~ \\
 & \hfill I_\iup = I]\recordr\iton \\
 & \qquad\text{where }v_{\cl,\iv} \text{ stands for } \desc{v}_{J',p.\pbag.\vec{\cl'}.\cl,\iv} \\
=& \byy{\text{Induction hypothesis}}\\
 & \zip_{\vec{\cl}}\,
     \record{\cl_i=[v'.\vec{\cl'}.\cl_i\ann{I_\idown}
                       \mid v'\ann{I_\idown} \gets \desclist{v}_{J',p,I}]}\iton \\
=& \byy{\text{Definition of }\zip} \\
 & [v'.\vec{\cl'}\ann{I_\idown} \mid v'\ann{I_\idown} \gets \desclist{v}_{J',p,I}] \\
\end{derivation}

\item[Case $\bagt{\vec{A}}$:]

\begin{Cases}
\item[Subcase $\epsilon$:]

If $J' \neq I$ then both sides of the equation are equivalent to
the empty bag. Suppose $J' = I$.


\begin{derivation}
 & [\desc{[v_i\ann{J_i}]\iton}_{J',\pbag.\vec{\cl},I_\idown}\ann{I_\idown} \\
 & \qquad \mid \tuple{I_\iup,I_\idown}\ann{I_\idown}
               \gets \shlist{[v_i\ann{J_i}]\iton}_{J',\vec{\cl}}, I_\iup = I] \\
=& \byy{\text{Definition of }\desc{-}} \\
 & [\bagv{\Concat([\desclist{v_i}_{J_i,\vec{\cl},I_\idown}]\iton)}\ann{I_\idown} \\
 & \qquad \mid \tuple{I_\iup,I_\idown}\ann{I_\idown}
               \gets \shlist{[v_i\ann{J_i}]\iton}_{J',\vec{\cl}}, I_\iup = I] \\
=& \byy{\shlist{[v_i\ann{J_i}]\iton}_{J',\vec{\cl}} =
        \tuple{J',J_1}\ann{J_1}, \dots, \tuple{J',J_n}\ann{J_n}} \\
 & [\bagv{\Concat([\desclist{v_i}_{J_i,\vec{\cl},I_\idown}]\iton)}\ann{I_\idown}
         \mid I_\idown \gets \vec{J}, J' = I] \\
=& \byy{J' = I} \\
 & [\bagv{\Concat([\desclist{v_i}_{J_i,\vec{\cl},I_\idown}]\iton)}\ann{I_\idown}
         \mid I_\idown \gets \vec{J}] \\
=& \byy{\text{Definition of }\desclist{-}} \\
 & [\bagvl\Concat([\desclist{v_i}_{J_i,\vec{\cl},I_\idown} \\
 & \hfill  \mid {v_i}\ann{J_i} \gets [v_i\ann{J_i}]\iton])\bagvr\ann{I_\idown}
                                   \mid I_\idown \gets \vec{J}] \\
=& \byy{\text{Definition of }\desclist{-}} \\
 & [\bagv{\Concat([\desclist{v_i.\vec{\cl}}_{J_i,\epsilon,I_\idown}
              \mid {v_i}\ann{J_i} \gets [v_i\ann{J_i}]\iton])}\ann{I_\idown}\quad \\
 & \hfill                  \mid I_\idown \gets \vec{J}] \\
=& \byy{\text{Definition of }\desclist{-}} \\
 & [\bagvl\Concat([
   \bl
     v_i.\vec{\cl} \\
     \mid {v_i}\ann{J_i} \gets [v_i\ann{J_i}]\iton, J_i = I_\idown])\bagvr\ann{I_\idown}
   \mid I_\idown \gets \vec{J}] \\
   \el \\
=& \byy{\text{$v$ is well-indexed}} \\
 & [v_i.\vec{\cl}\ann{I_\idown}
   \mid {v_i}\ann{I_\idown} \gets [v_i\ann{J_i}]\iton] \\
=& \byy{\text{Definition of }\desclist{-}\text{ and }J' = I}\\
 & [v_i.\vec{\cl}\ann{I_\idown}
    \mid {v_i}\ann{I_\idown} \gets \desclist{\bagv{[v_i\ann{J_i}]\iton}}_{J',\epsilon,I}]
\end{derivation}

\item[Subcase $\cl_i.p$:]

\begin{derivation}
 & [\desc{\record{\cl_i=v_i}\iton}_{J',\cl_i.p.\pbag.\vec{\cl'},I_\idown}\ann{I_\idown} \\
 & \qquad \mid \tuple{I_\iup,I_\idown}\ann{I_\idown}
              \gets \shlist{\record{\cl_i=v_i}\iton}_{J',\cl_i.p.\vec{\cl'}}, I_\iup = I] \\
=& \byy{\text{Definitions of }\desc{-}\text{ and }\shlist{-}} \\
 & [\desc{v_i}_{J',p.\pbag.\vec{\cl'},I_\idown}\ann{I_\idown} \\
 & \qquad     \mid \tuple{I_\iup,I_\idown}\ann{I_\idown}
              \gets \shlist{v_i}_{J',p.\vec{\cl'}}, I_\iup = I] \\
=& \byy{\text{Induction hypothesis}} \\
 & [v'.\vec{\cl'}\ann{I_\idown}
    \mid {v'}\ann{I_\idown} \gets \desclist{v_i}_{J',p,I}] \\
=& \byy{\text{Definition of}\desclist{-}} \\
 & [v'.\vec{\cl'}\ann{I_\idown}
    \mid {v'}\ann{I_\idown} \gets \desclist{\record{\cl_i=v_i}\iton}_{J',\cl_i.p,I}] \\
\end{derivation}

\item[Subcase $\pbag.p$:]

\begin{derivation}
 & [\desc{[v_i\ann{J_i}]\iton}_{J',\pbag.p.\pbag.\vec{\cl},I_\idown}\ann{I_\idown} \\
 & \qquad \mid \tuple{I_\iup,I_\idown}\ann{I_\idown}
                     \gets \shlist{[v_i\ann{J_i}]\iton}_{I_\iup,\pbag.p.\vec{\cl}}, I_\iup = I] \\
=& \byy{\text{Definitions of }\desc{-}\text{ and }\shlist{-}} \\
 & [(\Concat([\desc{v_i}_{J_i,p.\pbag.\vec{\cl},I_\idown}]))\ann{I_\idown} \\
 & \qquad  \mid \tuple{I_\iup, I_\idown}\ann{I_\idown}
                \gets \Concat([\shlist{v_i}_{J_i,p.\vec{\cl}}]\iton), I_\iup = I] \\
=& \byy{\text{Expanding }\Concat([\shlist{-}]\iton)} \\
 & [(\Concat([\desc{v_i}_{J_i,p.\pbag.\vec{\cl},I_\idown}]))\ann{I_\idown} \\
 & \qquad \mid {v_i}\ann{J_i} \gets [v_i\ann{J_i}]\iton, \\
 & \qquad\quad  \tuple{I_\iup, I_\idown}\ann{I_\idown}
                \gets \shlist{v_i}_{J_i,p.\vec{\cl}}, I_\iup = I] \\
=& \byy{\text{$v$ is well-indexed}} \\
 & [\desc{v_i}_{J_i,p.\pbag.\vec{\cl},I_\idown}\ann{I_\idown} \\
 & \qquad \mid {v_i}\ann{J_i} \gets [v_i\ann{J_i}]\iton, \\
 & \qquad\quad   \tuple{I_\iup, I_\idown}\ann{I_\idown}
                 \gets \shlist{v_i}_{J_i,p.\vec{\cl}}, I_\iup = I] \\

=& \byy{\text{Comprehension $\to$ concatenation}} \\
 & \Concat \\
 & \quad ([\desc{v_i}_{J_i,p.\pbag.\vec{\cl},I_\idown}\ann{I_\idown} \\
 & \quad\qquad     \mid \tuple{I_\iup, I_\idown}\ann{I_\idown}
                     \gets \shlist{v_i}_{J_i,p.\vec{\cl}}, I_\iup = I]\iton) \\

=& \byy{\text{Induction hypothesis}} \\
 & \Concat(
     [v'.\vec{\cl}\ann{I_\idown} \mid {v'}\ann{I_\idown} \gets \desclist{v_i}_{J_i,\pbag.p,I}]\iton) \\
=& \byy{\text{Concatenation $\to$ comprehension}} \\
 & [v'.\vec{\cl}\ann{I_\idown}
    \mid {v'}\ann{I_\idown}
           \gets \desclist{[v_i\ann{J_i}]\iton}_{J',\pbag.p,I}] \\
\end{derivation}
\end{Cases}
\end{Cases}
This completes the proof.
\end{proof}

The proof of this lemma is the only part of the formalisation that
makes use of values being well-indexed.
Stitching shredded results together does not depend on any other
property of indexes, thus any representation of indexes that yields
unique indexes suffices.

\begin{theorem}
\label{th:shred-build-is-desc}
If $s$ is well-indexed and $\vdash \desc{s}_{p,\iv} : A$ then
$\build{\iv}{\shred_{s,p}(A)} = \desc{s}_{p,\iv} $.
\end{theorem}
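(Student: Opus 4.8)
The plan is to prove Theorem~\ref{th:shred-build-is-desc} by structural induction on the type $A$, keeping the global hypothesis that $s$ is well-indexed fixed throughout, and establishing for each $A$ the statement for \emph{all} paths $p$ and \emph{all} inner values $\iv$ with $\vdash \desc{s}_{p,\iv} : A$ simultaneously. In each case the type of $\desc{s}_{p,\iv}$ pins down the shape of $\iv$: a constant when $A$ is a base type, a record of inner values when $A$ is a record type, and an index when $A$ is a bag type (the other shapes would give $\desc{s}_{p,\iv}$ the wrong type).

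For the base case $A = O$ we have $\iv = c$, and both $\build{c}{\shred_{s,p}(O)} = \build{c}{O}$ and $\desc{s}_{p,c}$ unfold directly to $c$. For the record case $A = \record{\cl_i:A_i}\iton$ we have $\iv = \record{\cl_i=\iv_i}\iton$ with $\desc{s}_{p,\iv} = \record{\cl_i=\desc{s}_{p.\cl_i,\iv_i}}\iton$, so inversion on the Record typing rule gives $\vdash \desc{s}_{p.\cl_i,\iv_i}:A_i$. Unfolding $\shred_{s,p}$ on a record type and $\build{-}{-}$ on a record package, the left-hand side becomes $\record{\cl_i=\build{\iv_i}{\shred_{s,p.\cl_i}(A_i)}}\iton$; applying the induction hypothesis to each (structurally smaller) $A_i$ with path $p.\cl_i$ and inner value $\iv_i$ rewrites the $i$-th component to $\desc{s}_{p.\cl_i,\iv_i}$, which finishes this case.

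The bag case $A = \bagt{B}$ is where the argument does its work. Here $\iv$ is an index $I$ and $\desc{s}_{p,I} = \desclist{s}_{\Unit,p,I}$. Unfolding the definitions gives $\shred_{s,p}(\bagt{B}) = \shann{\bagt{\shred_{s,p.\pbag}(B)}}{\shouter{s}_p}$ and hence
\[
  \build{I}{\shred_{s,p}(\bagt{B})}
  \;=\; [\, \build{\iv'}{\shred_{s,p.\pbag}(B)}\ann{J}
        \mid \tuple{I,\iv'}\ann{J} \gets \shouter{s}_p \,].
\]
I would then invoke the key lemma (Lemma~\ref{lem:desc}) with $v := s$, $J := \Unit$ and target index $I$: since $s$ is well-indexed and $\vdash \desclist{s}_{\Unit,p,I} : \bagt{B}$, and recalling $\shouter{s}_p = \shlist{s}_{\Unit,p}$, it yields
\[
  \desc{s}_{p,I}
  \;=\; [\, \desc{s}_{p.\pbag,\iv'}\ann{J}
        \mid \tuple{I,\iv'}\ann{J} \gets \shouter{s}_p \,].
\]
Inversion on the Result typing rule for $\desc{s}_{p,I}:\bagt{B}$ shows $\vdash \desc{s}_{p.\pbag,\iv'}:B$ for every $\iv'$ occurring in this comprehension, so the induction hypothesis applies to the structurally smaller type $B$ with path $p.\pbag$ and gives $\build{\iv'}{\shred_{s,p.\pbag}(B)} = \desc{s}_{p.\pbag,\iv'}$ for each such $\iv'$. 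Substituting this into the first display turns it into the second, so $\build{I}{\shred_{s,p}(\bagt{B})} = \desc{s}_{p,I} = \desc{s}_{p,\iv}$. (Specialising the finished theorem to $p = \epsilon$ and $\iv = \Ind{\top}{1} = \Unit$, and using $\desclist{s}_{\Unit,\epsilon,\Unit} = s$, recovers Theorem~\ref{th:shred-build-is-id}.)

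The main obstacle is not really in this theorem but in the key lemma it rests on: Lemma~\ref{lem:desc} is where well-indexedness is genuinely used, and it is proved by a somewhat delicate induction on the result type with a side induction on paths (under a hypothesis strengthened to track intermediate records). Granting that lemma, the present statement is a short structural induction; the only fiddly points are lining up the comprehension emitted by the stitching function $\build{I}{-}$ with the one delivered by the key lemma, and checking that the typing side-conditions needed to invoke the induction hypothesis in the bag case ($\vdash \desc{s}_{p.\pbag,\iv'}:B$ for the relevant $\iv'$) are in fact available — both of which follow immediately by inversion on the relevant typing rules.
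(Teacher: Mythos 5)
Your proposal is correct and follows essentially the same route as the paper's proof: structural induction on $A$, with the base and record cases unfolding directly and the bag case resting on Lemma~\ref{lem:desc}. The only cosmetic difference is that you invoke the key lemma before applying the induction hypothesis to the comprehension elements, whereas the paper applies the induction hypothesis first and then the lemma; the two orderings yield the same chain of equalities.
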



\begin{proof}
By induction on the structure of $A$.

\begin{Cases}
\item[Case $O$:]

\begin{derivation}
 & \build{c}{\shred_{s,p}(O)} \\
=& \\
 & c \\
=& \\
 & \desc{s}_{p,c} \\
\end{derivation}

\item[Case $\record{\ora{\cl=A}}$:]

\begin{derivation}
 & \build{\record{\cl_i=\iv_i}\iton}{\shred_{s,p}(\record{\cl_i:A}\iton)} \\
=& \\
 & \record{\cl_i=\build{\iv_i}{\shred_{s,p.l}(A)}}\iton \\
=& \byy{\text{Induction hypothesis}} \\
 & \record{\cl_i=\desc{s}_{p.l,\iv_i}}\iton \\
=& \\
 & \desc{s}_{p,\record{\cl_i=\iv_i}\iton} \\
\end{derivation}

\item[Case $\bagt{A}$:]

\begin{derivation}
 & \build{I}{\shred_{\bagv{s},p}(\bagt{A})} \\
=& \\
 & \build{I}{(\shann{\bagv{\shred_{\bagv{s},p.\pbag}(A)}}{\shouter{\bagv{s}}_p})} \\
=& \\
 & \bagvl\bl
           [(\build{n}{\shred_{\bagv{s},p.\pbag}(A))\ann{I_\idown}} \\
           \qquad\mid \tuple{I_\iup,\iv}\ann{I_\idown} \gets \shlist{s}_p, I_\iup = I]\bagvr \\
         \el \\
=& \byy{\text{Induction hypothesis}} \\
 & \bagv{[\desc{\bagv{s}}_{p.\pbag,\iv}\ann{I_\idown}
         \mid \tuple{I_\iup,\iv}\ann{I_\idown} \gets \shlist{s}_p, I_\iup = I]} \\
=& \byy{\text{Lemma~\ref{lem:desc}}} \\
 & \desc{\bagv{s}}_{p,I} \\
\end{derivation}
\end{Cases}

This completes the proof.
\end{proof}


\begin{proof}[\proofheader of Theorem~\ref{th:shred-build-is-id}]
By Theorem~\ref{th:shred-build-is-desc}, setting $p = \epsilon$ and $w
= \Unit$.
\end{proof}

We now obtain the main result.


\begin{proof}[\proofheader of Theorem~\ref{th:shred-stitch}]
Recall the statement of  Theorem~\ref{th:shred-stitch}: we need to show that $\vdash L : \bagt{A}$ then:
\[ \buildtop{\psem{L}_{\bagt{A}}} =
\buildtop{\shred_{\asem{L}}(\bagt{A})} = 
 \asem{L}
\]
The first equation follows immediately from
Theorem~\ref{th:sem-shred-commute}, and the second from
Lemma~\ref{lem:uniqueness} and Theorem~\ref{th:shred-build-is-id}.

Furthermore, applying $\erase$ to both sides we have:
\[
 \erase(\buildtop{\psem{L}_{\bagt{A}}}) = \erase(\asem{L}) = \nsem{L}
\]
where the second step follows by Theorem~\ref{th:ann-semantics-erases}.
\end{proof}


\newpage

\section{Record flattening}
\label{app:record-flattening}

\paragraph*{Flat types}

For simplicity we extend base types to include the unit type
$\record{}$. This allows us to define an entirely syntax-directed
\emph{unflattening} translation from flat record values to nested
record values.
\begin{syntax}
\textrm{Types}       & A, B &\cce& \bagt{\record{\ora{\cl:O}}} \\
\textrm{Base types}  & O    &\cce& \Int \mid \Bool \mid \String \mid \record{} \\
\end{syntax}%
\begin{tronly}
  The Links implementation diverges slightly from the presentation
  here. Rather than treating the unit type as a base type, it relies
  on type information to construct units when unflattening values.
\end{tronly}
\paragraph*{Flat terms}

\begin{syntax}
\text{Query terms}         & L, M &\cce& \bigunion\, \vec{C} 
\sepskip\\
\text{Comprehensions}      & C    &\cce& \Let~q = S\,\In\,S' \\
\text{Subqueries}          & S    &\cce& \For\,(\vec{G}\,\Where\,X)\,\ret{R} \\
\text{Data sources}        & u    &\cce& t \mid q \\
\text{Generators}          & G    &\cce& x \gets u \\
\text{Inner terms}         & N    &\cce& X \mid \cindex \\
\text{Record terms}        & R    &\cce& \record{\ora{\cl=N}} \\
\text{Base terms}          & X    &\cce& x.\cl \mid c(\vec{X}) \mid \isEmpty{L} \\
\end{syntax}

\paragraph*{Flattening types}

\newcommand{\flatten}[1]{{#1}^\succ} 
\newcommand{\unflatten}[1]{{#1}^\prec}

\newcommand{\flattenProj}[1]{{#1}^\dagger}

\newcommand{\const}{\bullet}

The record flattening function $\flatten{(-)}$ flattens record types.
\begin{equations}
\flatten{(\bagt{A})}           &=& \bagt{(\flatten{A})} 
\sepskip\\
\flatten{O}                    &=& \record{ \const:O } \\
\flatten{\record{}}            &=& \record{ \const:\record{} } \\
\flatten{\record{\ora{\cl:F}}} &=&
  \record{[(\consl{\cl_i}\cl'):O
  \mid i \gets \dom(\vec{\cl}), (\cl':O) \gets \flatten{F_i} ]} \\
\end{equations}%
Labels in nested records are concatenated with the labels of their
ancestors. Base and unit types are lifted to 1-ary records (with a
special $\const$ field) for uniformity and to aid with reconstruction
of nested values from flattened values.

\paragraph*{Flattening terms}

The record flattening function $\flatten{(-)}$ is defined on terms as
well as types.
\begin{equations}
\flatten{(\bigunion\iton\,C_i)}    &=& \bigunion\iton\,\flatten{(C_i)}
\sepskip\\
\flatten{(\Let~q = S_\iup~\In\,S_\idown)} &=&
  \Let~q = \flatten{S_\iup}~\In\,\flatten{S_\idown} \\
\flatten{(\For\,(\vec{G}\,\Where\,X)~\ret{N})} &=&
  \For\,(\vec{G}\,\Where\,\flattenProj{X})~\ret{\flatten{N}} \\
\end{equations}%
\begin{equations}
\flatten{X}                           &=& \record{\const = \flattenProj{X}} \\
\flatten{\record{}}                   &=& \record{\const = \record{}} \\
\flatten{(\record{\cl_i=N_i}_{i=1}^m)} &=& 
  \record{\consl{\cl_i}\cl'_j=X_j}_{(i=1,j=1)}^{(m,n_i)}, \\
&&\quad \text{where } \flatten{N_i} = \record{\cl'_j=X_j}_{j=1}^{n_i} \\
\end{equations}%
\begin{equations}
\flattenProj{(x.\cl_1. \cdots .\cl_n)} &=&
  x.\cl_1\delim \dots \delim\cl_n\delim\const \\
\flattenProj{c(X_1, \dots, X_n)} &=&
  c(\flattenProj{(X_1)}, \dots, \flattenProj{(X_n)}) \\
\flattenProj{(\isEmpty{L})} &=&
  \isEmpty{\flatten{L}} \\
\end{equations}%
The auxiliary $\flattenProj{(-)}$ function flattens $n$-ary
projections.

\paragraph*{Type soundness}

\[
\begin{array}{rcl}
{\vdash L : A} &\mathbin{\Rightarrow}& {\vdash \flatten{L} : \flatten{A}} \\ 
\end{array}
\]





\paragraph*{Unflattening record values}

\begin{equations}
\unflatten{[r_1, \dots, r_n]} &=& [\unflatten{(r_1)}, \dots, \unflatten{(r_n)}] \\
\sepskip\\
\unflatten{\record{\const=c}} &=& c \\
\unflatten{\record{\const=\record{}}} &=& \record{} \\
\unflatten{(\record{\consl{\cl_i}\cl'_j=c_j}_{(i=1,j=1)}^{(m,n_i)})} &=&
  \record{\cl_i=\unflatten{(\record{\cl'_j=c_j}_{j=1}^{n_i})}}_{i=1}^m \\
\end{equations}%

\paragraph*{Type soundness}

\[
\begin{array}{rcl}
{\vdash L : A} &\mathbin{\Rightarrow}& {\vdash \unflatten{L} : \unflatten{A}} \\ 
\end{array}
\]

\paragraph*{Correctness}

\begin{proposition}
If $L$ is a let-inserted query and $\vdash L : \bagt{A}$, then
\[
\unflatten{\bigl(\lsem{\flatten{L}}\bigr)} =
\lsem{L}
\]
\end{proposition}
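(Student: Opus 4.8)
The plan is to strengthen the claim into a statement that record flattening \emph{commutes} with evaluation, and then prove that by induction on term structure. First I would extend the flattening operation $\flatten{(-)}$ --- currently defined only on types and terms --- to act on \emph{values}, mirroring its action on types: a constant $c:O$ flattens to $\record{\const = c}$, the unit value to $\record{\const = \record{}}$, a record value flattens by recursively flattening each field and prefixing its labels with that field's own label, and a list flattens pointwise. I would lift this to environments by $\flatten{\rho}(x) = \flatten{(\rho(x))}$, and take the flat-language interpretation of each table $t$ to be $\flatten{(\tsem{t})}$. A routine induction on flat types then yields the first lemma: $\unflatten{(\flatten{v})} = v$ for every $v$ with $\vdash v : A$, and the same holds pointwise on lists; hence $\unflatten{(-)}$ is a left inverse of $\flatten{(-)}$ on well-typed values.

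The core of the argument is a second lemma, proved by mutual induction on the syntax of let-inserted terms (carrying the typing context so that every subterm is well-typed): for query terms $L$ one has $\lsem{\flatten{L}}_{\flatten{\rho}} = \flatten{(\lsem{L}_{\rho})}$; for inner terms $N$ one has $\lsem{\flatten{N}}_{\flatten{\rho}, i} = \flatten{(\lsem{N}_{\rho,i})}$; and for base terms $X$ the auxiliary projection-flattening satisfies $\nsem{\flattenProj{X}}_{\flatten{\rho}} = \nsem{X}_{\rho}$. The base-term case is where the $\const$-wrapping is discharged: an $n$-ary projection $x.\cl_1. \cdots .\cl_n$ becomes $x.\cl_1\delim\cdots\delim\cl_n\delim\const$, and evaluating this against the flattened record bound to $x$ picks out exactly the concatenated-label component, which by construction is the $1$-ary record wrapping the original base value, so the trailing $\const$ projection recovers it; constants propagate homomorphically; and $\isEmpty{\flatten{L}}$ agrees with $\isEmpty{L}$ since flattening preserves list length (immediate from the query-term case). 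The record-construction case is precisely the definition of $\flatten{(-)}$ on record terms read against its definition on record values, plus the induction hypotheses. The $\For$ and $\Let$ cases follow by unwinding the two semantics side by side: for $\Let$ the subquery result is installed into the environment, and because we flatten environments this is consistent with the induction hypothesis; for $\For$ the generator values --- whether drawn from a table $t$ (where $\tsem{t}$ is flattened by assumption) or a let-bound $q$ (where $\flatten{\rho}$ already holds the flattened value) --- are supplied to the body under the flattened environment, so the induction hypothesis for the body applies, and the $\enum$-based index plumbing is unchanged because flattening neither adds nor removes list elements, so $i$ is the same on both sides.

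Finally, instantiating the second lemma at the top level with $\rho = \varepsilon$ and then applying the first lemma pointwise to the resulting list gives $\unflatten{\bigl(\lsem{\flatten{L}}\bigr)} = \unflatten{\bigl(\flatten{(\lsem{L})}\bigr)} = \lsem{L}$, which is the claim.

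I expect the main obstacle to be organisational rather than conceptual: defining flattening on values so that it exactly tracks the label-concatenation and $\const$-wrapping used on types and terms, and phrasing the induction hypotheses so that the interaction between $\Let$-binding and environment flattening, and the handling of tables, go through without special-casing. The index parameter $i$ threaded for $\cindex$ is a minor nuisance that must be carried unchanged through the inner-term statement, but $\cindex$ carries no record structure so flattening leaves it alone. None of this requires new ideas, only care.
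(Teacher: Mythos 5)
Your plan is correct: the paper states this proposition without proof (the record-flattening phase is explicitly described as ``standard'' and only the definitions appear in the appendix), and the argument you give --- extend $\flatten{(-)}$ to values and environments, prove the commutation lemma $\lsem{\flatten{L}}_{\flatten{\rho}} = \flatten{(\lsem{L}_{\rho})}$ by mutual induction on query/inner/base terms, and compose with the left-inverse property $\unflatten{(\flatten{v})} = v$ --- is exactly the standard proof the authors are alluding to. The care points you flag (interpreting table rows as flattened so that the $\cl\delim\const$ projections are defined, and threading $\cindex$ and the index parameter unchanged) are the right ones, and none of them breaks the induction.
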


\end{document}




%

for^a (x <- t) [for^b (y <- t) [for^c (z <- t) [z.l]]]

[[t]] = [<l=1>,<l=2>]

[[[1 @ c1, 2 @ c2] @ b1,
  [1 @ c3, 2 @ c4] @ b2] @ a1,
 [[1 @ c5, 2 @ c6] @ b3,
  [1 @ c7, 2 @ c8] @ b4] @ a2]

If we use lexicographic indexes then we should be able to
do things functionally!

[[[1 @ c1.1.1, 2 @ c1.1.2] @ b1.1,
  [1 @ c1.2.1, 2 @ c1.2.2] @ b1.2] @ a1,
 [[1 @ c2.1.1, 2 @ c2.1.2] @ b2.1,
  [1 @ c2.2.1, 2 @ c2.2.2] @ b2.2] @ a2]

Why natural keys aren't good enough:

  Consider [[t]]  = [<l=1>,<l=1>]
  Consider [[t']] = [<m=1>,<m=2>]

and the query:

  for^a (x <- t) [for^b (y <- t) [<x.l, y.l>]]

Running directly yields:

 [[<l=1, m=1>, <l=1, m=2>], [<l=1, m=1>, <l=1, m=2>]]

Shredding yields:

 [[<l:Int, m:Int>] <| for^a (x <- t) [<<\top, 1>, <a, x.l>>]]
                   <| for^a (x <- t)
                        for^b (y <- t')
                          [<<a, x.l>, <l=x.l, m=y.l>>]

Running the shredded queries with natural indexes yields:

 [[<l:Int, m:Int>] <| [<<\top, 1>, <a, 1>>, <<\top, 1>, <a, 1>>]]
                   <| [<<a, 1>, <l=1, m=1>>, <<a, 1>, <l=1, m=2>>,
                       <<a, 1>, <l=1, m=1>>, <<a, 1>, <l=1, m=2>>]

Stitching the results yields:

 [[<l=1, m=1>, <l=1, m=2>, <l=1, m=1>, <l=1, m=2>],
  [<l=1, m=1>, <l=1, m=2>, <l=1, m=1>, <l=1, m=2>]]

Whereas running the shredded queries with surrogate indexes yields:

 [[<l:Int, m:Int>] <| [<<\top, 1>, <a, 1>>, <<\top, 1>, <a, 2>>]]
                   <| [<<a, 1>, <l=1, m=1>>, <<a, 1>, <l=1, m=2>>,
                       <<a, 2>, <l=1, m=1>>, <<a, 2>, <l=1, m=2>>]

And joining the results yields:

 [[<l=1, m=1>, <l=1, m=2>], [<l=1, m=1>, <l=1, m=2>]]

Issues:

 - Should say something more about monotonicity.

 - Might mention the subformula property somewhere.

 - Don't explicitly mention that an omitted where clause equates to a
 where clause of true.

 - Might be clearer to introduce shredding without union first (we can
 then start off without static indexes).

 - Watch out for overloading of meta variables $i, n$.
